\documentclass[a4paper,10pt,noparskip]{article}
\usepackage[a4paper,biblatex]{style}
\usepackage{xurl}
\newcommand{\TV}{d_{\mathrm{TV}}}
\newcommand{\fall}[2]{#1^{\downarrow #2}}

\begin{document}
\pagenumbering{roman}

\title{Optimal spectrum estimation}

\author{
Ainesh Bakshi\footnote{NYU. \texttt{ainesh@nyu.edu}} 
\and
Apoorv Vikram Singh\footnote{NYU. \texttt{apoorv.singh@nyu.edu}}
\and
Xinyu Tan\footnote{MIT. \texttt{norahtan@mit.edu}}
}
\date{}

\maketitle
\thispagestyle{empty}

\begin{abstract}
We prove that the spectrum of an unknown $d$-dimensional quantum state can be
estimated to error $\varepsilon$ in total variation distance using
\[
 O\!\left(d^2\min\left\{
 \frac{1}{(\varepsilon\log d)^4},\;
 \frac{1}{(\varepsilon\log d)^2}
 \right\}\right)
\]
copies. This matches the recent lower bound of Wang~\cite{wang2026nearlytightlowerbounds}. 
When restricted to unentangled measurements, we give an algorithm with an additional factor of $d$ in copy complexity, 
which we conjecture to be optimal. 

We develop a framework for recovering the small eigenvalues of a quantum state by matching Chebyshev moments.
We bound the variance of each Chebyshev moment estimate in terms of scalar derivatives of the corresponding polynomial, using classical and quantum Efron--Stein decompositions. 
Different rescalings of the Chebyshev polynomials balance approximation error and variance, yielding two regimes in our copy complexity bound.
\end{abstract}

\clearpage
\microtypesetup{protrusion=false}
\tableofcontents
\thispagestyle{empty}
\microtypesetup{protrusion=true}
\clearpage
\pagenumbering{arabic}
\section{Introduction}

The spectrum of a quantum state captures all the information that is invariant under a
change of basis. In particular, it provides access to fundamental quantities such as purity,
von Neumann entropy, and, more generally, R\'enyi entropies, making it an
indispensable tool for studying quantum states and processes~\cite{HM02}. In many-body physics, the spectrum of a reduced
density matrix is a basic tool for quantifying entanglement: for a bipartite pure
state, its eigenvalues are the squared Schmidt coefficients and determine the
entanglement entropy~\cite{BHACRG18}. In quantum
chemistry, the entropies of reduced density matrices are used to characterize
electron correlation and understand the formation and breaking of chemical
bonds~\cite{BTBLR13}. Understanding how to estimate these spectra, therefore,
is a fundamental question that cuts across quantum information, many-body
physics, and quantum chemistry.

In the spectrum estimation problem, we are given copies of an unknown density
matrix
$\rho\in\mathbb{C}^{d\times d}$ with eigenvalues
$\alpha_1\geq\cdots\geq\alpha_d$. The goal is to output a sorted probability vector
$\widehat{\bm\alpha}$ whose total variation distance from $\alpha$ is at most
$\varepsilon$, using the minimum number of copies. Full state tomography
provides an immediate way to solve this problem and the number of copies needed for tomography
depends on the measurements available. With \emph{entangled measurements},
which act jointly on all copies, the optimal copy complexity is
$\Theta(d^2/\varepsilon^2)$~\cite{OW16, HHJWY17}. With \emph{unentangled
measurements}, which act on one copy at a time and may be chosen adaptively,
the optimal complexity is $\Theta(d^3/\varepsilon^2)$~\cite{CHLLS23}.
The key challenge in spectrum estimation is to determine how much of the cost
of learning the eigenbasis can be avoided. If the eigenbasis were known,
measurements in that basis would reduce the problem to classical
\emph{sorted distribution estimation}. For constant accuracy, this requires only $\Theta(d/\log d)$ samples~\cite{VV17,HJW18}.

In his 2016 thesis, Wright~\cite[Section~10.2]{Wright16} conjectured that
spectrum estimation requires $\Omega(d^2/\log d)$ copies at constant accuracy,
allowing at most a logarithmic improvement over full state tomography.
Recent work has demonstrated that spectrum estimation is indeed easier than
full state tomography. We summarize the known upper and lower bounds in
\cref{tab:prior-spectrum-bounds}.

\begin{table}[ht]
\centering
\small
\setlength{\tabcolsep}{6pt}
\setlength{\bigstrutjot}{30pt}
\begin{tabular}{|>{\centering\arraybackslash}p{0.14\linewidth}|>{\centering\arraybackslash}p{0.34\linewidth}|p{\dimexpr0.52\linewidth-6\tabcolsep-4\arrayrulewidth\relax}|}
\hline
\rule{0pt}{13pt} & \textbf{Upper bounds} & \multicolumn{1}{c|}{\textbf{Lower bounds}} \\[3pt]
\hline
\parbox[c][40pt][c]{\linewidth}{\centering Unentangled}
& \parbox[c][40pt][c]{\linewidth}{\centering
  $\displaystyle O\!\left(\frac{d^3}{\varepsilon^6}
    \left(\frac{\log\log d}{\log d}\right)^4\right)$~\cite{PTTW26}}
& \multirow[t]{2}[2]{=}[10pt]{
  $\Omega(d^{2-\gamma})$~\cite{FOW26}\newline
  Constant accuracy; any fixed $\gamma>0$.
  \par\medskip
  $\Omega(d^2/(\log d)^c)$~\cite{LJ26}\newline
  Constant accuracy; some fixed constant $c$.
  \par\medskip
  $\Omega\!\left(d^2\min\{(\varepsilon\log d)^{-4},(\varepsilon\log d)^{-2}\}\right)$~\cite{wang2026nearlytightlowerbounds}
  } \\
\cline{1-2}
\parbox[c][36pt][c]{\linewidth}{\centering Entangled}
& \parbox[c][36pt][c]{\linewidth}{\centering
  $\displaystyle O\!\left(\frac{d^2}{\varepsilon^4}
    \left(\frac{\log\log d}{\log d}\right)^2\right)$~\cite{PSTW26}}
& \\
\hline
\end{tabular}
\caption{Prior works on the copy complexity of spectrum estimation.
All three lower bounds hold for any entangled measurements and hence apply to the weaker model of unentangled measurements.}
\label{tab:prior-spectrum-bounds}
\end{table}

Even for constant $\varepsilon$, the optimal dependence of the copy complexity
on $d$ remains unknown. More generally, the existing bounds do not determine
how the optimal rate changes as $\varepsilon$ decreases, or whether distinct
accuracy regimes exhibit different scaling. Pelecanos, Spilecki, Tang, and
Wright~\cite[Section~1.3]{PSTW26} conjectured that the optimal copy complexity
with entangled measurements is $\Theta(d^2/(\varepsilon^2\log^2 d))$, which leaves open the possibility of an improved lower bound in the large $\varepsilon$ regime. Therefore, the central question we address in this work is as follows:

\begin{quote}
\begin{center}
\emph{What are the optimal rates for learning
the spectrum of a quantum state?}
\end{center}
\end{quote}

\subsection{Our results}

We determine the optimal copy complexity of spectrum estimation
with entangled measurements. Formally, 

\begin{theorem}[Entangled spectrum estimation]
\label{thm:entangled}
Given $0<\varepsilon<1$ and an unknown quantum state $\rho\in\mathbb{C}^{d\times d}$, there is a quantum algorithm that outputs a sorted probability vector $\widehat{\bm\alpha}$ satisfying $\TV(\widehat{\bm\alpha}, \, \spec(\rho))\leq\varepsilon$ with probability at least $0.99$ using
\[
 O\!\left(d^2\min\left\{
 \frac{1}{(\varepsilon\log d)^4},\;
 \frac{1}{(\varepsilon\log d)^2}
 \right\}\right)
\]
copies of $\rho$.
\end{theorem}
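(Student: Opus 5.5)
We would prove \cref{thm:entangled} by splitting the spectrum into large and small eigenvalues and handling the small ones by Chebyshev moment matching, as outlined in the abstract. Set a threshold $\tau$ of order $1/d$ up to polylogarithmic factors. Eigenvalues above $\tau$ number at most $1/\tau$, and we would estimate them by weak Schur sampling (the Keyl/empirical Young diagram estimator). Standard bounds on the sorted Young diagram give additive accuracy roughly $\sqrt{\alpha_i/n}$ per eigenvalue. For eigenvalues above $\tau$ this accuracy is fine once $n\gtrsim d^2/\mathrm{polylog}$, so these eigenvalues contribute at most $\varepsilon/2$ to the total variation error. The difficulty is the bulk of $\Theta(d)$ eigenvalues of size about $1/d$. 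For those, per-eigenvalue accuracy $\varepsilon/d$ would cost $d^2/\varepsilon^2$ copies without a logarithmic saving, so we instead recover the bulk only in aggregate, through moments.

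For the small part, fix a degree $k\approx c\log d$ and a rescaled interval $[0,\lambda]$ containing the small eigenvalues. Consider the Chebyshev polynomials $T_j$ composed with the affine map sending $[0,\lambda]$ to $[-1,1]$. We would estimate the trace moments $\sum_i T_j(\alpha_i)$ restricted to the small part; in practice we estimate $\mathrm{tr}\,p(\rho)$ for polynomials $p$ expanded in the basis of power sums. Each $\mathrm{tr}\,\rho^m$ has an unbiased estimator from $n$ copies: average the permutation observable over $m$-subsets, i.e.\ a U-statistic on $\rho^{\otimes n}$.

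The key step is the variance bound. Using the quantum Efron--Stein decomposition of this symmetric U-statistic, we would bound $\mathrm{Var}[\widehat{\mathrm{tr}\,p(\rho)}]$ by a sum over levels $\ell$ of $\binom{n}{\ell}^{-1}$ times a quantity controlled by the $\ell$-th scalar derivative of $p$ on the spectrum, of the form $\sum_i \alpha_i\,\bigl(p'(\alpha_i)\bigr)^2/n$ at the first level and analogous terms at higher levels. Markov's inequality for Chebyshev polynomials bounds the derivatives by $k^{2\ell}/\lambda^{\ell}$. Substituting $k\sim\log d$ and $\lambda\sim 1/d$ gives a standard deviation for each moment of about $\sqrt{d}\cdot k^2/(n\lambda)$ at the dominant level. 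We need this to be at most $\varepsilon\cdot d/k$, the accuracy required by a moment-matching stability lemma.

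We then solve a linear program for a sorted vector $\widehat{\bm\alpha}$ whose Chebyshev moments match the estimates. By Jackson-type approximation of Lipschitz test functions with degree-$k$ polynomials, the Wasserstein-1 error on the scale $\lambda$ is $O(\lambda/k)$ plus moment errors weighted by the Chebyshev coefficients. Converted to total variation over $d$ eigenvalues of size about $1/d$, this error is $O(1/k)+$noise. This forces $k\gtrsim 1/\varepsilon$ when $\varepsilon<1/\log d$, and $k\sim\log d$ otherwise.

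The two regimes come from the choice of rescaling. When $\varepsilon\log d\ge1$, we use low degree on a stretched interval, and the variance scaling gives $n\sim d^2/(\varepsilon\log d)^4$. Otherwise the degree is capped by $\log d$, since variance grows like $e^{O(k)}$ beyond that and limits the usable degree. In that case we take a finer interval, and the error is dominated by the first Efron--Stein level, giving $n\sim d^2/(\varepsilon\log d)^2$. We would then take the minimum of the two, which matches the stated bound. The main obstacle is the higher-order Efron--Stein terms: proving that the $\ell$-th level contributes at most $(Ck^2/(n\lambda))^{\ell}$ times a benign factor, uniformly for $\ell\le k$, without losing $k!$ factors. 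Our first attempt would be a generating-function bound, obtained by writing the level-$\ell$ term as an $\ell$-th divided difference of $p$ over the eigenvalues and invoking A.~Markov's inequality.
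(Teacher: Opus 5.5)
Your proposal has two genuine gaps.

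\textbf{First, you never isolate the small eigenvalues.} You say the Chebyshev moments are ``restricted to the small part,'' but the only estimator you describe is a U-statistic for $\tr p(\rho)$ on the full state. Take $p(x)=T_k(2x/\lambda-1)$ with $\lambda\approx\mathrm{polylog}(d)/d$. An eigenvalue $\alpha$ of constant size then contributes about $(4\alpha/\lambda)^k/2=d^{\Theta(k)}$, and $p'(\alpha)$ is at least as large. So the first Efron--Stein level $\sum_i\alpha_ip'(\alpha_i)^2/n$ is dominated by the large eigenvalues. Subtracting their contribution from separate estimates amplifies any error in those estimates by $|p'(\alpha)|$. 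The paper handles this with the bucketing algorithm of \cite{PSTW26}. Using $O(d/(B\varepsilon^2))$ copies, it returns estimates of the large eigenvalues \emph{and} a projector $\Pi$. The small-bucket state $\sigma=\overline\Pi\rho\overline\Pi$ satisfies $0\preceq\sigma\preceq1.1B\,I$, with alignment error $O(\varepsilon)$. On fresh copies the paper then measures $\{\Pi,\overline\Pi\}$ and applies weak Schur sampling only to the retained copies, so the moment observables have mean $\tr(\sigma^m)$. An explicit non-adaptive, WSS-only procedure achieving the rate is listed as open in the paper.

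\textbf{Second, your parameters do not give the stated rates, even granting a bucketing step at threshold $\approx\lambda$ (cost $O(d/(\lambda\varepsilon^2))$).} Your accounting is also inconsistent: you say $k\gtrsim1/\varepsilon$ is forced when $\varepsilon<1/\log d$, but then cap $k$ at $\log d$. With $\lambda\approx1/d$, the bucketing cost alone is $d^2/\varepsilon^2$. With a uniform Jackson error $\lambda/k$ per eigenvalue, the requirement $d\lambda/k\le\varepsilon$ forces cost $\gtrsim d^2/(\varepsilon^3k)$. With $k\lesssim\log d$, this leaves you at least a factor $\log^2 d$ short in both regimes. The cap itself is real if you use the iterated A.~Markov bound $k^{2r}/\lambda^r$. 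In the paper's full-interval scaling $nL^2\asymp K^2$, the level-$r$ terms are then about $(K^4/(nL^2))^r/(r!)^2$, so the variance grows like $e^{O(K)}$. The paper removes the cap with three sharper inputs:
(i) An endpoint-refined Jackson bound $|f-P_f|(x)\lesssim\sqrt{Lx}/K$. Summed via Cauchy--Schwarz and $\sum_i\alpha_i\le1$, it gives $\sqrt{dL}/K$, which allows $L=b\varepsilon^2K^2/d$.
(ii) The exact value $T_k^{(r)}(1)/r!=\frac{2^r}{(2r)!}\prod_{j<r}(k^2-j^2)$. Its $(2r)!$ makes the variance only $\exp(O((K^4/(nL^2))^{1/4}))=\exp(O(\sqrt K))$. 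Hence $K=\lceil\log^2d\rceil$ is affordable and gives $d^2/(\varepsilon\log d)^4$.
(iii) For $\varepsilon\log d\le1$, placing $[0,L]$ in the interior $[-1/2,1/2]$ of the Chebyshev interval. There $\|\psi_k^{(r)}\|_{\infty,[0,L]}\le(2k/L)^r$, which keeps each variance within a $d^{1/4}$ factor of its first-level term. This holds even with degree $K=\lceil\log^2d/\varepsilon\rceil$ at fixed $L=\log^2d/d$, giving $d^2/(\varepsilon\log d)^2$.

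The paper's stability step is also quadratic in the moment errors: $\TV(\alpha,z)\lesssim$ (approximation term) $+\,L\sum_k\Delta_k(z)^2/k^2$. This comes from $\sum_kk^2a_k^2\lesssim L\,\TV(\alpha,z)$ for the optimal test function. Your proposed mechanism for the level-$r$ bound, divided differences plus Markov, matches the paper's. There the coefficients equal $p^{(r)}(\xi)$ by Rolle's theorem, and permutation counting gives $\sum_{\pi\in S_r}\tr(\sigma^{\otimes r}R_\pi)\le\prod_{j<r}(1+jL)$.
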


\begin{remark}Our algorithm improves on the copy complexity of~\cite{PSTW26} throughout
the full range of accuracy parameters.
In particular, at constant accuracy, our algorithm uses $O(d^2/\log^4 d)$
copies. This refutes their conjectured rate of $\Theta(d^2/(\varepsilon^2\log^2 d))$. 
Moreover, our bound matches the lower bound of
Wang~\cite[Theorem~5.15]{wang2026nearlytightlowerbounds} up to constant factors. 
\end{remark}

If the unknown quantum state $\rho$ is promised to have rank at most $r$, then $d$ can be replaced by $r$ in the copy complexity bound in \Cref{thm:entangled}. This is achieved by first applying the random dimension reduction described in Lowe and Tan \cite{LT26} to $\rho^{\otimes n}$, which returns $n$ copies of an $r$-dimensional random state which has the same nonzero eigenvalues as $\rho$: $(\bU \diag(\alpha_1, \ldots, \alpha_r)\, \bU^\dagger)^{\otimes n}$, where $\bU$ is a Haar random $r \times r$ unitary. We then apply the same spectrum estimation algorithm but on these $r$-dimensional copies. 
Wang's lower bound also applies to quantum states supported on a fixed $r$-dimensional subspace. Therefore we also determine the optimal copy complexity under this rank-$r$ constraint. 

We also obtain an analogous guarantee for unentangled measurements,
with an additional factor of $d$ in the number of copies. Formally,

\begin{theorem}[Unentangled spectrum estimation]
\label{thm:unentangled}
Given $0<\varepsilon<1$ and an unknown quantum state $\rho\in\mathbb{C}^{d\times d}$, there is a quantum algorithm that outputs a sorted probability vector $\widehat{\bm\alpha}$ satisfying $\TV(\widehat{\bm\alpha}, \, \spec(\rho))\leq\varepsilon$ with probability at least $0.99$ using
\[
 O\!\left(d^3\min\left\{
 \frac{1}{(\varepsilon\log d)^4},\;
 \frac{1}{(\varepsilon\log d)^2}
 \right\}\right)
\]
copies of $\rho$ and unentangled measurements only. 
\end{theorem}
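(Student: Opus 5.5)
Our plan is to follow the same moment-matching template as for \Cref{thm:entangled}, swapping only the moment estimators. The reduction from Chebyshev moments to TV recovery of the spectrum is purely classical: it uses the approximation error of rescaled Chebyshev polynomials plus a linear-programming step. That part will be reused verbatim. What changes is the estimator for each moment $\tr(p(\rho))$ and its variance.

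\textbf{Step 1: unentangled estimator for power sums.} Write each rescaled Chebyshev polynomial $p$ of degree $k$ in the monomial basis, $p(x)=\sum_{j\le k} c_j x^j$. For each $j$, estimate $\tr(\rho^j)$ from unentangled measurements as follows. Measure each copy in a Haar-random basis (classical shadows with random unitaries, or a single shared random basis per batch of copies). This yields unbiased single-copy estimators $\widehat\rho_i$. Then form the U-statistic
\[
\frac{1}{\fall{n}{j}}\sum_{i_1,\dots,i_j \text{ distinct}} \tr\bigl(\widehat\rho_{i_1}\cdots\widehat\rho_{i_j}\bigr),
\]
which is unbiased by independence of the copies. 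Summing with the coefficients $c_j$ gives an unbiased estimator of $\tr(p(\rho))$.

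\textbf{Step 2: variance bound.} Apply the classical Efron--Stein/Hoeffding decomposition over the $n$ independent single-copy outcomes. The variance splits into contributions from the order-$m$ projections. The order-$m$ term is governed by the $m$-th derivative structure of $p$ evaluated at $\rho$, scaled by the per-copy variance. A random-basis snapshot has operator-norm fluctuation of order $d$ per copy, where the entangled (weak Schur sampling) estimator had order $1$. We expect this to inflate each Efron--Stein level by exactly one factor of $d$ relative to the entangled bound. Concretely, the target is to show
\[
\mathrm{Var}\le\sum_{m\ge1}\frac{(Cd)^m}{n^m}\,\bigl\|\text{$m$-th scalar derivative term of }p\bigr\|^2,
\]
that is, the entangled bound with $d^2$ replaced by $d^3$ in the effective sample-size threshold. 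Substituting $n\gtrsim d^3\min\{(\varepsilon\log d)^{-4},(\varepsilon\log d)^{-2}\}$ would then make the variance match what the entangled analysis needs, in both rescaling regimes.

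\textbf{Step 3: conclude.} Combine the variance bound with Chebyshev's inequality and a union bound over the $O(\log d)$-many moments; a median trick may be needed. Then invoke the moment-matching-to-TV argument used for \Cref{thm:entangled}.

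The main obstacle is Step~2. Expanding monomials loses the cancellations of the Chebyshev basis, since the coefficients $c_j$ are exponentially large in $k$. So we cannot bound each power sum separately. We must apply Efron--Stein directly to $\tr(p(\cdot))$ evaluated on the empirical snapshot product, and control cross terms of the random-basis estimators through Weingarten calculus. We would first try to express the $m$-th Hoeffding component as a divided-difference (derivative) form of $p$ applied to $\rho$, sandwiched with $m$ centered snapshots. We would then bound its second moment using $\mathbb{E}[(\widehat\rho-\rho)\otimes(\widehat\rho-\rho)]\preceq O(d)\cdot(\text{swap-like term})$, which is where the extra factor $d$ enters.
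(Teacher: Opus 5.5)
Your variance strategy in Step~2 is the paper's: U-statistics of independent uniform-POVM snapshots, the classical Efron--Stein decomposition, and Hoeffding components written through derivatives of $A\mapsto\tr p(A)$. But Step~2 is the entire technical content, and you leave it as a target. The heuristic you give for where $d$ enters is also not the actual mechanism. Three facts are needed.
\begin{itemize}
\item Since $g_p$ is symmetric and affine in each snapshot with $g_p(A,\ldots,A)=\tr p(A)$, the order-$r$ component is exactly $\frac{1}{n^{\downarrow r}}\tr[\Theta_p^{(r)}((\bm X_1-\sigma)\otimes\cdots\otimes(\bm X_r-\sigma))]$. Here $\Theta_p^{(r)}$ is the operator representing the $r$-th derivative at $\sigma$.
\item The uniform-POVM second-moment formula, applied by induction over copies, gives $\E[\tr(A(\bm X_1\otimes\cdots\otimes\bm X_r))^2]\leq 3^r\|A\|_{\mathrm{HS}}^2$. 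So each snapshot costs $O(1)$ in Hilbert--Schmidt norm, not $O(d)$.
\item The divided-difference form of the derivative has coefficients equal to values $p^{(r)}(\xi)$ with $\xi\in[0,L]$, by Rolle. Combined with $\sum_{\pi\in S_r}\tr R_\pi=d^{\uparrow r}$, this gives $\|\Theta_p^{(r)}\|_{\mathrm{HS}}^2\leq\frac{d^{\uparrow r}}{(r-1)!}\|p^{(r)}\|_{\infty,[0,L]}^2$.
\end{itemize}
The extra $d$ per level comes from this \emph{unweighted} Hilbert--Schmidt norm, compared with $\|\Theta_p^{(r)}(\sqrt\sigma)^{\otimes r}\|_{\mathrm{HS}}^2$ in the entangled case. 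The exact factorials $1/(r-1)!$, $\binom nr^{-1}$ and $(\|p^{(r)}\|/r!)^2$ matter, because the degrees involved are far beyond $\log d$. Also, only the independent-basis-per-copy version of Step~1 is unbiased. With a shared basis, snapshots in a batch are correlated and the U-statistic no longer has mean $\tr(\rho^j)$.

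Step~3 and the parameters are where the proposal fails as written.
\begin{itemize}
\item \emph{Degree.} The rates need $K=\lceil\ell^2\rceil$ when $\varepsilon\ell>1$ and $K=\lceil\ell^2/\varepsilon\rceil$ when $\varepsilon\ell\leq1$, with $\ell=\log d$, not $O(\log d)$ moments. In the second regime, the interior approximation term $dL/K$ forces the number of moments to grow like $1/\varepsilon$ in the paper's accounting.
\item \emph{Confinement of the spectrum.} These degrees give the stated copy counts only because the spectrum has been confined first. The paper runs the unentangled bucketing algorithm (\Cref{prop:bucketing-single}) at threshold $B$, at cost $O(d/(B^2\varepsilon^2))$. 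This is $O(n)$ for $B=b\varepsilon^2K^2/(1.1d)$, respectively $\ell^2/(1.1d)$. It then measures $\{\Pi,\overline\Pi\}$ on fresh copies and estimates moments of $\sigma=\overline\Pi\rho\overline\Pi\preceq LI$. The Chebyshev polynomials are rescaled to $[0,L]$, and $n=\Theta(d^2/(L\varepsilon^2))$.
\item \emph{What your version does instead.} You estimate $\tr p(\rho)$ directly, i.e.\ $L=1$. The approximation terms are then $\sqrt d/K$ and $d/K$, so you would need $K\gtrsim\sqrt d/\varepsilon$ or $d/\varepsilon$. None of the entangled parameter choices you plan to reuse apply. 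You also never swap in the unentangled bucketing cost. The whole variance accounting would have to be redone at those degrees, including $d^{\uparrow r}$ with $r>d$, and you have not done this.
\item \emph{Aggregation.} Chebyshev's inequality plus a union bound over moments is the wrong way to combine the errors, and it loses a factor $K$. The least-squares comparison gives $\sum_k\Delta_k(\widehat{\bm z})^2/k^2\leq4\sum_k(\widehat{\bm F}_{p_k}-\tr p_k(\sigma))^2/k^2$. The self-bounding coefficient estimate gives $\sum_kk^2a_k^2\lesssim L\,\TV(\alpha,z)$. Together they make the error bound linear in $\sum_k\Var(\widehat{\bm F}_{p_k})/k^2$, so taking expectations and applying Markov's inequality suffices.
\end{itemize}
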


\begin{remark}
At constant accuracy, this gives a copy complexity of $O(d^3/\log^4 d)$,
removing the $(\log\log d)^4$ factor in the bound of Pelecanos, Tan, Tang,
and Wright~\cite{PTTW26}. The theorem also improves the dependence on
accuracy from $\varepsilon^{-6}$ to $\varepsilon^{-4}$ or
$\varepsilon^{-2}$, depending on the regime. We conjecture that this copy complexity
is optimal among all algorithms using possibly adaptive unentangled measurements. 
\end{remark}

Our algorithms for entangled and unentangled spectrum estimation follow a common two-step template. 
First, we estimate the large eigenvalues
and construct a projector that approximately separates the large and small
spectral components. Then we estimate the small eigenvalues by matching moments. This two-step template is the same as the one used in \cite{PSTW26,PTTW26}, and the first step is implemented using the \emph{bucketing algorithms} developed there. 
However, our analysis of the second step is substantially different. 
We introduce a framework for analyzing moment matching based on
Chebyshev polynomials, inspired by the work of Musco, Musco, Rosenblatt, and
Singh~\cite{MMRS25}. We show how to reconstruct the spectrum from approximate
Chebyshev moments, i.e., we control how errors in the estimated
moments translate into errors in the recovered eigenvalues. 

The Chebyshev moments are obtained as a linear combination of the underlying monomial moment estimates. 
To have good bounds on the variance of the estimated Chebyshev moments, we use the classical and quantum Efron--Stein decompositions to reduce the variance analysis to bounds on matrix derivatives. 
We then control these matrix derivatives using scalar derivatives of the corresponding Chebyshev polynomials. 
Smaller variance bounds allow us to use polynomials of higher degree in reconstructing the spectrum. 
The two rates arise from different rescalings of the Chebyshev polynomials
that balance approximation error against variance. 
For the first
rate, we map the spectrum to $[-1,1]$ and use a polynomial approximation
whose error improves for eigenvalues near zero. Since the eigenvalues sum
to at most one, most must be small, so this finer approximation gives a
sharper bound on the total error summed over the spectrum. This yields the
$(\varepsilon\log d)^{-4}$ dependence, albeit any such approximation has large derivatives near the endpoints. To improve the dependence on
$\varepsilon$, we instead map the spectrum to $[-1/2,1/2]$, avoiding the
large derivatives at the endpoints. The resulting reduction in variance
allows us to use higher degree polynomials, compensating for the weaker
approximation guarantee and yielding the $(\varepsilon\log d)^{-2}$
dependence.

\section{Technical overview}
\label{sec:technical-overview}

In this section, we explain the proof for entangled spectrum estimation, and the analysis for the unentangled case is very similar. 
We start by recalling two ingredients from \cite{PSTW26} that we use as black boxes. 

\paragraph{Bucketing and estimating large eigenvalues.}
The first ingredient is a \emph{bucketing algorithm} which, given a threshold $0<B<1$, uses $O(d/(B\varepsilon^2))$ copies to
estimate the eigenvalues of $\rho$ greater than $B$ and construct a corresponding large-bucket projector $\Pi$ of rank $O(1/B)$. $\overline{\Pi}=I-\Pi$ is called the \emph{small-bucket projector} and 
$\sigma=\overline{\Pi}\rho\overline{\Pi}$ is called the \emph{small-bucket state}. 
Set $L=1.1B$. With probability at least $0.99$, the large eigenvalues can be estimated to $O(\varepsilon)$ TV error, 
\begin{equation*}
 0\preceq\sigma\preceq LI_d, \qquad\text{and}\qquad \TV(\spec(\rho), \, \spec(\Pi\rho\Pi+\sigma)) \leq O(\epsilon).  
\end{equation*}
Throughout the overview, 
we condition on a fixed successful output from this bucketing step. 
Therefore, the problem is reduced to estimating the spectrum of the small-bucket state $\sigma$, denoted by $L\geq \alpha_1\geq \cdots \geq \alpha_d\geq 0$, with TV error $O(\varepsilon)$.

\paragraph{Measurements on the small-bucket state.}
The second ingredient is a fully entangled measurement
that provides unbiased estimates of the monomial moments of $\sigma$.
On $n$ fresh copies of $\rho$, we measure $\{\Pi,\overline{\Pi}\}$ and apply
\emph{weak Schur sampling} jointly to the copies for which the outcome
is $\overline{\Pi}$. 
Denote the measurement outcome by $\blambda$. Plugging  $\blambda$ into the formula in \cite[Definition~5.5]{PSTW26} gives real estimates
$\widehat{\bm M}_1,\ldots,\widehat{\bm M}_K$, for any $K\leq n$, such that
$\E\widehat{\bm M}_m=\tr(\sigma^m)$.  

Therefore, for a univariate 
polynomial $p(x)=\sum_{m=0}^Kp_mx^m$, we can construct an unbiased estimate
of $\tr(p(\sigma))$, i.e., the sum of $p$ evaluated at the eigenvalues of $\sigma$,  as follows:
\begin{equation}
\label{eq:overview-polynomial-estimator}
 \widehat{\bm F}_p=p_0d+\sum_{m=1}^Kp_m\widehat{\bm M}_m,
 \qquad
 \E\widehat{\bm F}_p=\sum_{i=1}^dp(\alpha_i)=\tr(p(\sigma)).
\end{equation}
Observe that all of these estimates are computed from the same measurement outcome $\blambda$, and thus their errors can be correlated.

\vspace{1em}

Our improvements concern the reconstruction of the small eigenvalues from these moment estimates.
We first bound the spectrum recovery error in terms of polynomial approximation error and moment estimation variances. 
We then bound these variances using polynomial derivatives and choose the polynomials to obtain the two copy complexity bounds.

\subsection{Chebyshev moment matching}\label{sec:overview-cheb-matching}

It remains to understand how the statistics, $\widehat{\bm F}_p$, determine the true spectrum $\alpha$. We begin by considering the set of feasible candidates for the spectrum of $\sigma$:
\begin{equation*}
 \mathcal Z_d
 =\braces[\Big]{z\in[0,L]^d:z_1\geq\cdots\geq z_d,
                    \ \sum_i z_i\leq1}.
\end{equation*}
Let $\mu_\alpha=\sum_i\delta_{\alpha_i}$ and
$\mu_z=\sum_i\delta_{z_i}$ be the counting measures associated with $\alpha$ and $z$, where each measure places one unit of mass at every (potentially zero) eigenvalue. Since $\alpha$ and $z$ are sorted, the optimal transport plan is to match the points in sorted order and $\|\alpha-z\|_1=2\TV(\alpha,z)$ is exactly the Wasserstein distance between the two measures. By Kantorovich-Rubinstein duality,
\begin{equation}
\label{eq:overview-wasserstein-duality}
 2\TV(\alpha,z)
 =\sup_{\operatorname{Lip}(f)\leq1}
   \left|\sum_i f(\alpha_i)-\sum_i f(z_i)\right|.
\end{equation}
Thus it suffices to control the difference between the two spectra
when tested against any $1$-Lipschitz function. A natural approach
is to approximate $f$ by a low degree polynomial
$P_f(x)=\sum_{m=0}^K b_mx^m$. Adding and subtracting the sums of
$P_f$ over the two spectra gives
\begin{equation*}
 \left|\sum_i f(\alpha_i)-\sum_i f(z_i)\right|
 \leq
 \underbrace{\sum_i\bigl(|f(\alpha_i)-P_f(\alpha_i)| + |f(z_i)-P_f(z_i)|\bigr)}_{\text{approximation error}}
 + \underbrace{\left|\sum_{m=1}^K b_m
   \left(\sum_i\alpha_i^m-\sum_i z_i^m\right)\right|}
   _{\text{moment discrepancy}}.
\end{equation*}
The first term is the approximation error summed over both the true and candidate spectra.
The second is a linear combination of their moment differences,
weighted by the coefficients of $P_f$. We will eventually choose $z$ to fit
the estimated moments. For each $m$, the moment difference can then be split as follows:
\begin{equation}
\label{eq:overview-moment-errors}
 \sum_i\alpha_i^m-\sum_i z_i^m
 =\underbrace{\sum_i\alpha_i^m-\widehat{\bm M}_m}_{\text{estimation error}}
 +\underbrace{\widehat{\bm M}_m-\sum_i z_i^m}_{\text{fitting error}}.
\end{equation}
We therefore need to control both the quality of the approximation
and how its coefficients amplify errors in the moments. Crucially, we note that the choice of basis matters when the moments are noisy.
The monomial coefficients $b_m$ can be exponentially large
in the degree, even for a polynomial bounded on the interval. For
instance, the degree-$k$ Chebyshev polynomial $T_k(2x/L-1)$ is bounded
by one on $[0,L]$, but its leading monomial coefficient is
$2^{2k-1}L^{-k}$. Bounding the contributions of the monomials
separately can therefore lose the cancellations that
keep the polynomial bounded.  

\paragraph{Polynomial approximation on the spectral interval.}
We begin with Jackson's theorem, which states that for every
$\Lambda$-Lipschitz function $g:[-1,1]\to\mathbb R$ and integer $K\geq1$,
there is a polynomial $R$ of degree at most $K$ such that $$\sup_{t\in[-1,1]}|g(t)-R(t)|
 \lesssim\frac{\Lambda}K.$$
For a $1$-Lipschitz function
$f:[0,L]\to\mathbb R$, the rescaled function
$x\mapsto f(L(1+x)/2)$ is $L/2$-Lipschitz. Applying Jackson's
theorem and rescaling back to $[0,L]$ therefore gives a polynomial
approximation to $f$ with error $O(L/K)$.
Summing this error over the eigenvalues gives a total approximation
error of $O(dL/K)$. However, observe the eigenvalues sum to at most one, 
so we should instead use an approximation whose error improves near zero. 
This strengthening of Jackson's theorem is due to
DeVore~\cite{DeVore76} (a similar observation was made in the classical setting~\cite[Lemma~22]{HJW18}) and
states that for every $1$-Lipschitz function
$f:[0,L]\to\mathbb R$ and integer $K\geq1$, there is a polynomial
$P_f$ of degree at most $K$ such that
\begin{equation*}
 |P_f(x)-f(x)|\lesssim \frac{\sqrt{Lx}}K
 \qquad\text{for all }x\in[0,L].
\end{equation*}
Notice, this bound recovers Jackson's theorem when $x\gtrsim L$. Summing over the eigenvalues and applying Cauchy--Schwarz gives
\begin{equation}
  \label{eqn:overview-approx-error}
 \sum_i|P_f(\alpha_i)-f(\alpha_i)|
 \lesssim\frac{\sqrt L}{K}\sum_i\sqrt{\alpha_i}
 \leq\frac{\sqrt L}{K}\sqrt{d\sum_i\alpha_i}
 \leq\frac{\sqrt{dL}}K,
\end{equation}
where the last inequality uses $\sum_i\alpha_i\leq1$.  
It remains to control the coefficients of this approximation, which
govern how errors in the moments affect the reconstruction. We do
this by expressing $P_f$ in the Chebyshev basis. We let $T_k$ be the Chebyshev polynomial (\Cref{def:chebyshev-poly}), and consider the shifted and
scaled polynomials
\begin{equation*}
 \phi_k(x)=T_k(2x/L-1)-(-1)^k,
 \qquad 1\leq k\leq K.
\end{equation*}
The rescaling maps $[0,L]$ onto $[-1,1]$, and the subtracted
constant ensures that $\phi_k(0)=0$. Since constants cancel
in~\eqref{eq:overview-wasserstein-duality}, we may assume $f(0)=0$,
which also gives $P_f(0)=0$. These polynomials form a basis
for the polynomials of degree at most $K$ that vanish at zero, so
we can write $P_f=\sum_{k=1}^Ka_k\phi_k$. For polynomials expressed in the Chebyshev basis together with
the uniform approximation guarantee $\|P_f-f\|_\infty\lesssim L/K$, Musco, Musco, Rosenblatt and Singh~\cite[Lemma~13]{MMRS25} provide a coefficient bound of  $\sum_{k=1}^K k^2a_k^2\lesssim L^2$.
This bound controls both the estimation and fitting errors
in~\eqref{eq:overview-moment-errors}. Indeed, for any error vector
$u\in\mathbb R^K$, Cauchy--Schwarz implies
\begin{equation}
  \label{eqn:coefficient-bound-in-action}
 \left|\sum_{k=1}^K a_ku_k\right|
 \leq\left(\sum_{k=1}^K k^2a_k^2\right)^{1/2}
      \left(\sum_{k=1}^K\frac{u_k^2}{k^2}\right)^{1/2}
 \lesssim L\left(\sum_{k=1}^K\frac{u_k^2}{k^2}\right)^{1/2}.
\end{equation}
Further, it is known that such a coefficient bound is as good as it gets, i.e., \cite{MMRS25} show a $1$-Lipschitz function that achieves the coefficient bound exactly. Using the coefficient bounds of Musco, Musco, Rosenblatt, and Singh as a black box only yields a copy complexity of $O(d^2 (\varepsilon \log(d))^{-4})$. To obtain the improved dependence on $\varepsilon$, we need a sharper bound. The key insight here is that~\cite{MMRS25}'s bound only uses that the test function $f$ is $1$-Lipschitz, but to certify the distance
between two spectra, it suffices to use a function that attains the
supremum in~\eqref{eq:overview-wasserstein-duality} and such functions have more structure. In particular,  let $F(x)$ be the difference between
the numbers of true and candidate eigenvalues at most $x$, i.e.,
\begin{equation*}
  F(x)=\#\{i:\alpha_i\leq x\}-\#\{i:z_i\leq x\}.
\end{equation*}
Under the sorted matching, $|F(x)|$ counts the pairs separated by the
threshold $x$, so integrating $|F(x)|$ gives the distance between
the two spectra. We therefore choose $f$ to be constant where
the counts agree, and to have slope $-\operatorname{sgn}(F(x))$
where they differ. This choice allows us to control the integral of
the squared derivative of $f$: since each eigenvalue contributes
one unit to the counting measure, whenever $F(x)$ is nonzero,
its magnitude is at least one. Therefore,
\begin{equation*}
  \int_0^L |f'(x)|^2\,dx
 =\int_0^L \mathbf1_{\{F(x)\ne0\}}\,dx
 \leq\int_0^L |F(x)|\,dx
 =2\TV(\alpha,z).
\end{equation*}
We then prove the following sharper coefficient bound for polynomials approximating a Lipschitz test function $f$ which scales with the derivative of $f$: 
\begin{equation}
\label{eq:overview-derivative-energy}
 \sum_{k=1}^K k^2a_k^2
 \lesssim L\int_0^L|f'(x)|^2\,dx.
\end{equation}
The integral is at most $L$ for any $1$-Lipschitz function, which
recovers the~\cite{MMRS25} bound. However, for our particular choice of $f$,
it is bounded by $2\TV(\alpha,z)$.
Therefore, we can strengthen
\cref{eqn:coefficient-bound-in-action} to
\begin{equation}
\label{eq:overview-refined-coefficient-bound}
 \left|\sum_{k=1}^K a_ku_k\right|
 \lesssim\sqrt{L\TV(\alpha,z)}
 \left(\sum_{k=1}^K\frac{u_k^2}{k^2}\right)^{1/2}
 \qquad\text{for every }u\in\mathbb R^K.
\end{equation}

\paragraph{Reconstructing the spectrum.}
We now use~\cref{eq:overview-refined-coefficient-bound} to control
the error in the reconstructed spectrum. By integration by parts,
our choice of $f$ satisfies
$\sum_i f(\alpha_i)-\sum_i f(z_i)=2\TV(\alpha,z)$.
As in~\eqref{eq:overview-moment-errors}, we split each Chebyshev
moment discrepancy into an estimation error and a fitting error,
and apply the refined coefficient bound to each contribution.
The approximation bound in~\eqref{eqn:overview-approx-error}
holds for both spectra, since each has sum at most one. Combining
these bounds gives
\begin{equation}
\label{eq:overview-deterministic-recovery}
 \begin{aligned}
 \TV(\alpha,z)
 \lesssim\frac{\sqrt{dL}}K
      &+\sqrt{L\TV(\alpha,z)}\,
      \underbrace{\left(\sum_{k=1}^K\frac1{k^2}
        \left(\sum_i\phi_k(\alpha_i)-\widehat{\bm F}_{\phi_k}\right)^2
        \right)^{1/2}}_{\text{estimation error}}\\
 &+\sqrt{L\TV(\alpha,z)}\,
      \underbrace{\left(\sum_{k=1}^K\frac1{k^2}
        \left(\widehat{\bm F}_{\phi_k}-\sum_i\phi_k(z_i)\right)^2
        \right)^{1/2}}_{\text{fitting error}}.
 \end{aligned}
\end{equation}

The fitting error suggests a natural reconstruction algorithm:
choose the feasible spectrum that minimizes the weighted squared
discrepancy from the estimated Chebyshev moments. More formally,
given $\widehat{\bm F}_{\phi_1},\ldots,\widehat{\bm F}_{\phi_K}$, let
\begin{equation*}
 \widehat{\bm z}\in\argmin_{z\in\mathcal Z_d}
 \sum_{k=1}^K\frac1{k^2}
 \left(\sum_i\phi_k(z_i)-\widehat{\bm F}_{\phi_k}\right)^2.
\end{equation*}
Since the true spectrum $\alpha$ is feasible, the fitting error of
$\widehat{\bm z}$ is at most the estimation error. Substituting into
\eqref{eq:overview-deterministic-recovery}, both contributions are
therefore bounded by the weighted estimation error, with a coefficient of 
$\sqrt{L\TV(\alpha,\widehat{\bm z})}$. Using $ab \leq a^2/2 + b^2/2$, we can rearrange \cref{eq:overview-deterministic-recovery} to obtain
\begin{equation*}
 \TV(\widehat{\bm z},\alpha)
 \lesssim\frac{\sqrt{dL}}K
 +L\sum_{k=1}^K\frac1{k^2}
       \left(\sum_i\phi_k(\alpha_i)-\widehat{\bm F}_{\phi_k}\right)^2.
\end{equation*}
Taking expectations and recalling that each $\widehat{\bm F}_{\phi_k}$ is
unbiased, we have that the expected squared estimation error is precisely
its variance, yielding
\begin{equation}
\label{eq:overview-recovery}
 \E\TV(\widehat{\bm z},\alpha)
 \lesssim\frac{\sqrt{dL}}K
 +L\sum_{k=1}^K\frac{\Var(\widehat{\bm F}_{\phi_k})}{k^2}.
\end{equation}
Note, the estimates here need not be independent (in fact they are not), and we have reduced our analysis to bounding the variances of these correlated Chebyshev moment estimates.  
The estimation term is now linear in the variances of the
Chebyshev moment estimates. Taking expectations here uses only
unbiasedness; the estimates need not be independent. It remains
to bound these variances in terms of the number of copies.

\subsection{Variance bounds from polynomial derivatives}

We now bound the variance of $\widehat{\bm F}_p$ for a real
polynomial $p(x)=\sum_{m=0}^Kp_mx^m$.
A natural approach, used in~\cite{PTTW26,PSTW26,HJW18}, is to bound the errors in the monomial moments
separately, and then combine these bounds using the coefficients
of the polynomial. By the triangle inequality, 
\begin{equation*}
 \sqrt{\Var(\widehat{\bm F}_p)}
 \leq\sum_{m=1}^K|p_m|\sqrt{\Var(\widehat{\bm M}_m)}.
\end{equation*}
However, the monomial coefficients can be exponentially large in
the degree, even when the polynomial is bounded on the
interval $[0,L]$. The individual moment bounds also introduce
factors of $K^{O(K)}$. With these estimates, the analyses
in~\cite{PSTW26,PTTW26} take $K=\Theta(\log d/\log\log d)$.
Our improvement is to retain the cancellations between the
correlated moment errors. We instead bound the variance of $\widehat{\bm F}_p$ directly, and show
that it is controlled by the derivatives of the \emph{univariate} polynomial $p$ on $[0,L]$.

\paragraph{Moment estimates as observables.}
To analyze this variance, we first express the estimates $\widehat{\bM}_m$ of the monomial moments as measurements of some observables $\mathsf{M}_m$ on $\rho^{\otimes n}$. 
This then establishes the estimates $\widehat{\bF}_p$ as measurements of observables of the form $\mathsf{F}_p = \sum_{m=0}^K p_m \mathsf{M}_m$ on $\rho^{\otimes n}$.
Let $\overline\Pi_i$ denote $\overline\Pi$ acting on copy $i$, and let $R_{(i_1\cdots i_m)}$
cyclically permute the indicated copies. Let $n^{\downarrow m}=n(n-1)\cdots(n-m+1)$ be the downward factorial, $\mathsf M_0:=dI$ and 
\begin{equation*}
 \mathsf M_m:=\frac1{n^{\downarrow m}}
 \sum_{\substack{i_1,\ldots,i_m\in[n]\\\text{all distinct}}}
 \overline\Pi_{i_1}\cdots\overline\Pi_{i_m}R_{(i_1\cdots i_m)},
 \qquad 1\leq m\leq K. 
\end{equation*}
The operators $\mathsf M_m$ are Hermitian and invariant under permutations of the $n$ copies.
For any density matrix $\tau$, the expectation of measuring each observable $\mathsf{M}_m$ on $\tau^{\otimes n}$ is 
\begin{equation*}
  \tr(\mathsf{M}_m \tau^{\otimes n}) = \tr((\overline{\Pi} \tau \overline{\Pi})^m),
\end{equation*}
which is the $m$-th monomial moment of the projected state $\overline{\Pi} \tau \overline{\Pi}$. 
We note that these moment observables $\mathsf M_1,\ldots,\mathsf M_K$ can be measured simultaneously by the procedure described at the beginning of the overview: 
measure $\{\Pi,\overline\Pi\}$ on each copy and apply weak Schur sampling to the retained copies.
Plugging the measurement outcome $\blambda$ into the formula in~\cite[Definition~5.5]{PSTW26} gives their measured values $\widehat{\bm M}_1,\ldots,\widehat{\bm M}_K$. Therefore the expected value of measuring $\mathsf F_p\coloneqq\sum_{m=0}^Kp_m\mathsf M_m$ on $\tau^{\otimes n}$ is 
\begin{equation}
\label{eq:overview-observable-mean}
 \tr(\tau^{\otimes n}\mathsf F_p)
 =\tr\!\left(p(\overline\Pi\tau\overline\Pi)\right). 
\end{equation}
Thus $\widehat{\bm F}_p = p_0d + \sum_{m=1}^K p_m \widehat{\bM}_m$ given in~\cref{eq:overview-polynomial-estimator} is the measured value of $\mathsf F_p$ on $\rho^{\otimes n}$, and its variance is given by
\begin{equation*}
  \Var(\widehat{\bm F}_p) = \tr(\rho^{\otimes n}\mathsf F_p^2) - \tr(\rho^{\otimes n}\mathsf F_p)^2 . 
\end{equation*}

\paragraph{Quantum Efron--Stein decomposition.}
The main difficulty in bounding the variance of $\widehat{\bF}_p$ is
controlling the second moment $\tr(\rho^{\otimes n}\mathsf F_p^2)$.
Expanding the square and bounding the resulting terms separately can
lose the cancellations between moment estimation errors. We instead
use the quantum Efron--Stein decomposition to write the centered
observable as a sum of operators acting on subsets of the input
copies. These operators are orthogonal with respect to
$\rho^{\otimes n}$, so their second moments add to give the variance.
We then identify these operators through derivatives of the mean,
for which we already have the polynomial expression
in~\cref{eq:overview-observable-mean}. This will reduce the variance
bound to controlling operators on at most $K$ copies, where $K$ is
the degree of $p$.

We briefly recall the quantum Efron--Stein decomposition~\cite{GB10,PFMO25}.
For any observable $G$ on $n$ copies, we can write
\begin{equation*}
 G=\tr(G\rho^{\otimes n})I^{\otimes n}
 +\sum_{\varnothing\ne S\subseteq[n]}Z_S,
\end{equation*}
where $Z_S$ acts only on the copies in $S$ and is centered in each
of these copies: averaging any one of them against $\rho$ gives
the zero operator. This centering property implies that distinct
components are orthogonal. Indeed, if $S\ne T$, there is a copy
belonging to exactly one of the two subsets. On this copy, one
operator is centered and the other acts as the identity, so
averaging gives $\tr(\rho^{\otimes n}Z_SZ_T)=0$. Thus, expanding
the square of the centered observable and taking its expectation,
all cross terms vanish and we obtain
\begin{equation*}
 \tr(\rho^{\otimes n}G^2)-\tr(\rho^{\otimes n}G)^2
 =\sum_{\varnothing\ne S\subseteq[n]}
   \tr(\rho^{\otimes n}Z_S^2).
\end{equation*}

We now apply this decomposition to $G=\mathsf F_p$. Since
$\mathsf F_p$ is invariant under permutations of the copies,
every subset $S$ of size $r$ carries the same centered operator,
which we denote by $G_r$. The copies outside $S$ contribute a
factor of $\tr(\rho)=1$, so
$\tr(\rho^{\otimes n}Z_S^2)=\tr(\rho^{\otimes r}G_r^2)$.
There are $\binom nr$ subsets of size $r$, and grouping their
contributions gives
\begin{equation}
\label{eq:qES}
 \Var(\widehat{\bF}_p)
 =\sum_{r=1}^n\binom nr\tr(\rho^{\otimes r}G_r^2).
\end{equation}
Thus, it suffices to bound the second moments of the operators
$G_r$. We will do this using derivatives of the mean
in~\cref{eq:overview-observable-mean}, without computing $G_r$
explicitly. The key observation is that, when we perturb the input
state from $\rho$ to $\rho+t\Delta$, the centering property ensures
that an $r$-copy component contributes only to the coefficient
of $t^r$. This will allow us to relate $G_r$ to the $r$th
derivative of the mean.

\paragraph{Identifying the components via Taylor expansion.}
Let $\Delta=\tau-\rho$ for a density matrix $\tau$, and consider
the perturbed state $\rho+t\Delta$ for $t\in[0,1]$. Set
$H=\overline\Pi\Delta\overline\Pi$ and define
$g_H(t)=\tr(p(\sigma+tH))$. By~\cref{eq:overview-observable-mean},
this is the mean of our estimator on the perturbed input.
Fix a subset $S$ of size $r$ and observe that since $Z_S$ acts as $G_r$ on $S$
and as the identity elsewhere, we have
\begin{equation*}
 \tr\!\left[Z_S(\rho+t\Delta)^{\otimes n}\right]
 =\tr\!\left[G_r(\rho+t\Delta)^{\otimes r}\right]
 =t^r\tr(G_r\Delta^{\otimes r}).
\end{equation*}
The second equality follows by expanding the tensor product. Each
term containing $\rho$ in at least one factor vanishes by the
centering property of $G_r$, leaving only the term with $\Delta$
in all $r$ factors. Thus, an $r$-copy component contributes only
to the coefficient of $t^r$. Taking the expectation of the
decomposition and grouping subsets by their size gives
\begin{equation}
  \label{eq:mean-perturbation}
 g_H(t)=g_H(0)+\sum_{r=1}^n\binom nr t^r
   \tr(G_r\Delta^{\otimes r}).
\end{equation}
On the other hand, with $\sigma$ and $H$ fixed,
$g_H(t)=\tr(p(\sigma+tH))$ is a scalar polynomial in $t$ of
degree at most $K$. Taylor expanding this polynomial around
$t=0$ yields yet another expression for $g_H(t)$, namely $g_H(t)=g_H(0)+\sum_{r=1}^K\frac{g_H^{(r)}(0)}{r!}\,t^r$. 
Comparing the $r$-th coefficient with that in \cref{eq:mean-perturbation}, we can conclude that 
\begin{equation}
\label{eq:relate_derivative}
 \tr(G_r\Delta^{\otimes r})
 =\frac{g_H^{(r)}(0)}{n^{\downarrow r}},
 \qquad 1\leq r\leq K.
\end{equation}
Since $g_H$ has degree at most $K$, the same comparison gives
$\tr(G_r\Delta^{\otimes r})=0$ for $r>K$.

\paragraph{Matrix derivative operators.}
We now use~\cref{eq:relate_derivative} to bound the second moments
of the operators $G_r$. As a function of $H$, the derivative
$g_H^{(r)}(0)$ is a homogeneous polynomial of degree $r$ in its
entries. Since the entries of $H^{\otimes r}$ contain all products
of $r$ entries of $H$, we can collect their coefficients into an
operator $\Theta_p^{(r)}$ satisfying
\begin{equation*}
 g_H^{(r)}(0)=\tr\!\left(\Theta_p^{(r)}H^{\otimes r}\right)
 \qquad\text{for every Hermitian }H.
\end{equation*}
We can choose $\Theta_p^{(r)}$ to be Hermitian and invariant under
permutations of the tensor factors, and give an explicit
construction below. Substituting
$H=\overline\Pi\Delta\overline\Pi$ into this identity and
applying~\cref{eq:relate_derivative} gives
\begin{equation*}
 \tr(G_r\Delta^{\otimes r})=\tr(K_r\Delta^{\otimes r}),
 \qquad
 K_r:=\frac{1}{n^{\downarrow r}}
 \overline\Pi^{\otimes r}\Theta_p^{(r)}\overline\Pi^{\otimes r}.
\end{equation*}

This identity does not imply $G_r=K_r$, since $\Delta$ is
traceless and therefore does not detect terms containing an
identity factor. However, $G_r$ is centered in every copy. Since
the identity holds for every density matrix $\tau$, it identifies
$G_r$ as the operator obtained by centering $K_r$ in each copy.
More precisely, writing $\mathcal C_\rho(A)=A-\tr(\rho A)I$, we
have $G_r=\mathcal C_\rho^{\otimes r}(K_r)$
(see~\cref{lem:quantum-efron-stein}). Centering is an orthogonal
projection with respect to the $\rho$-weighted inner product, so
$\tr(\rho^{\otimes r}G_r^2)\leq\tr(\rho^{\otimes r}K_r^2)$.
Substituting into~\cref{eq:qES} and using
$n^{\downarrow r}=r!\binom nr$, we obtain
\begin{equation}
\label{eq:overview-tensor-variance}
 \begin{aligned}
 \Var(\widehat{\bm F}_p)
 &\leq\sum_{r=1}^K\frac1{r!\,n^{\downarrow r}}
   \tr\!\left[\rho^{\otimes r}
     (\overline\Pi^{\otimes r}\Theta_p^{(r)}\overline\Pi^{\otimes r})^2\right] \leq\sum_{r=1}^K\frac1{r!\,n^{\downarrow r}}
   \tr\!\left(\sigma^{\otimes r}(\Theta_p^{(r)})^2\right),
 \end{aligned}
\end{equation}
The second inequality uses $\overline\Pi\rho\overline\Pi=\sigma$
and the fact that $(PAP)^2\preceq PA^2P$ for any orthogonal
projector $P$ and Hermitian operator $A$. Thus we can bound the
variance using the derivative operators $\Theta_p^{(r)}$ without
explicitly computing the centered components $G_r$.

\paragraph{Constructing the derivative operators.}
Next, we describe how to construct the derivative operator $\Theta_p^{(r)}$, using $p(x) = x^3$ as an example. Using the cyclicity of trace gives
\begin{equation*}
 g_H(t)=g_H(0)+3t\tr(\sigma^2H)+3t^2\tr(\sigma H^2)+t^3\tr(H^3).
\end{equation*}
We want to express each derivative as the trace of an operator
against $H^{\otimes r}$.
The linear term immediately gives $\Theta_p^{(1)}=3\sigma^2$.
For the quadratic term, let $\SWAP$ exchange two tensor factors.
The identity $\tr\!\left[\SWAP(A\otimes B)\right]=\tr(AB)$ gives
\begin{equation*}
 \tr\!\left[(\sigma\otimes I+I\otimes\sigma)\SWAP H^{\otimes2}\right]
 =\tr(\sigma H^2)+\tr(H\sigma H)=2\tr(\sigma H^2).
\end{equation*}
Since $g_H''(0)=6\tr(\sigma H^2)$, we can take
$\Theta_p^{(2)}=3(\sigma\otimes I+I\otimes\sigma)\SWAP$.
For the cubic term, let $R$ cyclically permute three tensor
factors, sending $u\otimes v\otimes w$ to $v\otimes w\otimes u$.
The analogous trace identity is
$\tr\!\left[R(A\otimes B\otimes C)\right]=\tr(ABC)$.
Both $R$ and $R^\dagger$ therefore satisfy
$\tr(RH^{\otimes3})=\tr(R^\dagger H^{\otimes3})=\tr(H^3)$.
Thus $\Theta_p^{(3)}=3(R+R^\dagger)$ is Hermitian and satisfies
$\tr(\Theta_p^{(3)}H^{\otimes3})=g_H'''(0)$.

The same argument applies to a general polynomial
$p(x)=\sum_{m=0}^Kp_mx^m$. We first expand each monomial as
\begin{equation*}
 (\sigma+tH)^m
 =\sigma^m+\sum_{r=1}^m t^r
   \sum_{\substack{a_0+\cdots+a_r=m-r\\a_0,\ldots,a_r\geq0}}
   \sigma^{a_0}H\sigma^{a_1}H\cdots H\sigma^{a_r}.
\end{equation*}
A term contributes to the coefficient of $t^r$
precisely when it contains $r$ copies of $H$.
The indices $a_0,\ldots,a_r$ count the copies of $\sigma$
before, between, and after them. We keep track of their order
because $\sigma$ and $H$ need not commute.
Taking traces, summing with coefficients $p_m$, and comparing
with the Taylor expansion, we have
\begin{equation*}
 \frac{g_H^{(r)}(0)}{r!}
 =\sum_{m=r}^Kp_m
   \sum_{\substack{a_0+\cdots+a_r=m-r\\a_0,\ldots,a_r\geq0}}
   \tr\!\left(\sigma^{a_0}H\sigma^{a_1}H\cdots H\sigma^{a_r}\right).
\end{equation*}

It remains to write each term as the trace of an operator against
$H^{\otimes r}$. By cyclicity, we can combine the first and last
powers of $\sigma$, leaving a product of $r$ factors of the form
$\sigma^bH$. Letting $R_r$ be the cyclic shift that sends the
first tensor factor to the last, the trace identity gives
\begin{equation*}
 \tr(\sigma^{b_1}H\cdots\sigma^{b_r}H)
 =\tr\!\left[
 R_r(\sigma^{b_1}\otimes\cdots\otimes\sigma^{b_r})H^{\otimes r}
 \right].
\end{equation*}
The operator multiplying $H^{\otimes r}$ depends only on $\sigma$
and the exponents. Summing these operators with coefficients $p_m$
and multiplying by $r!$ therefore gives $\Theta_p^{(r)}$.

\paragraph{From matrix derivatives to scalar derivatives.}
It remains to bound each weighted second moment
$\tr(\sigma^{\otimes r}(\Theta_p^{(r)})^2)$ in~\cref{eq:overview-tensor-variance}.
We work in an eigenbasis of $\sigma$, with eigenvalues
$\alpha_1,\ldots,\alpha_d$. In this basis, we can express the
coefficients of $\Theta_p^{(r)}$ in terms of averages of the scalar derivative
$p^{(r)}$, and thereby bound the weighted second moment.
For $r=1$, we have $\Theta_p^{(1)}=p'(\sigma)$,
so
\begin{equation*}
 \tr\!\left(\sigma(\Theta_p^{(1)})^2\right)
 =\sum_a\alpha_a p'(\alpha_a)^2
 \leq\|p'\|_{\infty,[0,L]}^2,
\end{equation*}
where we used $\sum_a\alpha_a\leq1$. Similarly, for $r=2$, the matrix derivative formula gives
\begin{equation}
\label{eqn:overview-second-moment}
 \tr(\Theta_p^{(2)}H^{\otimes2})
 =g_H''(0)
 =\sum_{a,b}w_{ab}H_{ab}H_{ba}, \qquad
 w_{ab}=\int_0^1p''((1-t)\alpha_a+t\alpha_b)\,dt,
\end{equation}
and $H_{ab}$ are the entries of $H$ in the eigenbasis of
$\sigma$. To see this, first take $p(x)=x^m$ with $m\geq2$ and compute the second derivative:
\begin{equation*}
 \begin{aligned}
 g_H''(0)
 &=m\sum_{j=0}^{m-2}
   \tr\!\left(\sigma^jH\sigma^{m-2-j}H\right)=m\sum_{a,b}\left(\sum_{j=0}^{m-2}
   \alpha_a^j\alpha_b^{m-2-j}\right)H_{ab}H_{ba},
 \end{aligned}
\end{equation*}
where the second equality expands the trace in the eigenbasis
of $\sigma$. Using $y^n - x^n = (y-x)\sum_{j=0}^{n-1} x^{j} y^{n-1-j}$, we have $\sum_{j=0}^{m-2}\alpha_a^j\alpha_b^{m-2-j}
 =\frac{\alpha_b^{m-1}-\alpha_a^{m-1}}{\alpha_b-\alpha_a}.$
By linearity in $p$, the coefficient for a general polynomial is
therefore
\begin{equation*}
 w_{ab}
 =\frac{p'(\alpha_b)-p'(\alpha_a)}{\alpha_b-\alpha_a}
 =\int_0^1p''((1-t)\alpha_a+t\alpha_b)\,dt,
\end{equation*}
where the last equality is the fundamental theorem of calculus. 

To identify the entries of $\Theta_p^{(2)}$, observe that its rows and
columns are indexed by pairs. Since
$(H\otimes H)_{(c,d),(a,b)}=H_{ca}H_{db}$, expanding the trace gives
\begin{equation*}
 \tr(\Theta_p^{(2)}H^{\otimes2})
 =\sum_{a,b,c,d}(\Theta_p^{(2)})_{(a,b),(c,d)}H_{ca}H_{db}.
\end{equation*}
Taking $(c,d)=(b,a)$ gives the product $H_{ba}H_{ab}$, so
comparing with~\cref{eqn:overview-second-moment}, we can take
$(\Theta_p^{(2)})_{(a,b),(b,a)}=w_{ab}$, with all other entries zero.
Since $\Theta_p^{(2)}$ is Hermitian, its weighted second moment is a
weighted sum of squared entries:
\begin{equation*}
 \tr\!\left(\sigma^{\otimes2}(\Theta_p^{(2)})^2\right)
 =\sum_{a,b}\alpha_a\alpha_b
   \sum_{c,d}|(\Theta_p^{(2)})_{(a,b),(c,d)}|^2 =\sum_{a,b}\alpha_a\alpha_b w_{ab}^2
 \leq\|p''\|_{\infty,[0,L]}^2,
\end{equation*}
where the inequality uses $|w_{ab}|\leq\|p''\|_{\infty,[0,L]}$
and $\sum_{a,b}\alpha_a\alpha_b\leq1$.
For general $r$, the same approach gives the following bound
(see~\cref{lem:derivative-norms}):
\begin{equation}
\label{eq:overview-derivative-second-moment}
 \tr\!\left(\sigma^{\otimes r}(\Theta_p^{(r)})^2\right)
 \leq\frac{\prod_{j=0}^{r-1}(1+jL)}{(r-1)!}
       \|p^{(r)}\|_{\infty,[0,L]}^2.
\end{equation}
The matrix derivative formula in~\cref{lem:trace-derivatives}
writes $\Theta_p^{(r)}$ as a diagonal coefficient matrix times an average
of cyclic permutations. As in the quadratic case, the coefficients
are averages of $p^{(r)}$ at convex combinations of eigenvalues,
so they are bounded by $\|p^{(r)}\|_{\infty,[0,L]}$.
Bounding these coefficients and squaring leaves weighted traces
of permutation operators. Within each cycle, the matrix indices
must agree, so a cycle of length $\ell$ contributes
$\tr(\sigma^\ell)\leq L^{\ell-1}$.
The factorial denominator comes from the average over cyclic
orderings, while the product $\prod_{j=0}^{r-1}(1+jL)$ bounds
the total contribution of the resulting permutations.
Substituting~\cref{eq:overview-derivative-second-moment}
into~\cref{eq:overview-tensor-variance}, we obtain
\begin{equation}
\label{eq:overview-polynomial-variance}
 \Var(\widehat{\bm F}_p)
 \leq\sum_{r=1}^K
 \frac{\prod_{j=0}^{r-1}(1+jL)}{\binom nr(r-1)!}
 \left(\frac{\|p^{(r)}\|_{\infty,[0,L]}}{r!}\right)^2.
\end{equation}
Thus the variance is controlled by the scalar derivatives of
$p$ on $[0,L]$. Each derivative combines the monomial
contributions before we square, retaining the cancellations
between their estimation errors.

\subsection{Choosing the polynomials}

We have now reduced the problem to choosing polynomials that
approximate Lipschitz functions while controlling their derivatives
on the interval $[0,L]$. The approximation guarantees determine
how large the degree must be, while
\cref{eq:overview-polynomial-variance} bounds the resulting
variance. Suppressing constant factors, in both constructions we use
$n= d/(L\varepsilon^2)$ copies, matching the cost of estimating
the large eigenvalues. Thus we would like to take $L$ as large as
possible while keeping both the approximation and estimation errors
bounded by $O(\varepsilon)$.

\paragraph{The full spectral interval.}
A natural first choice is the basis we used for reconstruction,
$\phi_k(x)=T_k(2x/L-1)-(-1)^k$. Mapping zero to an endpoint of
$[-1,1]$ gives the finer approximation near zero, and hence total
approximation error $O(\sqrt{dL}/K)$. However, the Chebyshev
polynomials also vary most rapidly near the endpoints:
$|T_k'(\pm1)|=k^2$. More generally, we have
(see~\cref{lem:cheb-derivatives})
\begin{equation*}
 \frac{\|\phi_k^{(r)}\|_{\infty,[0,L]}}{r!}
 \leq\frac{(4k^2/L)^r}{(2r)!},\qquad 1\leq r\leq k.
\end{equation*}
Substituting into~\cref{eq:overview-polynomial-variance} and summing
gives $\Var(\widehat{\bm F}_{\phi_k})
\lesssim\exp(O(k/(nL^2)^{1/4}))$. To make the approximation error $O(\varepsilon)$, we take
$L=\varepsilon^2K^2/d$. With
$n= d/(L\varepsilon^2)$, the variance bound then grows as
$\exp(O(\sqrt K))$, allowing us to take
$K=(\log d)^2$. Each variance is then bounded by
$O(d^{1/4})$. Since
$\sum_{k=1}^Kk^{-2}=O(1)$, the estimation term
in~\cref{eq:overview-recovery} is at most
$O(Ld^{1/4})=O(\varepsilon)$. Substituting the resulting spectral bound
$L=\varepsilon^2(\log d)^4/d$ gives copy complexity
$O(d^2/(\varepsilon^4(\log d)^4))$.

\paragraph{Moving to the interior.}
To improve the dependence on $\varepsilon$, we would like to
increase the degree as the desired accuracy increases. The large
endpoint derivatives are an obstacle, which suggests placing the
spectral interval strictly inside the Chebyshev interval. We use
$\psi_k(x)=T_k(x/L-1/2)-T_k(-1/2)$, so that the affine rescaling
maps $[0,L]$ to $[-1/2,1/2]$ and $\psi_k$ again vanishes at zero.
Away from the endpoints,
the derivatives are smaller (see~\cref{lem:cheb-derivatives}):
\begin{equation*}
 \frac{\|\psi_k^{(r)}\|_{\infty,[0,L]}}{r!}
 \leq\frac{(2k/L)^r}{r!},\qquad 1\leq r\leq k.
\end{equation*}
To obtain an approximation in this basis, we extend each Lipschitz
test function constantly outside $[0,L]$ to $[-L/2,3L/2]$ and apply
the same Jackson construction on this larger interval. Zero is now
in the interior, so we use the uniform approximation error
$O(L/K)$, which sums to $O(dL/K)$ over the spectrum. The constant
extension leaves the integral of the squared derivative unchanged,
so the coefficient bound in~\cref{eq:overview-derivative-energy}
still holds. Repeating the reconstruction argument with $\psi_k$
in place of $\phi_k$ therefore gives
\begin{equation*}
 \E\TV(\widehat{\bm z}_{\mathrm{int}},\alpha)
 \lesssim\frac{dL}{K}
 +L\sum_{k=1}^K\frac{\Var(\widehat{\bm F}_{\psi_k})}{k^2}.
\end{equation*}

The approximation error is larger, but the smaller derivatives
allow us to compensate by increasing the degree. We now need
$K= dL/\varepsilon$ to make the approximation error
$O(\varepsilon)$. With the same copy budget
$n= d/(L\varepsilon^2)$, substituting the derivative bounds
into~\cref{eq:overview-polynomial-variance} allows us to take
$L=(\log d)^2/d$, and hence
$K=(\log d)^2/\varepsilon$. For these choices, we obtain
$\Var(\widehat{\bm F}_{\psi_k})/k^2\lesssim d^{1/4}/(nL^2)$.
The estimation term is therefore at most
$O(Kd^{1/4}/(nL))=O(\varepsilon)$ as well. The threshold $B=L/1.1$ is
now independent of $\varepsilon$, giving copy complexity
$O(d^2/(\varepsilon^2(\log d)^2))$.

\section{Discussion and outlook}
\label{sec:discussion}

\paragraph{Applications to learning other symmetric properties.}
A natural question is whether our local Chebyshev moment matching
methods can determine the optimal copy complexity of learning or
testing other symmetric properties, such as von Neumann entropy
estimation~\cite{AISW20,GW26} and rank testing~\cite{OW21}.
Known upper and lower bounds for these problems agree up to
polylogarithmic factors~\cite{wang2026nearlytightlowerbounds},
but the sharp rates remain unresolved.
Our variance bounds apply to general polynomial statistics,
suggesting that approximations tailored to the property of
interest may help close these gaps.
More broadly, can our techniques improve the copy complexity of
estimating trace distance and fidelity between two unknown quantum
states~\cite{GP22,UNWT25,LT26}? These quantities also depend on the relative
eigenbases, so extending our framework would require estimating
information beyond the two spectra separately.

\paragraph{Optimal two-bucket algorithms for learning sorted distributions?}
The classical analogue of spectrum estimation is to estimate an unknown distribution on $d$ elements up to permutation of its labels, with TV distance error $\epsilon$. 
Valiant and Valiant~\cite{VV11a,VV17} gave the first algorithm with sample complexity $O(d/\log d)$ for constant $\epsilon$, beating the $\Theta(d)$ samples required to learn the labeled distribution.  
Their approach separates the distribution into two buckets, using empirical estimates for the large probabilities and a
linear program to recover the small probabilities.

Han, Jiao, and Weissman~\cite{HJW18} subsequently improved their result using a \emph{multi-bucket} strategy.
For any fixed $\gamma\in(0,1)$, their algorithm achieves the optimal sample complexity $\Theta(d/(\varepsilon^2\log d))$ when $\varepsilon\geq 1/d^{1-\gamma}$.
At the other extreme, when $\varepsilon\leq 1/d$, the optimal sample complexity is $\Theta(d/\varepsilon^2)$, which is achieved by the empirical distribution directly.
Between these two regimes, the optimal sample complexity remains unsettled. 

This raises a natural question: As we solve quantum spectrum estimation optimally using only two buckets, 
can our local Chebyshev moment matching method yield a two-bucket algorithm with optimal sample complexity across all accuracy regimes?

\paragraph{Is adaptivity necessary for spectrum estimation?}
Both our entangled and unentangled algorithms first learn a small-bucket projector and then use it to choose measurements on fresh copies.
However, for entangled measurements, adaptivity is unnecessary. 
It is well-known that weak Schur sampling is the optimal measurement for learning any spectral property of a quantum state. 
So there exists classical postprocessing such that spectrum estimation can be done in a sample-optimal way. 
What is unclear is how to make this postprocessing and its analysis explicit. 
Such a procedure would achieve the optimal copy bound without learning a projector or choosing measurements adaptively.
For unentangled measurements, we ask the same question: can the copy bound in \cref{thm:unentangled} be achieved when all single-copy measurements are fixed in advance?

\paragraph{Optimal lower bound for unentangled spectrum estimation.}
We conjecture that the sample complexity in \Cref{thm:unentangled} is optimal among all adaptive single-copy protocols.  
The current best unentangled lower bound is inherited from entangled spectrum estimation. Therefore it is an outstanding open question to improve the unentangled lower bound.

\section{Preliminaries}

We write random variables in bold.  
All polynomials in this paper have real coefficients. 
For a Hermitian matrix $A$, let $\spec(A)$ denote its eigenvalue vector, listed in nonincreasing order and with zero eigenvalues retained.    
We write $a^{\uparrow j}=a(a+1)\cdots(a+j-1)$ for the rising factorial and $n^{\downarrow m}=n(n-1)\cdots(n-m+1)$ for the falling factorial.
We use $C,c>0$ for universal constants that may change from line to line. 
We write $\mathrm{D}(\calH)$ for the set of density operators on a Hilbert space $\calH$. 

\begin{definition}[Quantum measurement models]
A quantum measurement on a Hilbert space $\mathcal H$ is described by a positive operator-valued measure (POVM): a collection $\{M_i\}_i$ of PSD operators on $\mathcal H$ satisfying $\sum_i M_i=I$. When applied to a quantum state $\rho$, this measurement returns outcome $i$ with probability $\tr(M_i\rho)$. We also allow continuous outcomes, with sums replaced by integrals and probabilities by probability densities.
Given $n$ copies of a quantum state $\rho\in \mathrm{D}(\C^{d})$, we consider two measurement models:
\begin{itemize}
  \item In the \emph{entangled measurement model}, we allow an arbitrary POVM on $(\C^d)^{\otimes n}$. So each $M_i \in \C^{d^n \times d^n}$ and outcome $i$ occurs with probability $\tr(M_i\rho^{\otimes n})$.
  \item In the \emph{unentangled measurement model}, each copy of $\rho$ is measured separately using a POVM on $\C^d$. The choice of POVM may depend on previous measurement outcomes, but only classical information is retained between copies.
\end{itemize}
\end{definition}

\begin{definition}[Unnormalized Wasserstein-$1$ distance]
Let $p,q\in\mathbb R^d$ have entries sorted in the same order,
and let $\mu_p:=\sum_{i=1}^d\delta_{p_i},$ and $\mu_q:=\sum_{i=1}^d\delta_{q_i}$ be their counting measures, where $\delta_x$ is the unit
point mass at $x$.
Their unnormalized Wasserstein-$1$ distance is $W_1(\mu_p,\mu_q)=\sum_{i=1}^d|p_i-q_i|.$
Thus, writing $\TV(p,q):=\frac12\|p-q\|_1$, we have $W_1(\mu_p,\mu_q)=2\TV(p,q).$
\end{definition}

\begin{definition}[Kantorovich-Rubinstein duality of Wasserstein-$1$ distance]
For the counting measures $\mu_p,\mu_q$ above,
\[
W_1(\mu_p,\mu_q)
=\sup_{\operatorname{Lip}(f)\leq1}
  \left|\sum_{i=1}^d f(p_i)-\sum_{i=1}^d f(q_i)\right|,
\]
where the supremum is over all $1$-Lipschitz functions
$f:\mathbb R\to\mathbb R$.
\end{definition}

Our results use Chebyshev polynomials for approximating Lipschitz functions on a bounded domain. We give the definition of the Chebyshev polynomials. 

\begin{definition}[Chebyshev Polynomials]\label{def:chebyshev-poly}
    For $k \in \N$, the $k$-th Chebyshev polynomial is a degree $k$ polynomial denoted by $T_k$. It is recursively defined as follows: 
    $$T_0(x) = 1, \qquad T_1(x) = x, \qquad  \text{and} \qquad T_{k}(x) = 2x T_{k-1}(x) - T_{k-2}(x) .$$
    On the interval $[-1,1]$, the Chebyshev polynomial can alternatively be defined via the trigonometric definition $T_k(\cos \theta) = \cos(k \theta)$.
\end{definition}

We also note down the orthogonality property of the Chebyshev polynomials, which we will exploit in our work. A proof can be found in \cite[Section 1.5]{rivlin1990chebyshev}.

\begin{fact}[Orthogonality of Chebyshev Polynomials]
    The Chebyshev polynomials are orthogonal with respect to the weight function $w(x) = 1/\sqrt{1-x^2}$. In particular, for $j \neq k \in \N$, $\int_{-1}^{1} T_k(x) T_j(x) w(x) \diff x = 0$.
\end{fact}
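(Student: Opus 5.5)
The plan is to reduce the weighted integral to an elementary trigonometric integral using the trigonometric form of $T_k$ from \cref{def:chebyshev-poly}. On $[-1,1]$ I would substitute $x=\cos\theta$ with $\theta\in[0,\pi]$. This map is a bijection from $[0,\pi]$ onto $[-1,1]$, with $\mathrm{d}x=-\sin\theta\,\mathrm{d}\theta$. On $(0,\pi)$ we have $\sqrt{1-x^2}=\sin\theta>0$, so the weight becomes $w(x)\,\mathrm{d}x=-\mathrm{d}\theta$. Reversing the limits to absorb the sign, and using $T_k(\cos\theta)=\cos(k\theta)$, the integral becomes
\[
\int_{-1}^{1}T_k(x)T_j(x)\,\frac{\mathrm{d}x}{\sqrt{1-x^2}}=\int_0^{\pi}\cos(k\theta)\cos(j\theta)\,\mathrm{d}\theta .
\]

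Next I would apply the product-to-sum identity $\cos(k\theta)\cos(j\theta)=\tfrac12\bigl[\cos((k-j)\theta)+\cos((k+j)\theta)\bigr]$. For any nonzero integer $m$, $\int_0^\pi\cos(m\theta)\,\mathrm{d}\theta=\sin(m\pi)/m=0$. Since $j\neq k$ and $j,k\geq0$, both $k-j$ and $k+j$ are nonzero integers, so both terms vanish and the integral is $0$. As a by-product, the same computation gives the normalizations $\pi$ when $j=k=0$ and $\pi/2$ when $j=k\geq1$.

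The only step needing care is the validity of the substitution, because the weight $w$ is singular at $x=\pm1$ and the integral is therefore improper. I would handle this by noting that the integrand is $O\bigl((1-x^2)^{-1/2}\bigr)$ near $\pm1$, with the polynomial factor $T_kT_j$ bounded, so the integral converges absolutely. The change of variables can then be applied on $[-1+\delta,1-\delta]$, which corresponds to a compact subinterval of $(0,\pi)$, and one lets $\delta\to0$. Alternatively, since $x=\cos\theta$ is a $C^1$ monotone bijection and the integrand is integrable, the substitution is valid directly for the Lebesgue integral.
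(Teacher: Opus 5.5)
Your proof is correct: the substitution $x=\cos\theta$ turns the weighted integral into $\int_0^{\pi}\cos(k\theta)\cos(j\theta)\,\mathrm{d}\theta$, and the product-to-sum identity makes this vanish for $j\neq k$ (including the $k=0$ or $j=0$ cases). The paper does not prove the fact itself but cites \cite[Section~1.5]{rivlin1990chebyshev}, and the standard argument there is essentially this same trigonometric substitution.
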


We also note down Jackson's approximation theorem, which says that the \emph{damped} Chebyshev series of a Lipschitz function is a \emph{good} polynomial approximation. A modern proof can be found in \cite[Fact 3.2]{BKM22}. We will use a more refined version of this in \Cref{lem:jackson-endpoint-improvement}. 
\begin{fact}[Jackson's Theorem \cite{Jackson:1930}] \label{fact:jackson}
  Let $f : [-1,1] \to \R$ be an $\ell$-Lipschitz function, and let its Chebyshev series be $f(t) = \gamma_0 + \sum_{k\geq 1} \gamma_k T_k(t)$. Then, for any $K \in \N$, there exist damping factors $\eta_1,\dots,\eta_K \in [0,1]$ depending only on $K$ and the coefficient index such that the polynomial
  $$f_K := \gamma_0 + \sum_{k =1}^K \eta_k \gamma_k T_k(t) $$
  satisfies $\norm{f-f_K}_{\infty, [-1,1]} \leq \frac{\ell}{K}$.
\end{fact}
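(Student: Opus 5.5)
The plan is to reduce to trigonometric approximation via $t=\cos\theta$ and to realize the damped Chebyshev series as convolution with a nonnegative trigonometric kernel. The damping factors $\eta_k$ will then be the Fourier coefficients of that kernel, so they depend only on $K$ and $k$, and they lie in $[0,1]$ because the kernel is nonnegative with unit mass. The error then becomes a first absolute moment of the kernel, which must be driven down to $1/K$. I expect this last constant to be the hard part.

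Set $g(\theta)=f(\cos\theta)$. This is an even, $2\pi$-periodic function, and its cosine coefficients are exactly the Chebyshev coefficients $\gamma_k$ of $f$. Fix a nonnegative even trigonometric polynomial
\[
\kappa_K(s)=\frac{1}{2\pi}\Bigl(1+2\sum_{k=1}^K\eta_k\cos(ks)\Bigr),\qquad \int_{-\pi}^{\pi}\kappa_K=1 .
\]
Then $\frac{1}{\pi}\int\kappa_K(s)\cos(ks)\,ds=\eta_k$, and nonnegativity of $\kappa_K$ gives $|\eta_k|\le 1$. Since $\int \kappa_K(s)\cos(k(\theta-s))\,ds=\eta_k\cos(k\theta)$, the convolution $g*\kappa_K$ equals $f_K(\cos\theta)$. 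Therefore
\[
|f(\cos\theta)-f_K(\cos\theta)|\le\int\kappa_K(s)\,\bigl|f(\cos\theta)-f(\cos(\theta-s))\bigr|\,ds\le \ell\int\kappa_K(s)\,\bigl|\cos\theta-\cos(\theta-s)\bigr|\,ds .
\]
Using $|\cos\theta-\cos(\theta-s)|=2|\sin(s/2)|\,|\sin(\theta-s/2)|\le 2|\sin(s/2)|$, the whole statement reduces to one kernel inequality:
\[
\int_{-\pi}^{\pi}2|\sin(s/2)|\,\kappa_K(s)\,ds\le \frac1K .
\]
This inequality is uniform in $f$ and $\theta$.

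The remaining step is to choose a kernel that makes this first moment at most $1/K$.

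The first candidate is the Fej\'er--Korovkin kernel, which has the optimal first coefficient $\eta_1=\cos(\pi/(K+2))$. By Cauchy--Schwarz,
\[
\int 2|\sin(s/2)|\,\kappa_K\le 2\Bigl(\int\sin^2(s/2)\,\kappa_K\Bigr)^{1/2}=\sqrt{2(1-\eta_1)}=2\sin\frac{\pi}{2(K+2)}\le\frac{\pi}{K+2}.
\]
This already gives $\|f-f_K\|_\infty\le \pi\ell/(K+2)$, i.e.\ the stated form with the constant $\pi$ in place of $1$.

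Cauchy--Schwarz is wasteful, so the natural next attempt is to minimize the $L^1$ moment directly over nonnegative degree-$K$ kernels. This is a Favard/Akhiezer--Krein type extremal problem, which one can attack with the Fej\'er--Riesz representation $\kappa_K=|q|^2$ and the extremal interpolation of $|\sin(s/2)|$ at the zeros of a suitable trigonometric polynomial. One should also exploit the extra factor $|\sin(\theta-s/2)|$, which is what separates the composite function $f(\cos\theta)$ from a generic Lipschitz periodic function.

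This constant is the main obstacle. The generic periodic Favard bound only yields $\pi/(2(K+1))$, which is not below $1/K$ for $K\ge 2$. Closing the gap therefore requires using the structure $|\sin(\theta-s/2)|$ rather than bounding it by $1$. The fallback is to prove the bound with an absolute constant $C$ and apply it at degree $\lceil CK\rceil$.

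The small-$K$ cases can be handled directly. For example, at $K=1$ take $\eta_1=0$ and $\gamma_0$ equal to the midrange of $f$; then $\|f-f_1\|_\infty\le\ell$ because the range of $f$ has length at most $2\ell$.
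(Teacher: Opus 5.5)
You have not proved the statement with constant $1$, and that gap cannot be closed, because the statement as written is false for large $K$. What your kernel argument does prove correctly is the standard form of Jackson's theorem: with the Fej\'er--Korovkin kernel you get $\|f-f_K\|_{\infty,[-1,1]}\le \pi\ell/(K+2)$. You should add that $\eta_k\ge 0$, not only $|\eta_k|\le 1$. This holds because the kernel's Fourier coefficients are autocorrelations of the positive sequence $\sin((j+1)\pi/(K+2))$. Your mechanism matches the paper's. The paper does not prove this Fact; it cites \cite[Fact~3.2]{BKM22}. In \Cref{sec:jackson-approximation} it proves the refined \Cref{lem:jackson-endpoint-improvement} by convolving with the Jackson kernel $(\sin(mu/2)/\sin(u/2))^4$, $m=\lfloor K/2\rfloor+1$, and bounding $M_1\le\sqrt{M_2}$ with $M_2\lesssim 1-\widehat b_1/\widehat b_0$. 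That is your Cauchy--Schwarz step. The paper only obtains, and only uses, an unspecified universal constant $C$; your extremal kernel gives a clean explicit constant.

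No repair you list can reach constant $1$: no polynomial of degree $K$, damped Chebyshev or otherwise, has error at most $\ell/K$ for every $\ell$-Lipschitz $f$. Here is a sketch.

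Take $M\approx K^{1/3}$, $N=K+M+1$, $\theta_j=j\pi/N$, and an even nonnegative cosine polynomial $w$ of degree $M$ concentrated within $O(1/M)$ of $\pm\pi/2$. The measure $\mu=\sum_{j=0}^{2N-1}(-1)^j w(\theta_j)\delta_{\theta_j}$ annihilates every trigonometric polynomial of degree at most $K$, since every frequency that occurs satisfies $|k+m|\le K+M<N$.

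Let $g$ be even and $2\pi$-periodic with $|g'(\theta)|\le|\sin\theta|$; equivalently, $f=g\circ\arccos$ is $1$-Lipschitz. Let $g$ zigzag through $g(\theta_j)=(-1)^jc_j$, where $c_j=\frac12\min\bigl(\int_{\theta_{j-1}}^{\theta_j}|\sin|,\int_{\theta_j}^{\theta_{j+1}}|\sin|\bigr)\ge\frac{\pi}{2N}|\sin\theta_j|-O(N^{-2})$.

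For every algebraic $P$ of degree at most $K$, $\mu$ annihilates $P(\cos\theta)$. Hence $\|f-P\|_{\infty}\ge\int g\,d\mu/\|\mu\|\ge\frac{\pi}{2N}\bigl(1-O(M^{-2})\bigr)-O(N^{-2})=\frac{\pi}{2K}\bigl(1-o(1)\bigr)$. Here we used that the $w(\theta_j)$-weighted average of $\cos^2\theta_j\ge 1-|\sin\theta_j|$ is $O(M^{-2})$. This exceeds $1/K$ for large $K$. Combined with Favard's bound applied to $f\circ\cos$, the best possible constant, even for best approximation, is asymptotically $\pi/2$.

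So exploiting $|\sin(\theta-s/2)|$ cannot help: near the middle of the interval it is essentially $1$, and the Favard barrier you identified is genuine. Your fallback of applying the bound at degree $\lceil CK\rceil$ also fails, because it changes the degree the statement fixes. The correct statement is $\|f-f_K\|_\infty\le C\ell/K$ for a universal $C$, which is all the paper uses, and you have proved it with $C=\pi$.

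Finally, your $K=1$ argument is wrong as written. $\gamma_0$ is the arcsine mean of $f$, not a free constant you may set to the midrange. The conclusion still holds with $\eta_1=0$, because $|f(t)-\gamma_0|\le\ell\int|t-s|\,d\nu(s)\le\ell$ for the arcsine measure $\nu$.
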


\section{Bucketing algorithms}
\label{sec:reduction}

The first stage of both our spectrum estimation algorithms is to apply a \emph{bucketing algorithm}. 
Given a threshold $B$ and an accuracy $\eta$, the bucketing algorithm learns the eigenvalues greater than $B$ to error $\eta$ and returns a small-bucket projector $\overline{\Pi}$.
The second stage is to learn the small eigenvalues of the small-bucket state $\overline{\Pi} \rho\overline{\Pi}$. 

This two-stage bucketing framework was adopted for both unentangled spectrum estimation in \cite{PTTW26} and entangled spectrum estimation in \cite{PSTW26}. We use the same bucketing algorithms and we differ in the second stage of learning the small eigenvalues, which is the reason for our improved sample complexities. We will describe the second stage in detail in \Cref{sec:geometries} and give a full description of our spectrum estimation algorithms in \Cref{sec:rates}. In this section, we import the guarantees of the bucketing algorithms derived in \cite{PTTW26,PSTW26}.

\begin{definition}[Bucketing algorithm]\label{def:bucketing-algorithm}
Fix a threshold $0<B<1$ and an accuracy $0<\eta<1$.  A \emph{bucketing
algorithm} at $(B,\eta)$ of cost $N$ is a procedure which, using $N$ copies of $\rho$, returns an orthogonal projector $\bm{\Pi}$, the small-bucket projector
$\overline{\bm{\Pi}}=I-\bm{\Pi}$, and a sorted list of nonnegative numbers $\widehat{\bm{\alpha}}_{\mathrm{Large}}$ of length $\rank(\bPi)$ such that the following is true. 
Put
\[
 \bm{\sigma}=\overline{\bm{\Pi}}\rho\overline{\bm{\Pi}},
 \qquad
 \bm{\alpha}_{\mathrm{Large}}
 =\spec\bigl(\bm{\Pi}\rho\bm{\Pi}|_{\operatorname{ran}\bm{\Pi}}\bigr).
\]
We call $\operatorname{ran}\overline{\bm{\Pi}}$ the \emph{small-bucket subspace}
and $\bm{\sigma}$ the \emph{small-bucket state}.
The state $\bm{\sigma}$ always satisfies $\tr(\bm{\sigma})\leq1$.
With probability at least $0.99$, all three bounds below hold:
\begin{enumerate}[label=(\roman*),ref=(\roman*)]
\item\label{item:pack-list} \emph{Low error in learning the large eigenvalues.}
The large eigenvalues of $\rho$ can be estimated to error $\eta$: 
\[
 \TV(\bm{\alpha}_{\mathrm{Large}}, \, \widehat{\bm{\alpha}}_{\mathrm{Large}})\leq\eta.
\]
\item\label{item:pack-small} \emph{Low misclassification error.}
The small eigenvalues of $\rho$ are classified into the small bucket:
\[
 0\preceq\bm{\sigma}\preceq1.1B\,I_d.
\]
\item\label{item:pack-pinch} \emph{Low alignment error.}
The full spectrum of $\rho$ is disturbed by at most error $\eta$: 
\[
 \TV\!\left(\spec(\rho), \, \spec(\bm{\Pi}\rho\bm{\Pi}+\bm{\sigma})\right)\leq\eta.
\]
\end{enumerate}
\end{definition}

We use the entangled bucketing algorithm which follows from
\cite[Definition~5.2, Lemma~5.3, and Lemma~5.4(2)]{PSTW26}.

\begin{proposition}[Entangled bucketing algorithm]
\label{prop:bucketing-entangled}
For every $0<B<1$ and $0<\eta<1$, there is a bucketing algorithm at
$(B,\eta)$ of cost
\[
 N=O\!\left(\frac{d}{B\eta^2}\right),
\]
implemented using entangled measurements.
\end{proposition}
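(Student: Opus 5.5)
The proposition is imported from \cite{PSTW26}, so the plan is to restate their construction and check that its guarantees give items \ref{item:pack-list}--\ref{item:pack-pinch} of \Cref{def:bucketing-algorithm}; no new argument is needed. Concretely, I would take the procedure of \cite[Definition~5.2]{PSTW26}. On $N=C d/(B\eta^2)$ copies it performs an entangled measurement that learns an approximate projector $\bPi$ onto the span of eigenvectors of $\rho$ with eigenvalue larger than roughly $B$. For example, one can run tomography-style estimation of $\rho$ only to the operator-norm accuracy needed at scale $B$, and then threshold the resulting estimate. The procedure then re-estimates $\Pi\rho\Pi$ restricted to $\operatorname{ran}\bPi$. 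That subspace has dimension $O(1/B)$, because at most $1/B$ eigenvalues can exceed $B$. The sorted eigenvalues of this re-estimate form $\widehat{\bm\alpha}_{\mathrm{Large}}$. The fact that $\tr(\bm\sigma)\le\tr(\rho)=1$ always holds is immediate, since $\overline{\bPi}$ is a projector.

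The three items would then be checked in order.
\begin{itemize}
\item[(ii)] The bound $\bm\sigma\preceq 1.1B\,I$ follows from the threshold step. The estimate used for thresholding is $\ll B$-accurate in operator norm on the relevant part of $\rho$, so every direction left in $\operatorname{ran}\overline{\bPi}$ carries $\rho$-weight at most $1.1B$. This is \cite[Lemma~5.3]{PSTW26}.
\item[(iii)] The alignment bound on $\TV(\spec(\rho),\spec(\bPi\rho\bPi+\bm\sigma))$ is controlled by the off-diagonal blocks $\bPi\rho\overline{\bPi}$. The pinching inequality bounds the spectral change by a quantity that is quadratic in these blocks, and the learning accuracy makes that quantity at most $\eta$. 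This is \cite[Lemma~5.4(2)]{PSTW26}.
\item[(i)] Estimating an $O(1/B)$-dimensional state to TV accuracy $\eta$ costs $O(d/(B\eta^2))$ copies of $\rho$. This allows for the dimension and for the trace of the block being at most one.
\end{itemize}

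After setting failure probabilities to $0.01/3$ each, a union bound gives all three items simultaneously with probability $0.99$, and the total copy count is $O(d/(B\eta^2))$.

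The main obstacle is item (iii). The naive perturbation bound loses a factor of the dimension, and one needs the second-order pinching argument from \cite{PSTW26} to bring the cost down to $d/(B\eta^2)$ rather than something larger. Since the statement is taken directly from that paper, I would simply invoke their Lemma~5.4(2) as a black box and spend the care on matching their parameter conventions (their $B$, their $1.1$ slack, their failure probability) to \Cref{def:bucketing-algorithm}.
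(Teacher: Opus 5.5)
Your proposal matches the paper, which gives no independent argument and imports this proposition directly from \cite[Definition~5.2, Lemma~5.3, and Lemma~5.4(2)]{PSTW26}; these are the same references you invoke for the construction and for items (ii) and (iii). Your sketches of the internals and your item-by-item assignment of lemmas are your own reconstruction, not something the paper states: thresholding a tomographic estimate, re-estimating the large block on fresh copies at cost $O(\min(d,1/B)^2/\eta^2)\le O(d/(B\eta^2))$, and the quadratic pinching bound. They are not load-bearing and do not affect correctness.
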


We use the unentangled bucketing algorithm which follows from
\cite[Definition~6.1, Theorem~6.2, and Equation~(31) in its proof]{PTTW26}.

\begin{proposition}[Unentangled bucketing algorithm]
\label{prop:bucketing-single}
For every $0<B<1$ and $0<\eta\leq1/10$, there is a bucketing algorithm at
$(B,\eta)$ of cost
\[
 N=O\!\left(\frac{d}{B^2\eta^2}\right),
\]
implemented using unentangled measurements.
\end{proposition}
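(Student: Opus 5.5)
The plan is to obtain the proposition by instantiating the single-copy bucketing procedure of \cite[Definition~6.1]{PTTW26} at threshold $B$ and accuracy $c\eta$, for a suitable absolute constant $c$. We then check that its output satisfies the three requirements of \Cref{def:bucketing-algorithm}, item by item. The procedure measures $N$ copies with single-copy (random-basis) measurements and forms an empirical estimate of $\rho$. It then takes $\bm\Pi$ to be the spectral projector of this estimate onto eigenvalues above a cutoff of order $B$, and outputs the top part of its spectrum as $\widehat{\bm\alpha}_{\mathrm{Large}}$. By construction, $\overline{\bm\Pi}=I-\bm\Pi$ is an orthogonal projector, so $\bm\sigma=\overline{\bm\Pi}\rho\overline{\bm\Pi}$ is PSD with $\tr(\bm\sigma)\leq\tr\rho=1$ deterministically. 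The cost $N=O(d/(B^2\eta^2))$ is then read off from Equation~(31) in the proof of \cite[Theorem~6.2]{PTTW26}, after substituting $c\eta$ for their accuracy parameter; this only changes constants.

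Next we match the three guarantees. Item~\ref{item:pack-list}, the TV accuracy of the large eigenvalues, should be exactly the estimation guarantee for the large bucket in \cite[Theorem~6.2]{PTTW26}, which holds with error $c\eta\leq\eta$. For item~\ref{item:pack-small}, we use the misclassification bound. Every unit vector $v\in\operatorname{ran}\overline{\bm\Pi}$ has small Rayleigh quotient $\langle v,\widehat\rho v\rangle$ under the estimate $\widehat\rho$. The operator-norm concentration of $\widehat\rho-\rho$, restricted to the relevant scale, gives $\langle v,\rho v\rangle\leq 1.1B$, i.e.\ $\bm\sigma\preceq 1.1B\,I_d$. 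If their theorem states a different constant in place of $1.1$, we rescale the internal cutoff by a constant factor, which again changes $N$ only by a constant.

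For item~\ref{item:pack-pinch}, we follow the pinching argument used in \cite{PTTW26}. Write $\rho=\bm\Pi\rho\bm\Pi+\bm\sigma+(\bm\Pi\rho\overline{\bm\Pi}+\overline{\bm\Pi}\rho\bm\Pi)$. The spectral deviation is bounded via Lidskii/Mirsky-type inequalities by the trace norm of the off-diagonal block, or more sharply by a second-order bound of the form $\sum\|\bm\Pi\rho\overline{\bm\Pi}\|^2/\text{gap}$. Their Equation~(31) supplies exactly this off-diagonal estimate at accuracy $c\eta$. Finally, each of the three events holds with probability at least $1-0.01/3$ after boosting the constant in $N$ (or by a union bound over the events as stated there). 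Hence all three hold simultaneously with probability at least $0.99$, with the condition $\eta\leq1/10$ inherited from their theorem.

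The main obstacle is the translation between conventions rather than new mathematics. \cite{PTTW26} may state their guarantee with a combined error budget, a different threshold constant, or in terms of $\ell_1$ rather than TV distance. We must check that their alignment bound is indeed for $\spec(\bm\Pi\rho\bm\Pi+\bm\sigma)$ with zeros retained. If it is instead stated for a different object, we would first try to derive item~\ref{item:pack-pinch} directly from their off-diagonal bound. The tool for this is the pinching inequality $\|\spec(\rho)-\spec(\mathcal P(\rho))\|_1\lesssim\|\bm\Pi\rho\overline{\bm\Pi}\|_1$, where $\mathcal P$ is the pinching map $\rho\mapsto\bm\Pi\rho\bm\Pi+\overline{\bm\Pi}\rho\overline{\bm\Pi}$.
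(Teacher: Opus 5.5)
Your proposal takes the same route as the paper, which gives no independent argument and simply imports the proposition from \cite[Definition~6.1, Theorem~6.2, and Equation~(31) in its proof]{PTTW26}, the same sources you plan to instantiate. Your reconstruction of the internals (operator-norm concentration for item~(ii), a Lidskii-type pinching bound for item~(iii)) is not needed. If you did rederive items~(i)--(iii) with a generic matrix-Bernstein operator-norm bound, you would only get the cost up to an extra $\log d$ factor, so the stated $O(d/(B^2\eta^2))$ has to come from the cited analysis.
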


The non-zero eigenvalues of $\bm{\Pi}\rho\bm{\Pi}+\bm{\sigma}$ are the non-zero entries of $\bm{\alpha}_{\mathrm{Large}}$ together with the non-zero eigenvalues of $\bm{\sigma}$.  
By \Cref{item:pack-pinch}, it is close to the desired spectrum of $\rho$. 
Therefore, it remains to estimate the spectrum of the small-bucket state $\bm{\sigma}$. 

In \Cref{sec:geometries}, we give a self-contained analysis and two methods for recovering the spectrum of a subnormalized state $0\preceq \sigma\preceq L I_d$ using estimates of $\tr(p(\sigma))$ for some polynomial $p$. 
To apply these two methods to $\bm{\sigma} = \overline\bPi \rho \overline\bPi$, we need to obtain estimates of $\tr(p(\bsigma))$ using copies of $\rho$ and the small-bucket projector $\overline\bPi$ from the bucketing algorithm.
The corresponding entangled and unentangled measurements are described in \Cref{sec:small-bucket-measurements}, and we bound the variance of these estimators in the rest of \Cref{sec:backends}.

\section{Spectrum recovery from Chebyshev moments}
\label{sec:geometries}

This section is entirely classical and applies to both measurement
models. Given unbiased estimates of the monomial moments, we
construct a spectrum by matching estimated Chebyshev moments.
We consider two affine rescalings and prove guarantees for
recovery on the full Chebyshev interval
(\Cref{cor:cheb-recovery}) and in its interior
(\Cref{cor:buffered-recovery}). Both guarantees bound the expected
total-variation error by an approximation term plus a weighted
sum of moment variances. The subsequent variance analysis in
\Cref{sec:backends} will bound these variances for each
measurement model.

\paragraph{Setup.}
Fix a dimension $d\geq1$, a number $L>0$, and an integer $K\geq1$.
Let $\alpha=(\alpha_1,\ldots,\alpha_d)$ be the true spectrum, and
assume that $\alpha$ belongs to the compact set
\[
 \calZ_d=\braces[\bigg]{z\in[0,L]^d:z_1\geq\cdots\geq z_d,
                         \ \sum_{i=1}^d z_i\leq1 }.
\]
Both estimators will choose a spectrum from $\calZ_d$.

Let $\widehat{\bm M}_1,\ldots,\widehat{\bm M}_K$ be unbiased real
estimates of the monomial moments $\sum_i\alpha_i^m$, with finite
variances, and set $\widehat{\bm M}_0=d$.  Recall that we write random variables in bold. 
For a polynomial
$p(x)=\sum_{m=0}^Kp_mx^m$, define
\[
 \widehat{\bm F}_p=\sum_{m=0}^Kp_m\widehat{\bm M}_m.
\]
Unbiasedness gives $\E\widehat{\bm F}_p=\sum_i p(\alpha_i)$, and
therefore
\begin{equation}\label{eq:polynomial-estimation-variance}
 \Var(\widehat{\bm F}_p)
 =\E\!\left[
   \left(\widehat{\bm F}_p-\sum_{i=1}^d p(\alpha_i)\right)^2
       \right].
\end{equation}
No independence assumption on the moment estimates is needed.

For $z\in\calZ_d$, let $\mu_z=\sum_{i=1}^d\delta_{z_i}$ be its
counting measure, retaining all zero entries. Thus every counting
measure has mass $d$, while the constraint $\sum_i z_i\leq1$
bounds its first moment. For sorted spectra $\alpha,z\in\calZ_d$, recall from \Cref{sec:overview-cheb-matching} that 
\begin{equation}\label{eq:w1-list}
 W_1(\mu_\alpha,\mu_z)
 =\sum_{i=1}^d|\alpha_i-z_i|
 =2\TV(\alpha,z).
\end{equation}

Throughout, $C,c>0$ denote universal constants
that may change from line to line.

Let $T_k$ denote the degree-$k$ Chebyshev polynomial of the first kind. 
We define two rescalings that map the spectrum interval $[0,L]$ to $[-1,1]$ and $[-1/2,1/2]$, respectively.
The following two propositions define the estimators and bound their recovery errors.

\begin{proposition}[Recovery on the full Chebyshev interval]
\label{cor:cheb-recovery}
For $1\leq k\leq K$, define
\begin{equation}\label{eq:two-chebyshev-phi}
 \phi_k(x):=T_k\!\left(\frac{x-L/2}{L/2}\right)-(-1)^k.
\end{equation}
Note that $\phi_k(0)=0$.
Define the spectrum estimator as
\begin{equation*}
 \widehat{\bm z}_{\rm full}\in\argmin_{z\in\calZ_d}
 \sum_{k=1}^K\frac1{k^2}
 \left(\widehat{\bm F}_{\phi_k}-\sum_{i=1}^d\phi_k(z_i)\right)^2.
\end{equation*}
Then
\begin{equation}\label{eq:cheb-recovery}
 \E\TV(\alpha,\widehat{\bm z}_{\rm full})
 \leq C\frac{\sqrt{dL}}K
       +CL\sum_{k=1}^K
          \frac{\Var(\widehat{\bm F}_{\phi_k})}{k^2}.
\end{equation}
\end{proposition}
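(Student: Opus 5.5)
The plan follows the outline of the technical overview, made rigorous in four steps: a dual test function, a polynomial approximation, a Chebyshev coefficient bound, and an absorption argument.

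\textbf{Step 1: the dual test function.} Fix the realized estimates and write $\widehat{\bm z}=\widehat{\bm z}_{\rm full}$. Define $F(x)=\#\{i:\alpha_i\le x\}-\#\{i:\widehat z_i\le x\}$ and choose $f$ on $[0,L]$ with $f(0)=0$ and $f'(x)=-\operatorname{sgn}F(x)$. Then $f$ is $1$-Lipschitz. Integration by parts against the counting measures gives
\[
\sum_i f(\alpha_i)-\sum_i f(\widehat z_i)=\int_0^L|F|=2\TV(\alpha,\widehat{\bm z}).
\]
Since $F$ is integer valued,
\[
\int_0^L|f'|^2\le\int_0^L|F|=2\TV.
\]

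\textbf{Step 2: polynomial approximation.} Apply the DeVore endpoint-refined Jackson bound (\Cref{lem:jackson-endpoint-improvement}) to get $P_f$ of degree $\le K$ with $|P_f(x)-f(x)|\lesssim\sqrt{Lx}/K$. Because the error vanishes at $x=0$ and $f(0)=0$, we have $P_f(0)=0$. Summing over either spectrum in $\calZ_d$ and using Cauchy--Schwarz with $\sum_i z_i\le1$ bounds the approximation error by $C\sqrt{dL}/K$.

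\textbf{Step 3: the energy coefficient bound (main obstacle).} Write $P_f=\sum_{k=1}^K a_k\phi_k$, which is possible since the $\phi_k$ span the polynomials of degree $\le K$ vanishing at $0$. The goal is
\[
\sum_k k^2a_k^2\le CL\int_0^L|f'|^2 .
\]
The first thing I would try is to take $P_f$ to be the damped Chebyshev series of $g(t)=f(L(1+t)/2)$, with the damping factors in $[0,1]$, and to use orthogonality. The computation goes as follows:
\begin{itemize}
\item The coefficients of $g'$ in the Chebyshev-$U$ basis relate to $k\gamma_k$ via $T_k'=kU_{k-1}$.
\item With $t=\cos\theta$, one has $\sum_k k^2\gamma_k^2=\frac{2}{\pi}\int_0^\pi|\tfrac{d}{d\theta}g(\cos\theta)|^2d\theta=\frac2\pi\int_{-1}^1|g'(t)|^2\sqrt{1-t^2}\,dt$.
\item This is at most $\frac2\pi\int_{-1}^1|g'|^2$, which equals $\frac{L}{\pi}\int_0^L|f'|^2$.
\end{itemize}
Damping only shrinks the coefficients, so the bound survives. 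The risk is that the DeVore-type approximant of Step 2 must be this same damped series, or at least have coefficients no larger. I would verify that the damped-series construction yields both properties, namely a Jackson kernel with nonnegative kernel and endpoint-refined error. Failing that, I would bound the coefficients of the difference between the two approximants via their sup-norm error, as in \cite[Lemma~13]{MMRS25}.

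\textbf{Step 4: combining and absorbing.} Use the fact that the constant terms cancel, and split each Chebyshev moment discrepancy into an estimation error $u_k=\sum_i\phi_k(\alpha_i)-\widehat{\bm F}_{\phi_k}$ and a fitting error $v_k=\widehat{\bm F}_{\phi_k}-\sum_i\phi_k(\widehat z_i)$. Cauchy--Schwarz with the Step 3 weights gives
\[
2\TV\le C\frac{\sqrt{dL}}K+C\sqrt{L\,\TV}\Bigl[\bigl(\textstyle\sum_k u_k^2/k^2\bigr)^{1/2}+\bigl(\sum_k v_k^2/k^2\bigr)^{1/2}\Bigr].
\]
Since $\alpha\in\calZ_d$ is feasible for the argmin, $\sum_k v_k^2/k^2\le\sum_k u_k^2/k^2$. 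AM--GM then gives $\sqrt{L\TV}\,X\le \TV/2+CLX^2$, so the $\TV$ term can be absorbed into the left side:
\[
\TV\le C\sqrt{dL}/K+CL\sum_k u_k^2/k^2 .
\]
Finally, take expectations and use \eqref{eq:polynomial-estimation-variance} to get $\E u_k^2=\Var(\widehat{\bm F}_{\phi_k})$. No independence is needed, and the argmin is a measurable selection over the compact set $\calZ_d$.
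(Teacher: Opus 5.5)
Your proposal is correct and follows the paper's proof essentially step for step: the same test function $f'=-\operatorname{sgn}F$ with $\int_0^L|f'|^2\le 2\TV(\alpha,z)$, the anchored damped Jackson series, the identity $\sum_k k^2\gamma_k^2=\frac2\pi\int_{-1}^1|g'(t)|^2\sqrt{1-t^2}\,dt$ combined with damping, and the least-squares comparison followed by AM--GM absorption. The risk you flag in Step 3 is already resolved by the statement of \Cref{lem:jackson-endpoint-improvement}, which gives the endpoint-refined approximant in exactly the form $a_k=\eta_k\gamma_k$ with $\eta_k\in[0,1]$, so the fallback is unnecessary. The fallback would in fact not suffice: a sup-norm bound on the difference of two approximants adds an $O(L^2)$ term to $\sum_k k^2a_k^2$ that lacks the factor $\TV$, and this breaks the absorption step that gives a bound linear in the variances.
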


\begin{proposition}[Recovery in the interior of the Chebyshev interval]
\label{cor:buffered-recovery}
For $1\leq k\leq K$, define
\begin{equation}\label{eq:two-chebyshev-psi}
 \psi_k(x):=T_k\!\left(\frac{x-L/2}{L}\right)-T_k(-1/2).
\end{equation}
Note that $\psi_k(0)=0$.
Define the spectrum estimator as
\begin{equation*}
 \widehat{\bm z}_{\rm int}\in\argmin_{z\in\calZ_d}
 \sum_{k=1}^K\frac1{k^2}
 \left(\widehat{\bm F}_{\psi_k}-\sum_{i=1}^d\psi_k(z_i)\right)^2.
\end{equation*}
Then
\begin{equation}\label{eq:buffered-recovery}
 \E\TV(\alpha,\widehat{\bm z}_{\rm int})
 \leq C\frac{dL}{K}
       +CL\sum_{k=1}^K
          \frac{\Var(\widehat{\bm F}_{\psi_k})}{k^2}.
\end{equation}
\end{proposition}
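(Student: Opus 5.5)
We follow the same template as for \Cref{cor:cheb-recovery}, replacing the DeVore-type endpoint approximation by a uniform Jackson approximation on an enlarged interval. Fix an outcome and write $z=\widehat{\bm z}_{\rm int}$. If $z=\alpha$ there is nothing to prove. Otherwise, build the extremal test function from the counting difference $F(x)=\#\{i:\alpha_i\le x\}-\#\{i:z_i\le x\}$ on $[0,L]$. Let $f(0)=0$ and $f'(x)=-\operatorname{sgn}F(x)$, and extend $f$ constantly to $[-L/2,3L/2]$. Then $f$ is $1$-Lipschitz and, by integration by parts, $\sum_i f(\alpha_i)-\sum_i f(z_i)=\int_0^L|F|=2\TV(\alpha,z)$. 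Since $|F|\ge1$ wherever $F\neq0$, we get $\int|f'|^2\le 2\TV(\alpha,z)$, and the constant extension does not change this integral.

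\textbf{Step 1: approximation.} Set $g(t)=f(L/2+Lt)$ on $[-1,1]$. This function is $L$-Lipschitz, so \Cref{fact:jackson} gives a damped Chebyshev truncation $g_K$ with $\|g-g_K\|_\infty\le L/K$. Pulling back, $P(x)=g_K((x-L/2)/L)$. Subtracting the constant $P(0)$, which cancels in the difference of sums, we may write $P=\sum_{k=1}^K a_k\psi_k+\text{const}$ with $a_k=\eta_k\gamma_k$. Because $|P-f|\lesssim L/K$ at all $d$ points of each spectrum, the approximation error is $O(dL/K)$.

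\textbf{Step 2: the coefficient bound.} We need $\sum_k k^2a_k^2\lesssim L\int_0^L|f'|^2$. This is the main step; it is the sharpened version of \cite[Lemma~13]{MMRS25}. Use $\eta_k\le1$ and compute $\sum_k k^2\gamma_k^2$ via $t=\cos\theta$. We have $g(\cos\theta)=\gamma_0+\sum\gamma_k\cos k\theta$, so Parseval applied to $\frac{d}{d\theta}g(\cos\theta)=-\sum k\gamma_k\sin k\theta$ gives
\[
\frac{\pi}{2}\sum_k k^2\gamma_k^2=\int_0^\pi g'(\cos\theta)^2\sin^2\theta\,d\theta\le\int_{-1}^1 g'(t)^2\sqrt{1-t^2}\,dt\le\int_{-1}^1 g'(t)^2\,dt.
\]
Changing variables, $\int_{-1}^1 g'(t)^2\,dt = L\int_{-L/2}^{3L/2}f'(x)^2\,dx = L\int_0^L|f'|^2 \le 2L\,\TV(\alpha,z)$.

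\textbf{Step 3: reconstruction.} Write $\sum_i f(\alpha_i)-\sum_i f(z_i)$ as the approximation error plus $\sum_k a_k(u_k+v_k)$. Here $u_k=\sum_i\psi_k(\alpha_i)-\widehat{\bm F}_{\psi_k}$ is the estimation error and $v_k=\widehat{\bm F}_{\psi_k}-\sum_i\psi_k(z_i)$ is the fitting error. Cauchy--Schwarz with weights $k$ and $1/k$ bounds this by $\sqrt{2L\TV}\,(\|u\|_w+\|v\|_w)$, where $\|u\|_w^2=\sum_k u_k^2/k^2$. Since $\alpha\in\calZ_d$ and $z$ is the argmin, $\|v\|_w\le\|u\|_w$. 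Hence
\[
2\TV\le C\,dL/K+2\sqrt{2L\TV}\,\|u\|_w.
\]
AM--GM then gives $\TV\le C\,dL/K+CL\|u\|_w^2$. Taking expectations and applying \eqref{eq:polynomial-estimation-variance} yields $\E\|u\|_w^2=\sum_k\Var(\widehat{\bm F}_{\psi_k})/k^2$, which is the claimed bound.

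The main obstacle is Step 2. We must check that the damped Jackson polynomial on the enlarged interval inherits the Parseval bound: damping only shrinks coefficients, so this holds. We must also check that the weight $\sin^2\theta$ makes the derivative-energy integral, rather than the sup norm, control the coefficients, which the displayed computation gives. A minor point is measurability of the argmin, which follows from compactness of $\calZ_d$ via a measurable selection.
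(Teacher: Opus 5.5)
Your proposal is correct and follows essentially the same route as the paper. It uses the same extremal test function with $\int_0^L|f'|^2\le 2\TV(\alpha,z)$ and the same constant extension to $[-L/2,3L/2]$. The damped Jackson coefficients then satisfy $\sum_k k^2a_k^2\lesssim L\int_0^L|f'|^2$ by the Parseval identity, and the argument finishes with the least-squares comparison, AM--GM, and unbiasedness.

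There are only two cosmetic differences. You invoke the plain Jackson bound (\Cref{fact:jackson}) rather than \Cref{lem:jackson-endpoint-improvement}, which is all the interior case needs. You also keep the estimation and fitting errors separate, which avoids the paper's factor of $4$.
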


Matching Chebyshev moments on the full interval $[-1,1]$ gives an approximation error bound of $O(\sqrt{dL}/K)$, compared with $O(dL/K)$ for the interior $[-1/2,1/2]$. 
This improvement comes from the more accurate polynomial approximation near the Chebyshev endpoints.

However, the benefit of the interior method is that the Chebyshev derivatives are smaller in the interior, which will help control the variances of the estimated moments. 
For example, 
\begin{equation*}
 \|\phi_k'\|_{\infty,[0,L]}=\frac{2k^2}{L},
 \qquad
 \|\psi_k'\|_{\infty,[0,L]}\leq\frac{2k}{L}.
\end{equation*}
The variance analysis in \Cref{sec:backends} uses these derivatives and their higher-order counterparts to bound the moment variances.
We record the higher-order bounds and their proof in \Cref{app:chebyshev-derivatives}.

\paragraph{Proof layout.} We give the proof idea of \Cref{cor:cheb-recovery} and \Cref{cor:buffered-recovery} in \Cref{sec:moment-matching-motivation}. In \Cref{sec:approximation-error} we bound the approximation error, which corresponds to the first terms in \cref{eq:cheb-recovery} and \cref{eq:buffered-recovery}. In \Cref{sec:moment-error}, we justify the second terms in \cref{eq:cheb-recovery} and \cref{eq:buffered-recovery}. Then, finally, we give the proof of the two propositions.

\subsection{Proof idea}
\label{sec:moment-matching-motivation}

Our estimator selects a candidate spectrum $z\in \calZ_d$ whose moments best match the estimated moments.  Recall from \cref{eq:overview-wasserstein-duality} that 
\begin{equation}\label{eq:wasserstein-dual}
 W_1(\mu_\alpha, \mu_z) = 2\TV(\alpha,z)
 =\sup_{\operatorname{Lip}(f)\leq1}
   \left|\sum_{i=1}^d f(\alpha_i)-\sum_{i=1}^d f(z_i)\right|.
\end{equation}
Adding a constant to $f$ does not change the
difference so we can assume $f(0)=0$. As in \Cref{sec:overview-cheb-matching}, a natural approach
is to approximate $f$ by a low degree polynomial
$P_f(x)=\sum_{m=0}^K b_mx^m$. Adding and subtracting the sums of
$P_f$ over the two spectra gives
\begin{equation}\label{eq:approximation-moment-decomposition}
 \left|\sum_i f(\alpha_i)-\sum_i f(z_i)\right|
 \leq
 \underbrace{\sum_i\bigl(|f(\alpha_i)-P_f(\alpha_i)|
                   +|f(z_i)-P_f(z_i)|\bigr)}_{\text{approximation error}}
 +\underbrace{\left|\sum_{m=1}^K b_m
   \left(\sum_i\alpha_i^m-\sum_i z_i^m\right)\right|}
   _{\text{moment discrepancy}}.
\end{equation}

The first term depends on how accurately $P_f$ approximates
$f$ at the two spectra. The second depends on how closely
their moments agree, weighted by the coefficients of $P_f$.
Large coefficients can amplify the moment differences, so
an accurate approximation alone is not enough: we also
need control of its coefficients.

Following the Chebyshev moment-matching method of \cite{MMRS25},
we express the approximating polynomials in the Chebyshev basis. The Jackson construction below
provides both an approximation bound and a useful
description of the coefficients: each nonconstant
coefficient is obtained by damping the corresponding
Chebyshev coefficient of the function being approximated.

We first bound the approximation error for the two
rescalings. We then bound the coefficients and show how
the least-squares objective controls the moment
discrepancy through the variances of the estimated
Chebyshev moments. Finally, we combine these bounds
using a function attaining the supremum in
\cref{eq:wasserstein-dual}.

\subsection{Controlling the approximation error}\label{sec:approximation-error}

\begin{lemma}[Jackson approximation near the endpoints]
\label{lem:jackson-endpoint-improvement}
Let $f:[-1,1]\to\R$ be $1$-Lipschitz and let $K\geq1$. Write the Chebyshev expansion of $f$ as
\[
 f(t)=\gamma_0+\sum_{k\geq1}\gamma_kT_k(t).
\]
There exist damping factors $\eta_1,\ldots,\eta_K\in[0,1]$,
depending only on $K$ and the coefficient index, such that
the polynomial
\[
 Q_f(t)=\gamma_0+\sum_{k=1}^K a_kT_k(t),
 \qquad a_k=\eta_k\gamma_k,
\]
satisfies
\[
 |f(t)-Q_f(t)|
 \leq C\left(\frac{\sqrt{1-t^2}}{K}+\frac{1}{K^2}\right),
 \qquad t\in[-1,1].
\]
Moreover, if $f(-1)=0$, the anchored polynomial
$P_f(t)=Q_f(t)-Q_f(-1)  =\sum_{k=1}^K a_k\bigl(T_k(t)-(-1)^k\bigr)$ satisfies
\[
 |f(t)-P_f(t)|\leq C\frac{\sqrt{1+t}}{K},
 \qquad t\in[-1,1].
\]
Here $C$ is a universal constant.
\end{lemma}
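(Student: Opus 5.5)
The plan is to transfer everything to the circle via $t=\cos\theta$ and use the Jackson kernel. Set $g(\theta)=f(\cos\theta)$. This is an even, $2\pi$-periodic function, and it is $1$-Lipschitz because $|\tfrac{d}{d\theta}\cos\theta|\le 1$. Its Fourier cosine coefficients are exactly the Chebyshev coefficients $\gamma_k$.

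Take $J_K$ to be the Jackson kernel of degree at most $K$: a nonnegative even trigonometric polynomial of the form
\[
J_K(\theta)=\tfrac{1}{2\pi}\Bigl(1+2\sum_{k=1}^K\eta_k\cos k\theta\Bigr),
\]
normalized so that $\int_{-\pi}^{\pi} J_K=1$, with
\[
\int_{-\pi}^{\pi} J_K(s)\,|s|^j\,ds\le C K^{-j}\qquad\text{for } j=1,2.
\]
The standard example is a suitably normalized $(\sin(ms/2)/\sin(s/2))^4$ with $m\approx K/2$. Since $J_K$ is a nonnegative kernel, $\eta_k\in[0,1]$ and these factors depend only on $K$ and $k$. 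Convolving gives
\[
Q_f(\cos\theta)=(g*J_K)(\theta)=\gamma_0+\sum_{k\le K}\eta_k\gamma_k\cos k\theta,
\]
which is a polynomial of the stated form.

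The key step is a pointwise error estimate that exploits $C^{1}$-like structure without assuming smoothness of $f$. Write
\[
Q_f(\cos\theta)-f(\cos\theta)=\tfrac12\int J_K(s)\bigl[g(\theta+s)+g(\theta-s)-2g(\theta)\bigr]\,ds,
\]
using evenness of $J_K$. I then bound the symmetric difference by a first-order-in-$s$ quantity weighted by $\sin\theta$, plus a second-order remainder:
\[
|\cos(\theta\pm s)-\cos\theta|\le|\sin\theta|\,|s|+s^2/2 .
\]
Hence, by the Lipschitz property of $f$ alone,
\[
|g(\theta\pm s)-g(\theta)|\le \sqrt{1-t^2}\,|s|+s^2/2 .
\]
Even without cancellation between the two terms, integrating against $J_K$ gives
\[
|f(t)-Q_f(t)|\le \sqrt{1-t^2}\int J_K|s|+\int J_K s^2\le C\Bigl(\frac{\sqrt{1-t^2}}{K}+\frac{1}{K^2}\Bigr).
\]
The main point to check is the moment bound $\int J_K s^2\lesssim K^{-2}$. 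This is exactly why the fourth-power kernel is needed rather than Fejér's kernel, whose second moment is only $O(1/K)$. I expect this choice of kernel to be the main technical obstacle, but it is standard.

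For the anchored statement, note that the first bound at $t=-1$ gives $|Q_f(-1)|=|Q_f(-1)-f(-1)|\le C/K^2$. Then
\[
|f(t)-P_f(t)|\le C\Bigl(\frac{\sqrt{1-t^2}}{K}+\frac{2}{K^2}\Bigr).
\]
The first term is at most $\sqrt2\,\sqrt{1+t}/K$. The second term requires case analysis. If $1+t\ge K^{-2}$, then $K^{-2}\le \sqrt{1+t}/K$. If $1+t<K^{-2}$, I instead bound directly, using Lipschitzness of $f$ and of $P_f$ near $-1$. Markov's inequality gives $\|Q_f'\|_\infty\le K^2\|Q_f\|_\infty$, which is too weak, so I would rather argue via $\theta$. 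With $t=\cos\theta$ and $\theta$ near $\pi$, we have $\pi-\theta\approx\sqrt{2(1+t)}\le C/K$. Let $h(\theta)=Q_f(\cos\theta)-f(\cos\theta)$. Then $h(\pi)=Q_f(-1)$, and I need
\[
|h(\theta)-h(\pi)|\le C|\pi-\theta|/K.
\]
This holds because $g*J_K-g$ has $\theta$-Lipschitz constant at most $2$, together with the direct estimate above. Combining gives $|f(t)-P_f(t)|\le C(\pi-\theta)/K\cdot(\cdots)$; more simply, $|f-P_f|\le|f(t)|+|P_f(t)|$ is $O(1+t)+O(\pi-\theta)$ with constants, which is $\le C\sqrt{1+t}/K$ since $1+t\le K^{-2}$ implies $1+t\le\sqrt{1+t}/K$, and $(\pi-\theta)$ is handled via the $\theta$-Lipschitz bound of $(g*J_K)$ and $g$ being $O(\pi-\theta)\cdot(\pi-\theta)$ near the endpoint. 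I would verify this last small-region case carefully, using that $g$ is even about $\pi$ so both $g$ and $g*J_K$ have vanishing first-order variation there.
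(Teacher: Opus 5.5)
Your setup, the moment bounds for the fourth-power Jackson kernel, and the pointwise estimate via $|\cos(\theta\pm s)-\cos\theta|\le\sin\theta\,|s|+s^2/2$ all coincide with the paper's argument. The anchored case $1+t\ge K^{-2}$, handled via $|Q_f(-1)|\le C/K^2$, also matches. One minor point: nonnegativity of $J_K$ gives only $|\eta_k|\le 1$. The lower bound $\eta_k\ge0$ holds because this kernel is the square of the Fej\'er kernel, whose Fourier coefficients are nonnegative.

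The gap is the regime $\delta:=1+t<K^{-2}$. You correctly reduce to showing $|h(\theta)-h(\pi)|\le C(\pi-\theta)/K$, but nothing you invoke yields it. Since $\sqrt{1+t}\asymp\pi-\theta$, the target carries a factor $1/K$ that the global bound ``Lipschitz constant at most $2$'' cannot supply. The pointwise estimate gives only $|h(\theta)-h(\pi)|\le C/K^2$ there. Neither of these bounds, nor their minimum, is $O((\pi-\theta)/K)$ as $\theta\to\pi$. Likewise, ``$|P_f(t)|=O(\pi-\theta)$'' misses the target by exactly a factor of $K$. The fact that $(g*J_K)'(\pi)=0$ by evenness is also not enough on its own: you need $(g*J_K)'$ to be $O(1/K)$ on the whole $O(1/K)$-neighborhood of $\pi$.

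The missing step is to push the quadratic flatness of $g$ at $\pi$ through the convolution, at the price of only the first kernel moment $M_1=\int J_K(s)|s|\,ds$. The paper does this by comparing the two convolution integrands directly:
\[
 |g(\theta-u)-g(\pi-u)|\le|\cos(\theta-u)+\cos u|
 \le(1+\cos\theta)|\cos u|+\sin\theta\,|\sin u|\le\delta+\sin\theta\,|u|.
\]
Hence $|P_f(\cos\theta)|=|(g*J_K)(\theta)-(g*J_K)(\pi)|\le\delta+\sin\theta\,M_1\le\delta+C\sqrt\delta/K$, using $\sin\theta\le\sqrt{2\delta}$. Equivalently, your Lipschitz idea works if you use the \emph{local} bound $|(g*J_K)'(\xi)|\le\int|\sin(\xi-s)|J_K(s)\,ds\le\sin\xi+M_1$, integrated over $[\theta,\pi]$, instead of the global constant $2$. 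Combine this with $|f(t)|\le\delta$ (from $f(-1)=0$) and with $\delta\le\sqrt\delta/K$ when $\delta\le K^{-2}$. This gives $|f(t)-P_f(t)|\le 2\delta+C\sqrt\delta/K\le C\sqrt{1+t}/K$, which closes the case.
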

We give a proof of this in \Cref{sec:jackson-approximation}.
Note that the anchored polynomial $P_f$ has only the constant coefficient different from $Q_f$, and the other Chebyshev coefficients are the same. The first estimate gives error $O(1/K)$ throughout the interval,
improving to $O(1/K^2)$ at its endpoints. The second estimate
uses exact agreement at $-1$ to obtain a bound that vanishes
there.

The first terms in \Cref{cor:cheb-recovery} (\cref{eq:cheb-recovery}) and \Cref{cor:buffered-recovery} (\cref{eq:buffered-recovery}) correspond to the approximation error (\cref{eq:approximation-moment-decomposition}). With Jackson's theorem in hand, we will now justify these terms. 
Note that our spectra lie in $[0,L]$, so we consider two affine maps into
the Chebyshev interval:
\[
 x\longmapsto\frac{x-L/2}{L/2}\in[-1,1],
 \qquad
 x\longmapsto\frac{x-L/2}{L}\in[-1/2,1/2], 
\]
corresponding to the rescaling of $T_k$ in the definitions of $\phi_k$ in \cref{eq:two-chebyshev-phi} and $\psi_k$ in \cref{eq:two-chebyshev-psi}.

The first map sends the spectral endpoints $0$ and $L$ to the
Chebyshev endpoints $-1$ and $1$. It therefore preserves the
endpoint improvement. In particular, rescaling the anchored
estimate gives approximation error $O(\sqrt{Lx}/K)$ at
$x\in[0,L]$.

Replacing $f$ by $P_f$ in the two spectral sums contributes
the approximation error in
\cref{eq:approximation-moment-decomposition}.
For the full-interval construction which sends $[0,L]$ to $[-1,1]$, and any $z\in\calZ_d$, we get using \Cref{lem:jackson-endpoint-improvement}
\begin{equation}\label{eq:full-approximation-error}
 \sum_{i=1}^d|f(\alpha_i)-P_f(\alpha_i)|
  +\sum_{i=1}^d|f(z_i)-P_f(z_i)|\leq\frac{C\sqrt L}{K}
   \left(\sum_{i=1}^d\sqrt{\alpha_i}
        +\sum_{i=1}^d\sqrt{z_i}\right)
 \leq\frac{C\sqrt{dL}}{K}.
\end{equation}
The last inequality uses Cauchy--Schwarz and the constraints
$\sum_i\alpha_i\leq1$ and $\sum_i z_i\leq1$.
This gives the approximation term in
\cref{eq:cheb-recovery}.

The second map sends the spectral endpoints $0$ and $L$
to $-1/2$ and $1/2$, strictly inside the Chebyshev interval.
On this smaller interval, $\sqrt{1-t^2}$ is bounded away
from zero, so the estimate in
\Cref{lem:jackson-endpoint-improvement} gives the uniform
error bound $\|f-P_f\|_{\infty,[0,L]}\leq CL/K$ after
rescaling. Thus this estimate does not give an additional
improvement at the spectral endpoints. Summing the errors
over both spectra gives, for any $z\in\calZ_d$,
\begin{equation}\label{eq:interior-approximation-error}
 \sum_{i=1}^d|f(\alpha_i)-P_f(\alpha_i)|
  +\sum_{i=1}^d|f(z_i)-P_f(z_i)|\leq2d\|f-P_f\|_{\infty,[0,L]}
 \leq\frac{CdL}{K}.
\end{equation}
This gives the approximation term in
\cref{eq:buffered-recovery}. 
We remark that more generally, we could map $[0,L]$ to $[-a,a]$ for any
$a\in(0,1)$. Taking $a$ small increases the Jackson
approximation bound to $O(L/(aK))$, while taking $a$ close
to $1$ worsens the derivative bounds near the Chebyshev
endpoints. The choice $a=1/2$ is convenient because it stays
away from both extremes. Any fixed $a\in(0,1)$ gives the
same asymptotic guarantees, with constants depending on $a$.

We now justify the moment discrepancy in
\cref{eq:approximation-moment-decomposition} to obtain the second
terms in \cref{eq:cheb-recovery,eq:buffered-recovery}.

\subsection{Controlling the moment discrepancy}\label{sec:moment-error}

\paragraph{Fitting the estimated moments.}
We first begin with justifying the minimizer objective in \Cref{cor:cheb-recovery} and \Cref{cor:buffered-recovery}. Recall that the moment discrepancy is the term $\left|\sum_i P(\alpha_i)-\sum_i P(z_i)\right|$ in \cref{eq:approximation-moment-decomposition}. Suppose $P=\sum_{k=1}^K a_k q_k$, where $q_k$ denotes either
$\phi_k$ for every $k$ or $\psi_k$ for every $k$.  If
$\Delta_k(z) :=\sum_i(q_k(\alpha_i)-q_k(z_i))$, then Cauchy--Schwarz gives us 
\begin{equation}\label{eq:weighted-coefficient-cauchy-schwarz}
 \left|\sum_i P(\alpha_i)-\sum_i P(z_i)\right|
 \leq
 \left(\sum_{k=1}^K k^2a_k^2\right)^{1/2}
 \left(\sum_{k=1}^K\frac{\Delta_k(z)^2}{k^2}\right)^{1/2}.
\end{equation}
This inequality suggests controlling
$\sum_k\Delta_k(z)^2/k^2$. However, the true moments
$\sum_i q_k(\alpha_i)$ are unknown. We therefore replace them
by their estimates and minimize the observable objective
\[
 \sum_{k=1}^K\frac1{k^2}
 \left(\widehat{\bm F}_{q_k}-\sum_i q_k(z_i)\right)^2.
\]
This replacement can be justified by comparing the estimated
spectrum $\widehat{\bm z}$ with the true spectrum $\alpha$.
Since $\alpha$ is feasible, minimality gives
\[
 \sum_{k=1}^K\frac1{k^2}
 \left(\widehat{\bm F}_{q_k}
       -\sum_i q_k(\widehat{\bm z}_i)\right)^2
 \leq
 \sum_{k=1}^K\frac1{k^2}
 \left(\widehat{\bm F}_{q_k}-\sum_i q_k(\alpha_i)\right)^2.
\]
The triangle inequality for the weighted Euclidean norm
therefore implies
\begin{equation} \label{eq:moment_expectation_error}
 \sum_{k=1}^K\frac{\Delta_k(\widehat{\bm z})^2}{k^2}
 \leq
 4\sum_{k=1}^K\frac1{k^2}
 \left(\widehat{\bm F}_{q_k}-\sum_i q_k(\alpha_i)\right)^2.
\end{equation}

Taking expectations and using unbiasedness yields
\begin{equation*} 
    \E\left[\sum_{k=1}^K
       \frac{\Delta_k(\widehat{\bm z})^2}{k^2}\right]
 \leq
 4\sum_{k=1}^K\frac{\Var(\widehat{\bm F}_{q_k})}{k^2}.
\end{equation*}

Thus minimizing the observable objective controls the weighted
moment discrepancies by the weighted sum of estimation
variances. It remains to control the coefficient factor
$\sum_{k=1}^K k^2a_k^2$ in
\cref{eq:weighted-coefficient-cauchy-schwarz}.

\paragraph{Coefficient bounds for the Jackson approximations.}
To control the
moment discrepancy through
\cref{eq:weighted-coefficient-cauchy-schwarz}, we now need a bound on
$\sum_k k^2a_k^2$. Recall that $a_k$ is the coefficient of the 
polynomial approximating $f$, in the basis $\phi_k$ or $\psi_k$. By the Jackson
construction in \Cref{lem:jackson-endpoint-improvement},
these coefficients are obtained by damping the Chebyshev
coefficients of the rescaled Lipschitz function. We use this
property to bound $\sum_{k=1}^K k^2a_k^2$ in terms of
$\int_0^L|f'(x)|^2\diff x$. Orthogonality of the Chebyshev polynomials relates this sum to the derivative of the function being approximated. We now prove a bound on the weighted coefficient.

\begin{lemma}[Coefficient energy of the Jackson approximations]
\label{lem:jackson-coefficient-energy}
Let $f:[0,L]\to\R$ be $1$-Lipschitz with $f(0)=0$.
For either choice $q_k=\phi_k$ for all $k$ or $q_k=\psi_k$
for all $k$, there exists a polynomial
\[
 P_f(x)=\sum_{k=1}^K a_kq_k(x)
\]
such that, for every $x\in[0,L]$,
\[
 |f(x)-P_f(x)|
 \leq
 \begin{cases}
  C\sqrt{Lx}/K,&q_k=\phi_k,\\
  CL/K,&q_k=\psi_k,
 \end{cases}
\]
and
\begin{equation}\label{eq:jackson-coefficient-bound}
 \sum_{k=1}^K k^2a_k^2
 \leq CL\int_0^L|f'(x)|^2\diff x.
\end{equation}
Here $f'$ denotes the almost-everywhere derivative of $f$.
\end{lemma}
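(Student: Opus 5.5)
The plan is to reduce to the Chebyshev interval, use \Cref{lem:jackson-endpoint-improvement} for the approximation part, and use orthogonality of the Chebyshev polynomials to convert the weighted coefficient sum into a weighted $L^2$ norm of the derivative.

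\emph{Case $q_k=\phi_k$.} Set $g(t)=f(L(1+t)/2)$ on $[-1,1]$. Then $g$ is $(L/2)$-Lipschitz with $g(-1)=0$. Apply \Cref{lem:jackson-endpoint-improvement} to $2g/L$ and rescale back. The anchored polynomial gives $P_f(x)=\sum_k a_k\phi_k(x)$ with $a_k=\eta_k\gamma_k$, where the $\gamma_k$ are the Chebyshev coefficients of $g$. Since $1+t=2x/L$, the error bound is $C(L/2)\sqrt{2x/L}/K\le C\sqrt{Lx}/K$.

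\emph{Case $q_k=\psi_k$.} Extend $f$ constantly outside $[0,L]$ to $[-L/2,3L/2]$. The extension is still $1$-Lipschitz. Set $g(t)=f(L/2+Lt)$, which is $L$-Lipschitz on $[-1,1]$. Use the first (non-anchored) estimate of the lemma, $|g-Q|\le CL/K$, and subtract the constant $Q(-1/2)$. This uses $f(0)=g(-1/2)=0$ and costs at most another $CL/K$. The resulting polynomial is $\sum_k a_k\psi_k$ with the same damped coefficients $a_k=\eta_k\gamma_k$, so it has error $CL/K$ on $[0,L]$.

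\emph{Coefficient bound.} Because $\eta_k\in[0,1]$, it suffices to show
\[
\sum_{k\ge1}k^2\gamma_k^2\le C\int_{-1}^1\sqrt{1-t^2}\,g'(t)^2\,dt.
\]
Substitute $t=\cos\theta$ and let $h(\theta)=g(\cos\theta)=\gamma_0+\sum_k\gamma_k\cos(k\theta)$. Then $h$ is Lipschitz, hence absolutely continuous. Parseval for the derivative of the cosine series gives
\[
\frac{\pi}{2}\sum_k k^2\gamma_k^2=\int_0^\pi h'(\theta)^2\,d\theta=\int_0^\pi \sin^2\theta\, g'(\cos\theta)^2\,d\theta=\int_{-1}^1\sqrt{1-t^2}\,g'(t)^2\,dt .
\]
Equivalently, this follows from orthogonality of $U_{k-1}=T_k'/k$ with respect to $\sqrt{1-t^2}$. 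Bounding $\sqrt{1-t^2}\le1$ leaves $\int_{-1}^1 g'(t)^2\,dt$.

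\emph{Rescaling.} In case $\phi$, $g'(t)=(L/2)f'(x)$ and $dt=(2/L)\,dx$, so $\int_{-1}^1 g'^2\,dt=(L/2)\int_0^L f'^2\,dx$. In case $\psi$, $g'(t)=Lf'(x)$ and $dt=dx/L$, so $\int_{-1}^1 g'^2\,dt=L\int f'^2\,dx$. The constant extension has $f'=0$ outside $[0,L]$, so this integral is again $L\int_0^L f'^2\,dx$. Either way we get \eqref{eq:jackson-coefficient-bound}.

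The main point needing care is justifying Parseval for the derivative, i.e.\ that the termwise-differentiated cosine series is the Fourier series of $h'\in L^2$. This holds because $h$ is Lipschitz and even: integration by parts shows the sine coefficients of $h'$ are $-k\gamma_k$. One must also check that the Chebyshev coefficients in \Cref{lem:jackson-endpoint-improvement} are exactly these $\gamma_k$, which they are by definition.
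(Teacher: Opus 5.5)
Your proposal is correct and follows the paper's proof essentially step for step. It uses the same rescaled function $g$ (with the constant extension in the interior case), applies \Cref{lem:jackson-endpoint-improvement} and then subtracts the value at $x=0$, and relies on the identity $\sum_{k\ge1}k^2\gamma_k^2=\frac{2}{\pi}\int_{-1}^1|g'(t)|^2\sqrt{1-t^2}\,dt$ followed by $\sqrt{1-t^2}\le1$ and the same change of variables. The only difference is that you derive this identity yourself via Parseval for $h(\theta)=g(\cos\theta)$, whereas the paper imports it from the proof of Lemma~13 of \cite{MMRS25}.
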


\begin{proof}
For the full-interval basis $q_k=\phi_k$, define
\[
 g(t)=f\!\left(\frac L2(1+t)\right),
 \qquad t\in[-1,1].
\]
For the interior basis $q_k=\psi_k$, define
\[
 g(t)=
 \begin{cases}
  0,&-1\leq t<-1/2,\\
  f(L/2+Lt),&-1/2\leq t\leq1/2,\\
  f(L),&1/2<t\leq1.
 \end{cases}
\]
In the second case, we extend the rescaled function constantly
outside $[-1/2,1/2]$ so that it is defined on the entire
Chebyshev interval. Since $f(0)=0$, this extension is continuous.
The function $g$ is $L/2$-Lipschitz in the full-interval case
and $L$-Lipschitz in the interior case.

In either case, write the Chebyshev expansion of $g$ as
\[
 g(t)=\sum_{k\geq0}\gamma_kT_k(t).
\]

We construct $P_f$ by applying
\Cref{lem:jackson-endpoint-improvement} to a rescaled function
and subtracting the resulting polynomial's value at $x=0$.
We then use coefficient damping to prove
\cref{eq:jackson-coefficient-bound}.
Using the proof of Lemma 13 of \cite{MMRS25}, we get that 
\begin{align*}
 \sum_{k\geq1}k^2\gamma_k^2
 &=\frac2\pi\int_{-1}^1|g'(t)|^2\sqrt{1-t^2}\diff t.
\end{align*}

For either Jackson approximation, $a_k=\eta_k\gamma_k$ with
$0\leq\eta_k\leq1$.  Consequently,
\begin{equation}\label{eq:coefficient-damping-transfer}
 \sum_{k=1}^K k^2a_k^2
 \leq\sum_{k\geq1}k^2\gamma_k^2
 =\frac2\pi\int_{-1}^1|g'(t)|^2\sqrt{1-t^2}\diff t.
\end{equation}
Using $\sqrt{1-t^2}\leq1$ in
\cref{eq:coefficient-damping-transfer}, we obtain
\[
 \sum_{k=1}^K k^2a_k^2
 \leq\frac2\pi\int_{-1}^1|g'(t)|^2\diff t
 \leq\frac{2L}{\pi}\int_0^L|f'(x)|^2\diff x.
\]
For the last inequality, the substitution $x=L(1+t)/2$
in the full-interval case gives
$\int_{-1}^1|g'|^2=(L/2)\int_0^L|f'|^2$.
In the interior case, $g'$ vanishes outside $[-1/2,1/2]$,
and the substitution $x=L/2+Lt$ gives
$\int_{-1}^1|g'|^2=L\int_0^L|f'|^2$.
This proves \cref{eq:jackson-coefficient-bound}.
\end{proof}

In \Cref{sec:common-recovery-proof}, for each candidate spectrum
$z\in\calZ_d$, we will construct a $1$-Lipschitz function
$f:[0,L]\to\R$ with $f(0)=0$ such that
\[
  \sum_{i=1}^d f(\alpha_i)-\sum_{i=1}^d f(z_i)
  =W_1(\mu_\alpha,\mu_z)=2\TV(\alpha,z).
\]
Thus this function realizes the supremum in
\cref{eq:wasserstein-dual}. It also satisfies
\[
  \int_0^L |f'(x)|^2\diff x\leq 2\TV(\alpha,z).
\]
For this choice of $f$, \cref{eq:jackson-coefficient-bound}
therefore gives $\sum_{k=1}^K k^2a_k^2\leq CL\,\TV(\alpha,z)$.
We will combine this coefficient bound with the bound on
the estimated moment discrepancies to prove
\Cref{cor:cheb-recovery,cor:buffered-recovery}.

\subsection{A common proof of the recovery guarantees}
\label{sec:common-recovery-proof}

We prove \Cref{cor:cheb-recovery,cor:buffered-recovery} together.
The optimal test function, the coefficient bound, and the
least-squares comparison are the same in the two arguments.
The approximation error is the only step where the calculations
differ.

\begin{proof}[Proof of \Cref{cor:cheb-recovery,cor:buffered-recovery}]
Let $q_k=\phi_k$ for all $k$, or $q_k=\psi_k$ for all $k$.
We first bound $\TV(\alpha,z)$ for an arbitrary candidate
$z\in\calZ_d$ in terms of its moment discrepancies.

\paragraph{Choosing the test function.}
Define the cumulative count difference
\[
 F(x)=\#\{i:\alpha_i\leq x\}-\#\{i:z_i\leq x\}.
\]
The cumulative representation of Wasserstein
distance and
\cref{eq:w1-list} give
\[
 \int_0^L|F(x)|\diff x
 =W_1(\mu_\alpha,\mu_z)=2 \TV(\alpha,z).
\]
Define
\[
 f(x) := -\int_0^x\operatorname{sgn}(F(t))\diff t,
 \qquad \operatorname{sgn}(0)=0.
\]
Then $f(0)=0$ and $f$ is $1$-Lipschitz, since
$|f'|\leq1$ almost everywhere.
Using $f(x)=f(L)-\int_x^L f'(t)\diff t$ and the definition of
$F$, we obtain
\begin{equation*}
 \sum_{i=1}^d\bigl(f(\alpha_i)-f(z_i)\bigr) =-\int_0^L f'(t)F(t)\diff t =\int_0^L|F(t)|\diff t=2\TV(\alpha,z).
\end{equation*}
Moreover, $F$ is integer-valued because both measures
are counting measures with unit weights. Hence
$|f'|^2=\indic{F\ne0}\leq|F|$ almost everywhere, and
\begin{equation}\label{eq:optimal-test-function-energy}
 \int_0^L|f'(x)|^2\diff x
 \leq\int_0^L|F(x)|\diff x=2\TV(\alpha,z).
\end{equation}

\paragraph{Applying the preceding bounds.}
Choose $P_f=\sum_{k=1}^K a_kq_k$ from
\Cref{lem:jackson-coefficient-energy}.
Together with \cref{eq:optimal-test-function-energy},
its coefficient bound gives
\begin{equation*}
 \sum_{k=1}^K k^2a_k^2\leq CL\TV(\alpha,z).
\end{equation*}
The approximation bounds summed over the two spectra in \cref{eq:full-approximation-error,eq:interior-approximation-error} in
\Cref{sec:approximation-error} give
\[
 \sum_{i=1}^d|f(\alpha_i)-P_f(\alpha_i)|
 +\sum_{i=1}^d|f(z_i)-P_f(z_i)|
 \leq CA,
 \qquad
 A=
 \begin{cases}
  \sqrt{dL}/K,&q_k=\phi_k,\\
  dL/K,&q_k=\psi_k.
 \end{cases}
\]
Recall that $\Delta_k(z)=\sum_i(q_k(\alpha_i)-q_k(z_i))$.

Applying \cref{eq:approximation-moment-decomposition}
and \cref{eq:weighted-coefficient-cauchy-schwarz} now yields
\begin{align*}
 2\TV(\alpha,z)
 &\leq CA+\left|\sum_{k=1}^K a_k\Delta_k(z)\right|\leq CA+C\sqrt{L\TV(\alpha,z)}
       \left(\sum_{k=1}^K\frac{\Delta_k(z)^2}{k^2}\right)^{1/2}\\
       &\leq CA+\TV(\alpha,z)
       +CL\sum_{k=1}^K\frac{\Delta_k(z)^2}{k^2}.
\end{align*}
The last step uses $2ab\leq a^2+b^2$.
Subtracting $\TV(\alpha,z)$ proves that, for every $z\in\calZ_d$,
\begin{equation}\label{eq:deterministic-recovery}
 \TV(\alpha,z)
 \leq CA+CL\sum_{k=1}^K\frac{\Delta_k(z)^2}{k^2}.
\end{equation}

\paragraph{Applying the bound to the estimator.}
Let $\widehat{\bm z}$ be the corresponding least-squares estimator.
Taking expectations in \cref{eq:deterministic-recovery} with
$z=\widehat{\bm z}$ gives
\[
 \E\TV(\alpha,\widehat{\bm z})
 \leq CA+CL\sum_{k=1}^K
       \frac{\E[\Delta_k(\widehat{\bm z})^2]}{k^2}
\leq CA+CL\sum_{k=1}^K
       \frac{\Var(\widehat{\bm F}_{q_k})}{k^2}.
\]
The last inequality uses the least-squares comparison in
\cref{eq:moment_expectation_error} and the variance identity in
\cref{eq:polynomial-estimation-variance}.
The choices $q_k=\phi_k$, $A=\sqrt{dL}/K$ and
$q_k=\psi_k$, $A=dL/K$ prove
\Cref{cor:cheb-recovery,cor:buffered-recovery}, respectively.
\end{proof}

\begin{remark}[Efficient Recovery] \label{rmk:poly-cheb-recovery}
The estimators in \Cref{cor:cheb-recovery,cor:buffered-recovery}
can be replaced by estimators with the same error guarantees,
up to universal constants and an arbitrarily small additive
tolerance $\eta>0$. Given the moment estimates, recovery takes
$\operatorname{poly}(d,K,1/\eta)$ time. Following the convex moment-fitting approach of \cite{MMRS25},
we optimize over nonnegative weights on a sufficiently fine grid,
subject to total mass $d$ and first moment at most $1$. While this may produce a solution with fractional weights, we can
round the cumulative weights to the nearest integers, obtaining a
list of exactly $d$ eigenvalues. This rounding, followed by rescaling
if their sum exceeds $1$, preserves the recovery guarantee up to
universal constant factors.
\end{remark}

\section{Variance of the estimated Chebyshev moments}
\label{sec:backends}

We bound the variances of the estimated Chebyshev moments needed
in the recovery guarantees of \Cref{sec:geometries}.

Condition on the bucketing output. 
Let $\overline\Pi$ be the small-bucket projector and write
$\sigma=\overline\Pi\rho\overline\Pi$, so $\tr(\sigma)\leq1$.
Fix $L>0$ such that $0\preceq\sigma\preceq LI_d$.
Let $n$ be the number of fresh input copies of $\rho$, including the discarded ones, used by either measurement scheme.

The unentangled and entangled measurement procedures described in
\Cref{sec:small-bucket-measurements} yield unbiased polynomial moment
estimates $\widehat{\bm F}_p$ satisfying the following variance bound.
We then apply it to $\phi_k$ and $\psi_k$.
We write $p^{(r)}$ for the $r$th derivative of the scalar polynomial $p$.

\begin{lemma}[Polynomial variance bound]
\label{lem:polynomial-variance}
Let $1\leq k\leq n$ and let $p$ be a polynomial of degree at most $k$.
For either measurement scheme,
\begin{equation}\label{eq:polynomial-variance}
 \Var(\widehat{\bm F}_p)
 \leq\sum_{r=1}^k\frac{c_r}{\binom nr}
       \left(\frac{\|p^{(r)}\|_{\infty,[0,L]}}{r!}\right)^2,
\end{equation}
where, for $r\geq1$,
\begin{equation}\label{eq:measurement-variance-factor}
 c_r:=\frac1{(r-1)!}
 \begin{cases}
  3^r d^{\uparrow r},
      &\text{unentangled},\\[4pt]
  \displaystyle\prod_{j=0}^{r-1}(1+jL),
      &\text{entangled}.
 \end{cases}
\end{equation}
\end{lemma}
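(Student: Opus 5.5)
We prove the bound separately for the two measurement schemes, but both follow the same Efron--Stein template. The entangled case follows the technical overview directly. The unentangled case needs a classical Efron--Stein decomposition in place of the quantum one.

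\emph{Entangled case.} We write the estimator as the measured value of the permutation-invariant observable $\mathsf F_p=\sum_m p_m\mathsf M_m$ on $\rho^{\otimes n}$. The mean identity \eqref{eq:overview-observable-mean} holds for every density matrix $\tau$, so we may apply the quantum Efron--Stein decomposition (\cref{lem:quantum-efron-stein}). This gives $\Var(\widehat{\bm F}_p)=\sum_r\binom nr\tr(\rho^{\otimes r}G_r^2)$. We then perturb $\rho\to\rho+t\Delta$ and compare Taylor coefficients of $g_H(t)=\tr(p(\sigma+tH))$. This identifies $\tr(G_r\Delta^{\otimes r})=g_H^{(r)}(0)/n^{\downarrow r}$, and $G_r=0$ for $r>k$. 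Consequently $G_r=\mathcal C_\rho^{\otimes r}(K_r)$ with $K_r=\overline\Pi^{\otimes r}\Theta_p^{(r)}\overline\Pi^{\otimes r}/n^{\downarrow r}$. Centering is a contraction in the $\rho^{\otimes r}$-weighted norm, and $(PAP)^2\preceq PA^2P$. Together these give \eqref{eq:overview-tensor-variance}, namely $\Var\leq\sum_r \tr(\sigma^{\otimes r}(\Theta_p^{(r)})^2)/(r!\,n^{\downarrow r})$.

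It remains to prove the derivative bound \eqref{eq:overview-derivative-second-moment} (\cref{lem:derivative-norms}). Work in the eigenbasis of $\sigma$. By the divided-difference (Hermite--Genocchi) form of matrix derivatives, $\Theta_p^{(r)}$ equals $D\cdot\frac1{(r-1)!}\sum_{\pi\text{ cyclic}}P_\pi$. Here $D$ is diagonal, and its entries are integrals of $p^{(r)}$ against a simplex probability measure evaluated at eigenvalue tuples, so $|D|\leq\|p^{(r)}\|_\infty$. Expanding $\tr(\sigma^{\otimes r}\Theta^2)$ and bounding $|D|^2$ entrywise reduces the problem to $\frac1{((r-1)!)^2}\sum_{\pi,\pi'}\tr(\sigma^{\otimes r}P_{\pi}^{\dagger}P_{\pi'})$. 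The term for each permutation $\pi^{-1}\pi'$ factors over its cycles, and a cycle of length $\ell$ gives $\tr(\sigma^\ell)\leq L^{\ell-1}$. Summing over permutations by cycle count gives $\sum_\tau\prod L^{\ell-1}=\prod_{j<r}(1+jL)$, the generating function of Stirling numbers at $L^{-1}$, scaled appropriately. This sum is over one free $\pi$, so one $(r-1)!$ cancels. Dividing by $r!\,n^{\downarrow r}=(r!)^2\binom nr$ yields the entangled $c_r$.

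\emph{Unentangled case.} Each copy is measured independently after projecting with $\{\Pi,\overline\Pi\}$. In the scheme of \Cref{sec:small-bucket-measurements}, we assume the measurement is a random-basis/Haar measurement whose outcomes $\bm x_i$ admit unbiased single-copy estimators $X_i$ of $\overline\Pi\rho\overline\Pi$, i.e.\ $\E X_i=\sigma$. The moment estimates are U-statistics $\frac1{n^{\downarrow m}}\sum_{\text{distinct}}\tr(X_{i_1}\cdots X_{i_m})$. The estimator is therefore a symmetric function of i.i.d.\ outcomes, and the classical Efron--Stein/Hoeffding decomposition gives $\Var=\sum_r\binom nr\E[h_r^2]$. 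Exactly as above, the degree-$r$ Hoeffding component is the centering of $\frac1{n^{\downarrow r}}\tr(\Theta_p^{(r)}X_1\otimes\cdots\otimes X_r)$. This gives $\Var\leq\sum_r\frac1{r!\,n^{\downarrow r}}\E[\tr(\Theta_p^{(r)}X^{\otimes r})^2]$. We then bound the single-copy second moment $\E[X\otimes X]$ by $3\,\tr(\sigma)$-type operators times $I+\SWAP$, using the standard $(d+1)$-factor for Haar/classical-shadow estimators. Taking the tensor power produces a sum of permutation operators, each weighted by $3^r$. Evaluating $\tr(\Theta\,\Theta\,\cdot)$ against permutations with trace of the identity on $d$-dimensional cycles yields $d^{\uparrow r}=\prod_j(d+j)$ in place of $\prod_j(1+jL)$.

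The main obstacle is the combinatorial step in \cref{lem:derivative-norms}: keeping the averaged cyclic structure of $\Theta_p^{(r)}$ while bounding the divided-difference weights entrywise, and correctly counting permutation products so that exactly one factor of $(r-1)!$ survives. The analogous single-copy moment computation giving $3^rd^{\uparrow r}$ is the second most delicate point.
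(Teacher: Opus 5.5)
Your entangled half is the paper's argument, and it is correct. It uses quantum Efron--Stein with $K_r=\overline\Pi^{\otimes r}\Theta_p^{(r)}\overline\Pi^{\otimes r}/n^{\downarrow r}$, then contraction under centering, then $\Theta_p^{(r)}=\frac1{(r-1)!}W_r\sum_c R_c$ and $\sum_{\pi\in S_r}L^{r-\#\pi}=\prod_{j<r}(1+jL)$. Using Hermite--Genocchi instead of the paper's Rolle argument is harmless. The unentangled reduction to $\sum_r\frac{1}{r!\,n^{\downarrow r}}\E[\tr(\Theta_p^{(r)}\bm X_1\otimes\cdots\otimes\bm X_r)^2]$ is also fine.

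\textbf{The gap.} It sits at the step you flagged: the single-copy second moment and its tensorization. $\E[\bm X\otimes\bm X]$ is not ``$3\tr(\sigma)$-type operators times $I+\SWAP$''. By \cite[Proposition~5.10]{PTTW26},
\[
(d+2)\E[\bm X\otimes\bm X]=((d+1)\SWAP-I)(\tr(\sigma)I+\sigma\otimes I+I\otimes\sigma).
\]
This distinction is decisive. Against $B\otimes B$, an $I$ component contributes $(\tr B)^2$, which can be as large as $d\tr(B^2)$. If such a term is carried through a tensor power with positive weight, each identity factor costs up to a factor $d$. Already at $r=1$, where $\Theta_p^{(1)}=p'(\sigma)$, you would pick up $\tr(p'(\sigma))^2$. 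This can be $d^2\|p'\|_{\infty,[0,L]}^2$, so you would get $c_1$ of order $d^2$ rather than $3d$.

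Your final evaluation ``against permutations'' yields $d^{\uparrow r}$ only if the dominating form is exactly $3^r\SWAP^{\otimes r}$, i.e.\ $3^r\|\Theta_p^{(r)}\|_{\mathrm{HS}}^2$. That is precisely what has to be proved. Separately, ``taking the tensor power'' of an operator bound is not legitimate here. Neither $\E[\bm X\otimes\bm X]$, $\SWAP$, nor $A\otimes A$ is PSD, so operator inequalities do not survive pairing with $A\otimes A$.

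\textbf{What is needed.} First, the cancellation that isolates all $\tr(B)$-dependence in a nonpositive term:
\[
(d+2)\E\tr(B\bm X)^2=(d+1)\tr(\sigma)\tr(B^2)+(2d+3)\tr(\sigma B^2)-\tr\bigl[\sigma(B+\tr(B)I)^2\bigr].
\]
Dropping the last term and using $\tr\sigma\leq1$ and $\sigma\preceq I$ gives $\E\tr(B\bm X)^2\leq\frac{3d+4}{d+2}\|B\|_{\mathrm{HS}}^2\leq3\|B\|_{\mathrm{HS}}^2$.

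Second, tensorize at the level of the PSD quadratic form $B\mapsto\E\tr(B\bm X)^2$ on Hermitian matrices. The paper does this by induction: write $A=\sum_aE_a\otimes A_a$ in a Hilbert--Schmidt orthonormal Hermitian basis, condition on $\bm X_2,\ldots,\bm X_r$, and apply the single-copy bound. This gives $\E\tr(A\,\bm X_1\otimes\cdots\otimes\bm X_r)^2\leq3^r\|A\|_{\mathrm{HS}}^2$.

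Finally, run your cycle computation with $\sigma^{\otimes r}$ replaced by $I$, so that $\tr R_\pi=d^{\#\pi}$. This gives $\|\Theta_p^{(r)}\|_{\mathrm{HS}}^2\leq\frac{d^{\uparrow r}}{(r-1)!}\|p^{(r)}\|_{\infty,[0,L]}^2$. Dividing by $r!\,n^{\downarrow r}=(r!)^2\binom nr$ then yields the unentangled $c_r=3^rd^{\uparrow r}/(r-1)!$.
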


The coefficients $c_r$ depend on the measurement scheme, but not
on $p$.  The difference between the bounds for $\phi_k$ and $\psi_k$
therefore comes from their derivatives.

\begin{corollary}[Chebyshev moment variance bounds]
\label{cor:explicit-chebyshev-variance}
For $1\leq k\leq n$ and either measurement scheme,
\begin{equation*}
 \begin{aligned}
 \Var(\widehat{\bm F}_{\phi_k})
 &\leq\sum_{r=1}^k\frac{c_r}{\binom nr}
       \left[\frac{(4k^2/L)^r}{(2r)!}\right]^2,\\
 \Var(\widehat{\bm F}_{\psi_k})
 &\leq\sum_{r=1}^k\frac{c_r}{\binom nr}
       \left[\frac{(2k/L)^r}{r!}\right]^2,
 \end{aligned}
\end{equation*}
where $c_r$ is given by \cref{eq:measurement-variance-factor}.
\end{corollary}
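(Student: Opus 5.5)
The corollary is \Cref{lem:polynomial-variance} applied to $p=\phi_k$ and $p=\psi_k$, both of degree $k\leq n$. It therefore suffices to prove, for $1\leq r\leq k$,
\[
 \frac{\|\phi_k^{(r)}\|_{\infty,[0,L]}}{r!}\leq\frac{(4k^2/L)^r}{(2r)!},
 \qquad
 \frac{\|\psi_k^{(r)}\|_{\infty,[0,L]}}{r!}\leq\frac{(2k/L)^r}{r!},
\]
and then substitute into \cref{eq:polynomial-variance}. The subtracted constants $(-1)^k$ and $T_k(-1/2)$ have zero derivative, so they play no role.

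\textbf{Full-interval bound.} Use the chain rule: $\phi_k^{(r)}(x)=(2/L)^rT_k^{(r)}(2x/L-1)$. We need $\|T_k^{(r)}\|_{\infty,[-1,1]}$. By the Markov brothers' inequality, this maximum is attained at $t=1$, where
\[
 T_k^{(r)}(1)=\prod_{j=0}^{r-1}\frac{k^2-j^2}{2j+1}.
\]
Bound each factor $k^2-j^2\leq k^2$, and evaluate $\prod_{j<r}(2j+1)=(2r)!/(2^r r!)$. This gives
\[
 \|T_k^{(r)}\|_\infty\leq \frac{k^{2r}\,2^r\,r!}{(2r)!}.
\]
Multiplying by $(2/L)^r$ and dividing by $r!$ yields $(4k^2/L)^r/(2r)!$.

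\textbf{Interior bound.} Here $\psi_k^{(r)}(x)=L^{-r}T_k^{(r)}(x/L-1/2)$, so we need $|T_k^{(r)}(t)|\leq 2^rk^r$ for $|t|\leq1/2$. Use Bernstein's inequality for polynomials, $|P'(t)|\leq \deg P\cdot\|P\|_{\infty,[-1,1]}/\sqrt{1-t^2}$. This gives $|T_k'(t)|\leq k/\sqrt{1-t^2}\leq 2k/\sqrt3$ on $[-1/2,1/2]$. For higher derivatives, iterating Bernstein is delicate, because the sup norm of $T_k^{(r-1)}$ on all of $[-1,1]$ is large.

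I would instead use a Cauchy integral on a Bernstein ellipse, which is the cleanest route. We have $T_k(\cos\theta)=\cos k\theta$, and on the ellipse $E_\varrho$ with $\varrho=1+c/k$ one has $|T_k|\leq\varrho^k\leq e^c$. The distance from $[-1/2,1/2]$ to $E_\varrho$ is of order $c/k$. Cauchy's estimate then gives $|T_k^{(r)}(t)|\leq r!\,e^c(\mathrm{dist})^{-r}$. This produces $(Ck)^r\, r!$, which has an extra $r!$ compared with the target.

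The factorial must therefore be removed. A better route is direct: write $t=\cos\theta$ with $\theta\in[\pi/3,2\pi/3]$, so $\sin\theta\geq\sqrt3/2$. Then $T_k^{(r)}$ is obtained by applying $r$ times the operator $D=-\frac{1}{\sin\theta}\frac{d}{d\theta}$ to $\cos k\theta$. Expand the result and bound it by induction in the form $(2k)^r\cdot(\text{const})$, using the fact that each $\theta$-derivative either hits the oscillating factor, contributing a factor $k$, or hits a power of $1/\sin\theta$, contributing lower-order terms. Controlling these lower-order terms uniformly for $r$ up to $k$ is the part that needs care.

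\textbf{Main obstacle.} I expect the interior bound to be the main obstacle: obtaining exactly $2^rk^r$ without losing a factorial. The most robust way I know uses the ellipse/Cauchy approach but with the radius chosen as $\varrho$ depending on $r$, namely $\varrho-1\approx r/k$. Then $|T_k|\leq\varrho^k\approx e^{r}$ and the distance is about $r/(2k)$ (times a constant coming from the interior position). Cauchy gives $r!\,e^r(2k/r)^r\approx (2ek)^r\sqrt r\,e^{-r}\cdot e^{r}$, which is of order $(Ck)^r$. Pinning the constant to $2$ requires optimizing this estimate, which is where the real work lies. Once both derivative bounds are established, substitution into \Cref{lem:polynomial-variance} finishes the proof.
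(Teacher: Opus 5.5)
Your reduction to \Cref{lem:polynomial-variance} and your full-interval bound are correct and match the paper's argument: Markov's inequality places the maximum of $|T_k^{(r)}|$ at $t=1$, and $T_k^{(r)}(1)=\prod_{j<r}(k^2-j^2)/(2j+1)$. The gap is the interior estimate $\sup_{|t|\le 1/2}|T_k^{(r)}(t)|\le (2k)^r$, which your proposal does not establish. None of your three routes closes it:
\begin{itemize}
\item The fixed-ellipse Cauchy bound loses a factor $r!$, as you note.
\item The induction with $-\frac{1}{\sin\theta}\frac{d}{d\theta}$ is only sketched. The uniform control of the lower-order terms for all $r\le k$, which you defer, is exactly the content of the claim.
\item The $r$-dependent ellipse, as you carry it out, gives $(Ck)^r$ with a constant you leave undetermined. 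You flag pinning it to $2$ as the real work.
\end{itemize}
A bound $(Ck)^r$ would still suffice for \Cref{thm:entangled,thm:unentangled}, since it only rescales $t$ by a constant that a larger $A$ absorbs. It does not prove the corollary as stated, whose constant is exactly $2$.

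The missing idea is to recenter. For $|t|\le 1/2$, put $P_t(u)=T_k(t+u/2)$. Since $t+u/2\in[-1,1]$ whenever $|u|\le 1$, $P_t$ has degree $k$ and $\|P_t\|_{\infty,[-1,1]}\le 1$, while $T_k^{(r)}(t)=2^rP_t^{(r)}(0)$. So it suffices to bound the $r$th derivative at the \emph{center} of the interval for a polynomial bounded by one. There V.~A.~Markov's inequality applies: $|P^{(r)}(0)|\le\max\{|T_k^{(r)}(0)|,|T_{k-1}^{(r)}(0)|\}$, because the coefficients of $P$ are dominated by those of $T_k$ or $T_{k-1}$. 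When $m-r$ is even (otherwise $T_m^{(r)}(0)=0$), the explicit coefficient formula gives $|T_m^{(r)}(0)|=m\,2^{r-1}\prod_{i=(m-r)/2+1}^{(m+r)/2-1} i$. These $r-1$ consecutive integers have mean $m/2$, so AM--GM gives $|T_m^{(r)}(0)|\le m\,2^{r-1}(m/2)^{r-1}=m^r$. Hence $|P_t^{(r)}(0)|\le k^r$ and $|T_k^{(r)}(t)|\le(2k)^r$. It follows that $\|\psi_k^{(r)}\|_{\infty,[0,L]}\le(2k/L)^r$, with no factorial loss.

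You need the center bound in this sharp form $k^r$. The recentering already costs $2^r$, so the cruder estimate $|P^{(r)}(0)|\le(2k)^r$ quoted in the paper's appendix would only yield $(4k/L)^r$.
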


\begin{proof}
Substitute the derivative bounds from \Cref{lem:cheb-derivatives} into
\cref{eq:polynomial-variance}.
\end{proof}

In \Cref{sec:small-bucket-measurements}, we describe the two measurement procedures and their moment estimators.
We then prove \Cref{lem:polynomial-variance} in two steps.
First, in \Cref{sec:backend-snapshot,sec:backend-wss}, we use classical and quantum Efron--Stein decompositions to bound the variance in terms of the derivative operators of $A\mapsto\tr(p(A))$.
Second, in \Cref{sec:derivative-norms}, we bound these operators using the scalar derivatives $p^{(r)}$.

\subsection{Measurements on the small-bucket state}
\label{sec:small-bucket-measurements}

Fix $1\leq K\leq n$.
Both schemes first measure each of the $n$ fresh copies of $\rho$ with $\{\Pi,\overline\Pi\}$.
The outcome $\overline\Pi$ occurs with probability $\tr(\sigma)$ and, when $\tr(\sigma)>0$, leaves the normalized state $\sigma/\tr(\sigma)$ on the small-bucket subspace.
The complementary outcome $\Pi$ is discarded.

The unentangled and entangled schemes described below first produce real, unbiased moment estimates $\widehat{\bm{M}}_1,\ldots,\widehat{\bm{M}}_K$ satisfying
\[
 \E\widehat{\bm{M}}_m=\tr(\sigma^m),\qquad 1\leq m\leq K. 
\]
We then use them to construct a real, unbiased estimate $\widehat{\bF}_p$ of $\tr(p(\sigma))$ for any polynomial $p$ of degree at most $K$. 

\subsubsection{Unentangled measurements}
\label{subsub:unentangled-measurements}
We use the conditioned uniform POVM of \cite[Definition~5.9]{PTTW26}.
On each copy retained after outcome $\overline\Pi$, we apply the uniform POVM on $\C^d$.
Let $\ket{\bm u}\in\C^d$ be the resulting unit vector, and set
\[
 \bm X=(d+1)\ketbra{\bm u}{\bm u}-I_d.
\]
On a discarded copy, set $\bm X=0$.
Proposition~5.10 of \cite{PTTW26} gives $\E \bm{X}=\sigma$.  
For deterministic Hermitian matrices $X_1,\ldots,X_n$, define
\begin{equation*}
g_m(X_1,\ldots,X_n):=\frac1{\fall nm}
 \sum_{\substack{i_1,\ldots,i_m\in[n]\\\text{all distinct}}}
 \tr(X_{i_1}\cdots X_{i_m}),\qquad 1\leq m\leq K,
\end{equation*}
and set $g_0\equiv d$.  
Then by \cite[Definition~5.11]{PTTW26}, the monomial moment estimates are given by
\[
 \widehat{\bm M}_m=g_m(\bm X_1,\ldots,\bm X_n),\qquad 0\leq m\leq K,
\]
where $\bm X_1,\ldots,\bm X_n$ are i.i.d. estimators obtained by applying the conditioned POVM to the $n$ input copies.  
$\widehat{\bM}_1, \ldots, \widehat{\bM}_K$ are constructed using the same $\bm X_1,\ldots,\bm X_n$. 
For $p(x)=\sum_{m=0}^Kp_mx^m$, define
\begin{equation}\label{eq:snapshot-polynomial-function}
 g_p:=\sum_{m=0}^Kp_mg_m,
 \qquad
 \widehat{\bm F}_p=g_p(\bm X_1,\ldots,\bm X_n).
\end{equation}
The function $g_p$ is real-valued, symmetric, and affine in each argument.  
For every observable $A$,
\begin{equation}\label{eq:snapshot-diagonal-identity}
 g_p(A,\ldots,A)=\tr(p(A)).
\end{equation}
Independence and the fact that $g_p$ is affine in each argument give
\[
 \E\widehat{\bm F}_p
 =g_p(\sigma,\ldots,\sigma)
 =\tr(p(\sigma)).
\]

\subsubsection{Entangled measurements}
\label{subsub:entangled-measurements}

We use the moment-estimation procedure for a subnormalized state from \cite[Definition~5.5]{PSTW26}, which simultaneously measures the following observables.

Let $\overline\Pi_i$ denote $\overline\Pi$ acting on copy $i$ and the identity elsewhere.  
For $\pi\in S_r$, define the permutation operator $R_\pi$ on $r$ tensor factors by
\[
 R_\pi(\ket{v_1}\otimes\cdots\otimes \ket{v_r})
 =\ket{v_{\pi^{-1}(1)}}\otimes\cdots\otimes \ket{v_{\pi^{-1}(r)}},
 \qquad v_i\in [d].
\]
In particular, $R_{(i_1\cdots i_m)}$ cyclically permutes the indicated
copies and acts as the identity on the others. 
Define the following observables on the space $(\C^d)^{\otimes n}$ of all $n$ input copies by
\begin{equation*}
 \mathsf M_m:=\frac1{n^{\downarrow m}}
 \sum_{\substack{i_1,\ldots,i_m\in[n]\\\text{all distinct}}}
 \overline\Pi_{i_1}\cdots\overline\Pi_{i_m}R_{(i_1\cdots i_m)},
 \qquad 1\leq m\leq K,
\end{equation*}
and set $\mathsf M_0:=dI$.  
Each $\mathsf M_m$ is Hermitian, since every cycle occurs together with its inverse and commutes with the product of projectors on its support.  
Each $\mathsf M_m$ is also invariant under permutations of the copies.
Algorithmically, all $\mathsf M_m$ can be simultaneously measured via weak Schur sampling on the retained copies; see \cite[Definition~5.5 and Remark~3.11]{PSTW26}.
Let $\widehat{\bm M}_m$ denote the measured value of $\mathsf M_m$.
For every density matrix $\tau \in \mathrm{D}(\C^d)$, the trace identity for cycle permutations gives
\begin{equation}\label{eq:wss-moment-expectation}
 \tr(\mathsf M_m\tau^{\otimes n})
 =\tr\!\left((\overline\Pi\tau\overline\Pi)^m\right),
 \qquad 1\leq m\leq K.
\end{equation}

For a polynomial $p(x)=\sum_{m=0}^Kp_mx^m$ of degree at most $K$, set
\begin{equation*}
 \mathsf F_p:=\sum_{m=0}^Kp_m\mathsf M_m.
\end{equation*}
The polynomial moment estimate
\begin{equation*}
 \widehat{\bm F}_p=\sum_{m=0}^Kp_m\widehat{\bm M}_m
\end{equation*}
is therefore the measured value of $\mathsf F_p$.
By \cref{eq:wss-moment-expectation}, linearity, and $\mathsf M_0=dI$,
\begin{equation}\label{eq:wss-polynomial-expectation}
 \tr(\tau^{\otimes n}\mathsf F_p)
 =\tr\!\left(p(\overline\Pi\tau\overline\Pi)\right).
\end{equation}
Taking $\tau=\rho$ shows that $\widehat{\bm F}_p$ is an unbiased
estimator of $\tr(p(\sigma))$.

\subsection{Reducing the variance to derivative operators}

As shown in \Cref{sec:small-bucket-measurements}, the mean of $\widehat{\bm F}_p$ is $\tr(p(\sigma))$.
We bound its variance using derivatives of $A\mapsto\tr(p(A))$ at $A=\sigma$.

For $1\leq r\leq k$ and Hermitian $d\times d$ matrices $H_1,\ldots,H_r$, define
\[
 D_p^{(r)}(H_1,\ldots,H_r):=\left.
 \frac{\partial^r}{\partial t_1\cdots\partial t_r}
 \tr\!\left(p\!\left(\sigma+\sum_{j=1}^r t_jH_j\right)\right)
 \right|_{t_1=\cdots=t_r=0}.
\]
The matrices $H_1,\ldots,H_r$ are the directions in which we perturb $\sigma$, and $t_1,\ldots,t_r$ are real parameters.
We differentiate the trace once with respect to each parameter and then set all parameters to zero.
The result is a real number and is linear in each direction $H_j$.
Equivalently, $D_p^{(r)}(H_1,\ldots,H_r)$ is the coefficient of $t_1\cdots t_r$ in the polynomial $\tr(p(\sigma+\sum_{j=1}^r t_jH_j))$.

We represent $D_p^{(r)}$ by a matrix that stores its coefficients for all choices of directions $H_1,\ldots,H_r$. 
This allows us to bound the variance using Hilbert--Schmidt norms.
Define the Hermitian operator $\Theta_p^{(r)}$ on $(\C^d)^{\otimes r}$ by
\begin{equation}\label{eq:derivative-kernel}
 \tr\!\left[\Theta_p^{(r)}(H_1\otimes\cdots\otimes H_r)\right]
 :=D_p^{(r)}(H_1,\ldots,H_r).
\end{equation}
This identity determines $\Theta_p^{(r)}$ uniquely.
The operator $\Theta_p^{(r)}$ is invariant under permutations of the tensor factors because $D_p^{(r)}$ is symmetric in its arguments.

\begin{lemma}[Variance in terms of derivative operators]
\label{lem:derivative-variance}
Let $1\leq k\leq n$, and let $p$ be a polynomial of degree at most $k$.
For the conditioned uniform POVM,
\begin{equation}\label{eq:upovm-derivative-variance}
 \Var(\widehat{\bm{F}}_p)
 \leq\sum_{r=1}^k\frac{3^r}{r!\,n^{\downarrow r}}
       \left\|\Theta_p^{(r)}\right\|_{\mathrm{HS}}^2. 
\end{equation}
For weak Schur sampling,
\begin{equation}\label{eq:wss-derivative-variance}
 \Var(\widehat{\bm{F}}_p)
 \leq\sum_{r=1}^k\frac1{r!\,n^{\downarrow r}}
       \left\|\Theta_p^{(r)}(\sqrt{\sigma})^{\otimes r}\right\|_{\mathrm{HS}}^2.
\end{equation}
\end{lemma}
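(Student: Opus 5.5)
Both bounds come from an Efron--Stein decomposition of the estimator: the variance splits into orthogonal pieces indexed by subsets $S\subseteq[n]$. By permutation symmetry, each piece is determined by $|S|=r$. We identify the size-$r$ piece through the degree-$r$ term of the mean under perturbation, which is $D_p^{(r)}$ and is encoded by $\Theta_p^{(r)}$. Pieces with $r>k$ vanish because $\tr(p(\cdot))$ has degree at most $k$.

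\textbf{Unentangled case.} The function $g_p$ is symmetric and affine in each argument, and $\bm X_1,\ldots,\bm X_n$ are i.i.d. We apply the classical Efron--Stein (Hoeffding) decomposition to $g_p(\bm X_1,\ldots,\bm X_n)$. Affineness means every component is a multilinear form in the centered variables $\bm Y_i=\bm X_i-\sigma$. Expanding $g_p(\sigma+\bm Y_1,\ldots,\sigma+\bm Y_n)$ by multilinearity gives
\[
\widehat{\bm F}_p-\E\widehat{\bm F}_p=\sum_{S\neq\varnothing}\frac{1}{\fall nr}\,D_p^{(r)}(\bm Y_{i}:i\in S),\qquad r=|S|.
\]
The coefficient is checked by setting all $Y_i=tH_i$-type directions and comparing with \cref{eq:snapshot-diagonal-identity}: the polarized multilinear part of $A\mapsto\tr(p(A))$ at $\sigma$ distributed over the $\fall nr$ ordered tuples. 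The terms are mean zero in each coordinate, so they are orthogonal, and
\[
\Var=\sum_r\binom nr\frac{1}{(\fall nr)^2}\,\E\,\tr\!\left[\Theta_p^{(r)}\bm Y_1\otimes\cdots\otimes\bm Y_r\right]^2 .
\]
Since $\binom nr/(\fall nr)^2=1/(r!\,\fall nr)$, it remains to bound the single-copy second moment. For this we use the operator inequality $\E[\bm Y\otimes\bm Y]$-type bound. Concretely, for any operator $\Theta$ we have $\E\,\tr(\Theta\bm Y)^2\le \E\,\tr(\Theta\bm X)^2\le 3\|\Theta\|_{\mathrm{HS}}^2$ when $\Theta$ acts on one factor. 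This uses the uniform-POVM second moment: $\E[\bm X\otimes\bm X]$ is a combination of $I$, $\SWAP$, and $\sigma$ terms with operator norm at most $3$ on the swap-symmetric part (cf.\ \cite[Prop.~5.10]{PTTW26}). Tensorizing over the $r$ independent factors, one factor at a time with the others conditioned, gives $3^r\|\Theta_p^{(r)}\|_{\mathrm{HS}}^2$. This yields \eqref{eq:upovm-derivative-variance}.

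\textbf{Entangled case.} This follows the overview. First apply the quantum Efron--Stein decomposition (\cref{lem:quantum-efron-stein}) to $\mathsf F_p$ with reference state $\rho$. By permutation invariance this gives \cref{eq:qES} with components $G_r$. Perturbing $\rho\to\rho+t\Delta$ and comparing Taylor coefficients with \cref{eq:wss-polynomial-expectation} gives $\tr(G_r\Delta^{\otimes r})=\tr(K_r\Delta^{\otimes r})$, where $K_r=\overline\Pi^{\otimes r}\Theta_p^{(r)}\overline\Pi^{\otimes r}/\fall nr$. Here the $r$th Taylor coefficient is $D_p^{(r)}(H,\ldots,H)/r!$ with $H=\overline\Pi\Delta\overline\Pi$. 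The factor $r!$ is absorbed because $n^{\downarrow r}=r!\binom nr$ matches the $\binom nr t^r$ count. Next comes the identification $G_r=\mathcal C_\rho^{\otimes r}(K_r)$. Both sides are centered in every factor and agree against all $\Delta^{\otimes r}$ with $\Delta$ traceless. By polarization and symmetry they then agree against all $\Delta_1\otimes\cdots\otimes\Delta_r$. Centered operators are determined by their pairings with traceless tensors together with the $\rho$-marginals, which vanish.

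Centering is a contraction in the $\rho$-weighted norm, so $\tr(\rho^{\otimes r}G_r^2)\le\tr(\rho^{\otimes r}K_r^2)$. We then write $K_r^2=\overline\Pi^{\otimes r}\Theta\overline\Pi^{\otimes r}\Theta\overline\Pi^{\otimes r}$ and use $\overline\Pi^{\otimes r}\preceq I$ in the middle. This gives
\[
\tr(\rho^{\otimes r}K_r^2)\le\tr\!\left(\sigma^{\otimes r}(\Theta_p^{(r)})^2\right)/(\fall nr)^2=\|\Theta_p^{(r)}\sqrt\sigma^{\otimes r}\|_{\mathrm{HS}}^2/(\fall nr)^2 .
\]
Summing with $\binom nr$ gives \eqref{eq:wss-derivative-variance}.

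\textbf{Main obstacle.} In the quantum case, the delicate step is the rigorous identification of $G_r$ from pairings with $\Delta^{\otimes r}$ only. I would handle it by noting that the map $\Delta\mapsto\tr(G_r\Delta^{\otimes r})$ is a symmetric form. Its polarization on traceless Hermitian directions determines the projection of $G_r$ onto the traceless-dual part, and that projection is exactly $\mathcal C_\rho^{\otimes r}$. In the classical case, the obstacle is the constant $3$ per factor. I would verify it by computing $\E[\bm X\otimes\bm X]=\frac{d+1}{d}\bigl(\tr(\sigma)(I+\SWAP)\bigr)/(\ldots)$ explicitly, and bounding $\tr\bigl(\Theta^{\otimes}\ldots\bigr)$ via Cauchy--Schwarz with $\|\sigma\|\le 1$.
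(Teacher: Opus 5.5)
Your proposal is correct and follows the paper's proof: classical Efron--Stein with components $D_p^{(r)}/n^{\downarrow r}$, replacing $\bm X_i-\sigma$ by $\bm X_i$, and a factor-by-factor induction from the single-copy bound for the $3^r$ factor; then quantum Efron--Stein with the components identified as $\mathcal C_\rho^{\otimes r}(K_r)$, centering as a contraction, and $\overline\Pi^{\otimes r}\preceq I$. Your uniqueness-based identification of $G_r$ is the paper's overview argument, and it is equivalent to the formal proof, which instead checks the decomposition built from $\mathcal C_\rho^{\otimes r}(K_r)$ against every $\tau^{\otimes n}$.

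One correction to your base-case heuristics: the per-copy constant $3$ does not come from an operator-norm bound on the symmetric subspace, and your displayed formula for $\E[\bm X\otimes\bm X]$ is wrong. The correct identity is $(d+2)\E[\bm X\otimes\bm X]=((d+1)\mathrm{SWAP}-I)(\tr(\sigma)I\otimes I+\sigma\otimes I+I\otimes\sigma)$. It gives $(d+2)\E\tr(A\bm X)^2=(d+1)\tr(\sigma)\tr(A^2)+(2d+3)\tr(\sigma A^2)-\tr[\sigma(A+\tr(A)I)^2]\le(3d+4)\|A\|_{\mathrm{HS}}^2$. The $\tr(A)$ cross terms are absorbed by sign (or by your Cauchy--Schwarz), not by a norm bound.
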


\subsubsection{Unentangled measurements}
\label{sec:backend-snapshot}

To prove \cref{eq:upovm-derivative-variance}, we use the following specialization of the Efron--Stein decomposition \cite[Theorem~8.35]{OD14}, also called the Hoeffding decomposition \cite[Section~11.4, p.~159]{vdV98}.
For completeness, we also include a proof. 

\begin{lemma}[Efron--Stein decomposition for symmetric multiaffine functions]
\label{lem:classical-efron-stein}
Let $\bm X_1,\ldots,\bm X_n$ be i.i.d. Hermitian matrices with mean $\sigma$ and finite second moments.  
Let $g$ be a real-valued symmetric function of $n$ such matrices that is affine in each argument.
Define 
\begin{equation*}
      f_g(A)\coloneqq g(A, \ldots, A). 
\end{equation*}
For any $1\leq r\leq n$ and Hermitian matrices $H_1,\ldots,H_r$, let
\begin{equation}\label{eq:classical_h_g}
h_g^{(r)}(H_1,\ldots,H_r) \coloneqq \frac{1}{n^{\downarrow r}} \frac{\partial^r}{\partial t_1\cdots\partial t_r} \left.
 f_g\!\left(\sigma+\sum_{j=1}^r t_jH_j\right)
 \right|_{t_1=\cdots=t_r=0}.
\end{equation}
Then
\[
 \Var(g(\bm X_1,\ldots,\bm X_n))
 =\sum_{r=1}^n \binom{n}{r}
   \E\left[h_g^{(r)}(\bm X_1-\sigma,\ldots,\bm X_r-\sigma)^2\right].
\]
\end{lemma}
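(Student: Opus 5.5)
The plan is to use the standard Hoeffding decomposition and then identify each component through derivatives of $f_g$. Write $\bm Y_i=\bm X_i-\sigma$. Since $g$ is affine in each argument, expanding each slot as $\sigma+\bm Y_i$ gives the exact multilinear expansion
\[
 g(\bm X_1,\ldots,\bm X_n)=\sum_{S\subseteq[n]}\Gamma_S\bigl((\bm Y_i)_{i\in S}\bigr).
\]
Here $\Gamma_S$ is the part of $g$ that is linear in each $\bm Y_i$ with $i\in S$, with $\sigma$ substituted in every other slot. Concretely, $\Gamma_S$ is obtained from $g$ by inclusion--exclusion over replacing arguments in $S$ by $\sigma$; equivalently, it is the mixed difference of $g$ in the slots of $S$. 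Each $\Gamma_S$ is multilinear in $(\bm Y_i)_{i\in S}$. The constant term is $\Gamma_\varnothing=g(\sigma,\ldots,\sigma)$.

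\textbf{Orthogonality.} The variables $\bm Y_i$ are independent with mean zero, and $\Gamma_S$ is linear in each of its arguments. So for $S\neq T$, I pick an index lying in exactly one of the two sets and take the conditional expectation over that index. This gives $\E[\Gamma_S\Gamma_T]=0$, and likewise $\E\Gamma_S=0$ for $S\neq\varnothing$. Finite second moments make every term square-integrable. Hence
\[
 \Var(g)=\sum_{S\neq\varnothing}\E[\Gamma_S^2].
\]
Symmetry of $g$ implies that $\Gamma_S$ depends on $S$ only through $|S|=r$, as a fixed multilinear form $\Gamma^{(r)}(Y_1,\ldots,Y_r)$ applied to the variables indexed by $S$. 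Since the $\bm Y_i$ are i.i.d., each of the $\binom nr$ subsets of size $r$ contributes $\E[\Gamma^{(r)}(\bm Y_1,\ldots,\bm Y_r)^2]$.

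\textbf{Identifying $\Gamma^{(r)}=h_g^{(r)}$.} This is the main step. For deterministic Hermitian $H_1,\ldots,H_r$ and real $t_j$, I evaluate
\[
 f_g\Bigl(\sigma+\sum_j t_jH_j\Bigr)=g(A,\ldots,A),\qquad A=\sigma+\sum_j t_jH_j,
\]
using the same expansion with every $Y_i=\sum_j t_jH_j$:
\[
 f_g\Bigl(\sigma+\sum_j t_jH_j\Bigr)=\sum_{S}\Gamma^{(|S|)}\Bigl(\sum_j t_jH_j,\ldots,\sum_j t_jH_j\Bigr).
\]
Taking $\partial^r/\partial t_1\cdots\partial t_r$ at $t=0$ kills every term except those with $|S|=r$. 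In each surviving term the multilinearity of $\Gamma^{(r)}$ produces $\sum_{\pi\in S_r}\Gamma^{(r)}(H_{\pi(1)},\ldots,H_{\pi(r)})$. I need $\Gamma^{(r)}$ to be symmetric in its arguments, and this follows from the symmetry of $g$. Each surviving term therefore equals $r!\,\Gamma^{(r)}(H_1,\ldots,H_r)$. There are $\binom nr$ such subsets, and $\binom nr r!=n^{\downarrow r}$, so the derivative equals $n^{\downarrow r}\Gamma^{(r)}(H_1,\ldots,H_r)$. Comparing with \cref{eq:classical_h_g} gives $\Gamma^{(r)}=h_g^{(r)}$. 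Because $\Gamma^{(r)}$ is multilinear, this identity on deterministic inputs transfers to the random inputs $\bm Y_i$.

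Substituting into the variance identity gives
\[
 \Var(g(\bm X_1,\ldots,\bm X_n))=\sum_{r=1}^n\binom nr\E\bigl[h_g^{(r)}(\bm X_1-\sigma,\ldots,\bm X_r-\sigma)^2\bigr],
\]
which is the statement. The only delicate points are the bookkeeping of the $r!$ from differentiating a symmetric multilinear form along the diagonal, and checking that $\Gamma_S$ involves only the mean-zero variables indexed by $S$ and no others.
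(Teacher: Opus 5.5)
Your proposal is correct and follows essentially the same route as the paper. Both expand $g$ multilinearly around $(\sigma,\ldots,\sigma)$, identify the size-$r$ component with $h_g^{(r)}$ by counting the $n^{\downarrow r}=\binom nr r!$ ways the $r$ derivatives of $f_g$ land in distinct slots, and then use independence and mean zero for orthogonality before grouping subsets by size; your coefficient extraction (substituting $Y_i=\sum_j t_jH_j$ into the expansion) simply spells out the paper's one-line counting step.
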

\begin{proof}
Because $g$ is affine in each argument, its Taylor expansion around $(\sigma, \ldots, \sigma)$ contains only mixed derivatives in distinct arguments. 
When differentiating $f_g$ $r$ times, there are $n^{\downarrow r}$ ways to apply the derivatives to distinct arguments of $g$. 
By symmetry, every contribution equals $h_g^{(r)}$. 
Therefore, 
\begin{equation*}
     g(\bm X_1,\ldots,\bm X_n) = g(\sigma, \ldots, \sigma) + \sum_{S\subseteq [n], \, S\neq \varnothing} \underbrace{h_g^{(\abs{S})}\paren{(\bX_i - \sigma)_{i\in S}}}_{\coloneqq \bZ_S} . 
\end{equation*}
Each $\E \bZ_S = 0$ because it is linear in each input $\bX_i - \sigma$, $\E \bX_i = \sigma$, and all $\bX_i$ are independent. 
For any $S\neq T$, there exists an index $i$ such that $i$ is contained in one but not the other. Because $\bX_i$ is independent from the rest, we have $\E[\bZ_S \bZ_T] = 0$. Therefore $\Var(g(\bX_1, \ldots, \bX_n)) =  \sum_{S\subseteq [n], \, S\neq \varnothing} \E[\bZ_S^2]$. 
By the i.i.d. assumption and symmetry, all subsets $S$ of size $r$ contribute the same quantity. Grouping them proves the lemma. 
\end{proof}

\begin{proof}[Proof of \cref{eq:upovm-derivative-variance}]
Let $\bm X_1,\ldots,\bm X_n$ be i.i.d. estimators obtained as described in the unentangled procedure in \Cref{sec:small-bucket-measurements}, which satisfy $\E\bm X_i=\sigma$.
Then $\widehat{\bm F}_p=g_p(\bm X_1,\ldots,\bm X_n)$, where $\widehat{\bm F}_p$ and $g_p$ are given by \cref{eq:snapshot-polynomial-function}. 
Since $g_p$ is real-valued, symmetric, and affine in every argument, we can apply \Cref{lem:classical-efron-stein} to $g = g_p$ and obtain
\begin{equation}\label{eq:snapshot-efron-stein-variance}
 \Var(\widehat{\bm F}_p)
 =\sum_{r=1}^n\binom nr
   \E\left[h_{g_p}^{(r)}(\bm X_1-\sigma,\ldots,\bm X_r-\sigma)^2\right],
\end{equation}
where $h_{g_p}^{(r)}$ is defined in \cref{eq:classical_h_g}.
It follows from \cref{eq:snapshot-diagonal-identity} that $f_{g_p}(A) = g_p(A,\ldots,A)=\tr(p(A))$.
Therefore, 
\begin{equation}\label{eq:snapshot-coefficient-identity}
 n^{\downarrow r} h_{g_p}^{(r)}(H_1,\ldots,H_r)
 = D_p^{(r)}(H_1,\ldots,H_r) = \tr\!\left[\Theta_p^{(r)}(H_1\otimes\cdots\otimes H_r)\right],
\end{equation}
where the last equality follows from \cref{eq:derivative-kernel}. 
For $r>k$, we have $h_{g_p}^{(r)}=0$ because $p$ has degree at most $k$. 
Substituting \cref{eq:snapshot-coefficient-identity} into \cref{eq:snapshot-efron-stein-variance} and using $\binom nr=n^{\downarrow r}/r!$ gives
\begin{align*}
 \Var(\widehat{\bm{F}}_p)
 &=\sum_{r=1}^k\frac1{r!\,n^{\downarrow r}}
   \E\left[\tr\!\left[
     \Theta_p^{(r)}\paren[\big]{(\bm X_1-\sigma) \otimes \cdots \otimes (\bm X_r-\sigma)}
     \right]^2\right] \\
&\leq \sum_{r=1}^k\frac1{r!\,n^{\downarrow r}}
   \E\left[\tr\!\left[
     \Theta_p^{(r)}\paren[\big]{\bm X_1 \otimes \cdots \otimes \bm X_r}
     \right]^2\right], 
\end{align*}
where the inequality is because centering any $\bX_i$ by subtracting its mean doesn't increase the second moment. 
Applying \Cref{lem:snapshot-covariance}
below with $A=\Theta_p^{(r)}$ completes the proof.
\end{proof}

We prove the following using the second-moment formula in
\cite[Proposition~5.10]{PTTW26} and the independence of $\bX_1, \ldots, \bX_r$.

\begin{lemma}[Conditioned uniform POVM second-moment bound]
\label{lem:snapshot-covariance}
Let $r\geq1$ and let $\bm X_1,\ldots,\bm X_r$
be i.i.d. estimators obtained from the conditioned uniform POVM
in \Cref{sec:small-bucket-measurements}.
For every Hermitian operator $A$ on $(\C^d)^{\otimes r}$,
\begin{equation}\label{eq:snapshot-covariance}
 \E\left[\tr\!\left(
 A\paren[\big]{\bm X_1\otimes\cdots\otimes\bm X_r}
 \right)^2\right]
 \leq 3^r\norm{A}_{\mathrm{HS}}^2.
\end{equation}
\end{lemma}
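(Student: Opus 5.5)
The plan is to expand the square and use independence to factor the expectation into single-copy second-moment operators. Write $\tr(A(\bm X_1\otimes\cdots\otimes\bm X_r))^2=\tr\big((A\otimes A)(\bm X_1\otimes\cdots\otimes\bm X_r)^{\otimes 2}\big)$. After regrouping the tensor factors so that the two copies of each $\bm X_i$ sit together, independence gives
\[
 \E\left[\tr\!\left(A(\bm X_1\otimes\cdots\otimes\bm X_r)\right)^2\right]
 =\tr\!\left(\widetilde{A\otimes A}\;\Sigma^{\otimes r}\right),
 \qquad \Sigma:=\E[\bm X\otimes\bm X],
\]
where $\widetilde{A\otimes A}$ denotes $A\otimes A$ with its factors reordered. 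So everything reduces to bounding the single-copy operator $\Sigma$ on $\C^d\otimes\C^d$.

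Next I would compute $\Sigma$ exactly using the second-moment formula of \cite[Proposition~5.10]{PTTW26}. A discarded copy contributes $\bm X=0$. A retained copy has $\bm X=(d+1)\ketbra{\bm u}{\bm u}-I$, with $\ket{\bm u}$ drawn from the uniform POVM on the normalized state. From the standard fourth-moment identity for the uniform POVM, I expect $\Sigma$ to be an expression of the form
\[
 \Sigma = a\,(I\otimes I+\SWAP) + b\,(\sigma\otimes I+I\otimes\sigma)+(\text{terms like }(\sigma\otimes I+I\otimes\sigma)\SWAP) - (\text{identity corrections}),
\]
with $a,b=O(1)$ once the factor $\tr(\sigma)\le 1$ from conditioning is absorbed. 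The aim is to show the operator inequality $\Sigma\preceq 3\,\Omega$ for a positive operator $\Omega$ whose tensor powers act like the identity under the pairing $(B,C)\mapsto\tr(\widetilde{B\otimes C}\,\Omega^{\otimes r})$. Since the expression is a quadratic form in $A$, namely $\langle A,\mathcal S A\rangle$ for the superoperator $\mathcal S$ induced by $\Sigma^{\otimes r}$, it suffices to prove $\mathcal S_1\preceq 3\,\mathrm{id}$ for the single-copy superoperator $\mathcal S_1(B)=\E[\tr(B\bm X)\bm X]$ with respect to the Hilbert--Schmidt inner product. The tensor-power structure then gives $\mathcal S=\mathcal S_1^{\otimes r}\preceq 3^r\,\mathrm{id}$, because tensor products of PSD maps bounded by $3\,\mathrm{id}$ are bounded by $3^r\,\mathrm{id}$. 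Concretely, writing $0\preceq\mathcal S_1\preceq 3I$ and iterating $\mathcal S_1\otimes\mathcal S_2\preceq 3\,\mathrm{id}\otimes\mathcal S_2$ handles the step.

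The main obstacle is the single-copy bound $\E[\tr(B\bm X)^2]\le 3\|B\|_{\mathrm{HS}}^2$ for Hermitian $B$. I would compute it directly. The uniform POVM outcome density, given the normalized state $\tau=\sigma/\tr\sigma$, is $d\langle u|\tau|u\rangle$ relative to Haar. Using the Haar moment $\E_{\mathrm{Haar}}[\ketbra{u}{u}^{\otimes 3}]=\Pi_{\mathrm{sym}}/\binom{d+2}{3}$, I would expand
\[
 \E\!\left[\tr(B\bm X)^2\right]=\tr(\sigma)\,\E\!\left[\big((d+1)\langle u|B|u\rangle-\tr B\big)^2\right].
\]
The permutation sum then produces terms $(\tr B)^2$, $\tr(B^2)$, $\tr(B)\tr(B\tau)$ and $\tr(B^2\tau)$ with coefficients $(d+1)^2 d/\big((d+1)(d+2)\big)$ and similar. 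Taking those coefficients as $O(1)$, the $(\tr B)^2$ pieces should cancel or come with nonpositive sign, because of the $-I$ centering. The remaining terms are bounded using $\tr(B^2\tau)\le\|B\|_{\mathrm{HS}}^2$ and $|\tr(B)\tr(B\tau)|$. That last term is the delicate one: if it does not cancel, I would fall back to the cruder bound $\langle u|B|u\rangle^2\le\langle u|B^2|u\rangle$. Tracking the constants to get exactly $3$, rather than merely $O(1)$, is where I expect the care to be needed. The factor $\tr\sigma\le1$ only helps.
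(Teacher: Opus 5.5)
Your reduction to one copy is correct and is the paper's induction in different notation. The superoperator $\mathcal S_1(B)=\E[\tr(B\bm X)\bm X]$ is PSD in the Hilbert--Schmidt inner product, and independence gives $\mathcal S=\mathcal S_1^{\otimes r}$. The paper writes $A=\sum_a E_a\otimes A_a$, conditions on $\bm X_2,\ldots,\bm X_r$, applies the $r=1$ bound and inducts; this is exactly the chain $\mathcal S_1\otimes\mathcal S_{r-1}\preceq 3\,\mathrm{id}\otimes\mathcal S_{r-1}\preceq 3^r\,\mathrm{id}$. You should drop the operator-level framing $\Sigma\preceq 3\Omega$, though. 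The operator $\E[\bm X\otimes\bm X]$ has negative eigenvalues on the antisymmetric subspace, and the $\Omega$ realizing the Hilbert--Schmidt pairing is $\SWAP$, which is not positive. Only the superoperator statement is meaningful.

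The gap is in the base case, which is where the constant $3$ comes from. Finishing your Haar computation reproduces the identity the paper imports from \cite[Proposition~5.10]{PTTW26}:
\[
 (d+2)\,\E[\tr(B\bm X)^2]=(d+1)\tr(\sigma)\tr(B^2)+2(d+1)\tr(\sigma B^2)-\tr(\sigma)(\tr B)^2-2\tr(B)\tr(\sigma B).
\]
The cross term does not cancel, and your fallback does not rescue it. Applying $\langle u|B|u\rangle^2\le\langle u|B^2|u\rangle$ inside the expectation turns the $(d+1)^2\langle u|B|u\rangle^2$ contribution into $(d+1)\bigl(\tr(\sigma)\tr(B^2)+\tr(\sigma B^2)\bigr)$. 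For traceless $B$ and $\tr\sigma=1$ this is at least $(d+1)\|B\|_{\mathrm{HS}}^2$, so it loses a factor of order $d$.

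Another option is to discard $-\tr(\sigma)(\tr B)^2$ and bound the cross term using $|\tr B|\le\sqrt d\,\|B\|_{\mathrm{HS}}$. This gives only $(3d+3+2\sqrt d)/(d+2)=3+O(d^{-1/2})$. That is not enough, because the single-copy constant is raised to the $r$th power.

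The missing idea is to pair the cross term with the negative $(\tr B)^2$ term and complete the square:
\[
 -\tr(\sigma)(\tr B)^2-2\tr(B)\tr(\sigma B)=\tr(\sigma B^2)-\tr\!\bigl[\sigma\bigl(B+\tr(B)I\bigr)^2\bigr]\le\tr(\sigma B^2).
\]
Then $\tr(\sigma)\le1$ and $\tr(\sigma B^2)\le\|B\|_{\mathrm{HS}}^2$ give $(d+2)\,\E[\tr(B\bm X)^2]\le(3d+4)\|B\|_{\mathrm{HS}}^2$. Since $(3d+4)/(d+2)\le 3$, this is exactly the paper's base case.
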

\begin{proof}
We argue by induction on $r$. For $r=1$, 
\begin{align*}
      (d+2)\E[\tr(A\bm X_1)^2] &= (d+2) \tr\left[(A \otimes A) \E[\bX_1 \otimes \bX_1]\right] \\
      &= \tr \left[ (A \otimes A) \cdot
            ((d+1) \SWAP - I) \cdot (\tr(\sigma) I \otimes I + \sigma \otimes I + I \otimes \sigma)
      \right] \tag{\cite[Proposition~5.10]{PTTW26}} \\
      &= (d+1)\tr(\sigma) \tr(A^2)+(2d+3)\tr(\sigma A^2) - \tr[\sigma(A + \tr(A)I)^2] \\
      &\leq (3d+4) \norm{A}_{\mathrm{HS}}^2, 
\end{align*}
where in the last inequality we used $\tr(\sigma)\leq1$ and $0\preceq \sigma\preceq I_d$.  
The bound $(3d+4)/(d+2)\leq 3$ therefore proves
\cref{eq:snapshot-covariance} for $r=1$.

Now consider $r \geq 2$. 
Denote by $\{E_a\}$ a Hilbert--Schmidt orthonormal basis of Hermitian $d\times d$ matrices such that $\tr(E_a E_b) = \delta_{ab}$. 
Then every Hermitian matrix $A$ can be written as $A=\sum_a E_a\otimes A_a$ for some Hermitian matrices $A_a$ on the remaining $r-1$ tensor factors.
Write $\bZ = \bm X_2\otimes\cdots\otimes\bm X_r$. 
Then
\begin{equation*}
\begin{aligned}
\tr\!\left[A(\bm X_1\otimes\bZ)\right]
&=\sum_a\tr\!\left[(E_a\bm X_1)\otimes \bigl(A_a\bZ\bigr)\right]\\
&=\sum_a\tr(E_a\bm X_1) \tr(A_a\bZ)
= \tr \left[\paren*{\sum_a \tr(A_a\bZ) E_a} \bX_1\right].
\end{aligned}
\end{equation*}
Since $\bm Z$ is independent of $\bm X_1$, we may condition on
$\bm Z$ and apply the base case to the Hermitian matrix
$\sum_a\tr(A_a\bm Z)E_a$.
Therefore,
\begin{align*}
\E \left[\tr\!\left[A(\bm X_1\otimes\bZ)\right]^2 \right] &= \E_{\bZ} \E_{\bX_1} \left[ \tr \left[\paren*{\sum_a \tr(A_a\bZ) E_a} \bX_1\right]^2 \right] \\
&\leq 3 \E_{\bZ} \tr \left[\paren*{\sum_a \tr(A_a\bZ) E_a}^2\right] \tag{base case of $r=1$} \\
&= 3 \sum_a \E_{\bZ} \tr(A_a \bZ)^2 \tag{$\tr(E_a E_b) = \delta_{ab}$} \\
&\leq 3 \sum_a 3^{r-1} \norm{A_a}_{\mathrm{HS}}^2 \tag{induction hypothesis} \\
&= 3^r \norm{A}_{\mathrm{HS}}^2 \tag{Hilbert-Schmidt norm orthogonality}, 
\end{align*}
which completes the proof. 
\end{proof}

\subsubsection{Entangled measurements}
\label{sec:backend-wss}

To prove \cref{eq:wss-derivative-variance}, we use the following
variant of the quantum Efron--Stein decomposition in
\cite[Theorem~6.9 and Proposition~6.11]{PFMO25}.
The decomposition also appears under the name quantum Hoeffding
decomposition in \cite[Theorem~5.6]{GB10}.
For completeness, we also include a proof. 

\begin{lemma}[Variance bound from the quantum Efron--Stein decomposition]
\label{lem:quantum-efron-stein}
Let $\rho\in \mathrm{D}(\calH)$ for some Hilbert space
$\mathcal H$, and let $G$ be an observable on $\mathcal H^{\otimes n}$
invariant under permutations of the tensor factors.
Define 
\begin{equation*}
      f_G(A)\coloneqq \tr(G A^{\otimes n}).
\end{equation*}
For any $1\leq r\leq n$, choose a permutation-invariant Hermitian operator $K_r$ on $\mathcal H^{\otimes r}$ satisfying
\begin{equation}\label{eq:K_r}
\tr \left[K_r (H_1 \otimes \cdots \otimes H_r)\right] = \frac{1}{n^{\downarrow r}} \frac{\partial^r}{\partial t_1\cdots\partial t_r} \left. f_G\!\left(\rho+\sum_{j=1}^r t_jH_j\right) \right|_{t_1=\cdots=t_r=0}
\end{equation}
for all traceless Hermitian $H_1, \ldots, H_r$. 
Then the variance of measuring $G$ in state $\rho^{\otimes n}$ satisfies
\begin{equation*}
 \tr(G^2\rho^{\otimes n})-\tr(G\rho^{\otimes n})^2
 \leq\sum_{r=1}^n\binom nr \tr\paren*{K_r^2 \rho^{\otimes r}}. 
\end{equation*}
\end{lemma}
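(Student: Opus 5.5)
We would prove the statement by constructing the quantum Efron--Stein decomposition explicitly, identifying its components, and then using orthogonality. For a single-copy operator $A$, write $\mathcal C_\rho(A)=A-\tr(\rho A)I$ and $\mathcal E_\rho(A)=\tr(\rho A)I$, so $\mathcal C_\rho+\mathcal E_\rho=\mathrm{id}$. Expanding $\mathrm{id}^{\otimes n}=\bigotimes_{i}(\mathcal C_\rho+\mathcal E_\rho)_i$ and applying it to $G$ gives
\[
G=\sum_{S\subseteq[n]}Z_S,
\qquad
Z_S=\Big(\bigotimes_{i\in S}\mathcal C_\rho\otimes\bigotimes_{i\notin S}\mathcal E_\rho\Big)(G).
\]
Each $Z_S$ acts as the identity outside $S$, and $Z_\varnothing=\tr(G\rho^{\otimes n})I$. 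The operator $Z_S$ is centered in every copy of $S$: taking the partial trace against $\rho$ on any $i\in S$ gives zero, because $\tr_i[\rho_i\,\mathcal C_\rho(\cdot)]=0$.

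\textbf{Orthogonality.} Next we show $\tr(\rho^{\otimes n}Z_SZ_T)=0$ for $S\ne T$. Pick a copy $i$ in exactly one of $S,T$. On that copy one factor is the identity and the other is centered. Because $\rho^{\otimes n}$ is a product state, we can take the partial trace over copy $i$ of $\rho_i Z_SZ_T$ first, and this gives zero. This step is where care is needed with noncommutativity. Write $Z_S=\sum_j A_j\otimes B_j$ with $A_j$ on copy $i$ and $B_j$ on the rest, where $\sum_j\tr(\rho A_j)B_j=0$, and write $Z_T=I_i\otimes B'$. Then
\[
\tr_i\big(\rho_i Z_SZ_T\big)=\sum_j\tr(\rho A_j)\,B_jB'=0.
\]
Hence the variance equals $\sum_{S\ne\varnothing}\tr(\rho^{\otimes n}Z_S^2)$. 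By permutation invariance of $G$, $Z_S$ is a relabeling of a fixed operator $G_r$ on $|S|=r$ copies, tensored with identities. The identities contribute $\tr\rho=1$, so
\[
\mathrm{Var}=\sum_{r\ge1}\binom nr\tr(\rho^{\otimes r}G_r^2).
\]

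\textbf{Identifying $G_r$ with centered $K_r$.} Perturb to $\rho+\sum_j t_jH_j$ with $H_j$ traceless Hermitian. Because $Z_S$ is centered in each copy of $S$, every $\rho$ factor landing in $S$ kills the term. Each $H_j$ is traceless, so an $H_j$ landing outside $S$ also gives zero. Therefore the coefficient of $t_1\cdots t_r$ in $f_G$ is contributed exactly by the $Z_S$ with $|S|=r$, with the $H_j$ distributed bijectively onto $S$. There are $n^{\downarrow r}$ injective assignments, and each gives $\tr[G_r(H_1\otimes\cdots\otimes H_r)]$ by symmetry of $G_r$. Comparing with \cref{eq:K_r} yields
\[
\tr[G_r(H_1\otimes\cdots\otimes H_r)]=\tr[K_r(H_1\otimes\cdots\otimes H_r)]
\]
for all traceless $H_j$. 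Since $G_r$ is centered, $G_r=\mathcal C_\rho^{\otimes r}(G_r)$. Each $\mathcal C_\rho$ depends on its input $A$ only through the pairing of $A$ with traceless matrices: a basis for operators is spanned by $I$ and traceless matrices, $\mathcal C_\rho(I)=0$, and $\mathcal C_\rho$ is linear. Hence $G_r=\mathcal C_\rho^{\otimes r}(K_r)$.

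\textbf{Final inequality.} It remains to show $\tr(\rho^{\otimes r}\,\mathcal C_\rho^{\otimes r}(K)^2)\le\tr(\rho^{\otimes r}K^2)$. Use the real inner product $\langle A,B\rangle_\rho=\mathrm{Re}\,\tr(\rho^{\otimes r}AB)$. Expand $K=\sum_{S\subseteq[r]}\big(\mathcal C_\rho^{\otimes S}\otimes\mathcal E_\rho^{\otimes S^c}\big)(K)$ and apply the orthogonality argument above, which did not use permutation invariance. The full-$[r]$ term is then one of several orthogonal pieces, so its $\rho$-weighted second moment is at most that of $K$.

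\textbf{Main obstacle.} The main obstacle is the noncommutative orthogonality bookkeeping: the weight $\rho$ does not commute with the operators, so centering must be taken in the correct order of partial traces, and the inner product may need to be symmetrized (real part). The fact that $\rho^{\otimes n}$ is a product and the partial traces are exact should be enough to handle this.
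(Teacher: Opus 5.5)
Your proof is correct. It reaches the same decomposition and the same final bound as the paper, but the middle step runs in the opposite direction.

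\textbf{The paper's route.} The paper starts from $K_r$ and defines $\widetilde K_r=\mathcal C_\rho^{\otimes r}(K_r)$. It then uses a one-variable Taylor expansion along $\Delta=\tau-\rho$, together with $\tr(\mathcal C_\rho^{\otimes r}(K)\,\tau^{\otimes r})=\tr(K(\tau-\rho)^{\otimes r})$. This shows that $G$ and $\tr(G\rho^{\otimes n})I+\sum_{S\neq\varnothing}\widetilde K_{|S|}^{(S)}$ have the same expectation on every $\tau^{\otimes n}$. It concludes operator equality because both sides are permutation invariant, and such operators are determined by their expectations on product states.

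\textbf{Your route.} You start from $G$. Expanding $\mathrm{id}=\mathcal C_\rho+\mathcal E_\rho$ on each factor gives the centered components directly, so existence and orthogonality hold for any $G$. You then identify each component with $\mathcal C_\rho^{\otimes r}(K_r)$ in two steps. First, you match the $t_1\cdots t_r$ coefficient in traceless directions. Second, you note that an operator annihilating all $H_1\otimes\cdots\otimes H_r$ with traceless $H_j$ is a sum of terms with an identity factor, and $\mathcal C_\rho^{\otimes r}$ kills such terms.

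\textbf{Comparison.} The paper's argument is shorter. However, it relies on two things: the fact that symmetric operators are determined by their expectations on product states (not proved there), and the permutation invariance of $K_r$. Your argument needs neither. It also proves the contraction ``centering cannot increase the $\rho$-weighted second moment,'' which the paper only asserts.

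\textbf{One point to tighten.} In the last step, set $Y_S:=(\mathcal C_\rho^{\otimes S}\otimes\mathcal E_\rho^{\otimes S^c})(K)$ and note that these pieces are Hermitian, since both maps preserve Hermiticity and $K$ is Hermitian. Then each $\tr(\rho^{\otimes r}Y_S^2)=\tr(Y_S\rho^{\otimes r}Y_S)\geq0$, so the discarded terms are nonnegative and the inequality follows. Your partial-trace computation already makes the cross terms vanish exactly, so you do not need the real-part symmetrization of the inner product.
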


\begin{proof}
For any density matrix $\tau$, the difference $\Delta:=\tau-\rho$
is traceless. Since $f_G$ is a polynomial of degree at most $n$,
Taylor's formula and the definition of $K_r$ give
\begin{equation}\label{eq:quantum-coefficient-expansion}
 \begin{aligned}
 \tr(G\tau^{\otimes n})
 &=\tr(G\rho^{\otimes n}) + \sum_{r=1}^n\frac1{r!}
   \left.\frac{\mathrm d^r}{\mathrm dt^r}
   f_G(\rho+t\Delta)\right|_{t=0}=\tr(G\rho^{\otimes n}) + \sum_{r=1}^n\binom nr
      \tr\!\left[K_r(\tau-\rho)^{\otimes r}\right],
 \end{aligned}
\end{equation}
where we used $n^{\downarrow r}/r!=\binom nr$. 
Define $\mathcal C_\rho(A):=A-\tr(\rho A)I$ and $\widetilde K_r:=\mathcal C_\rho^{\otimes r}(K_r)$. 
We now show that \cref{eq:quantum-coefficient-expansion} gives a useful decomposition of $G$: 
\begin{equation}\label{eq:decompose_G}
      G=\tr(G\rho^{\otimes n})I^{\otimes n} + \sum_{S\subseteq[n], \, S\neq \varnothing} \widetilde K_{|S|}^{(S)}, 
\end{equation}
where $\widetilde K_{|S|}^{(S)}$ acts on the factors in $S$ as $\widetilde K_{|S|}$ and as
the identity elsewhere.
Since both sides of \cref{eq:decompose_G} are permutation-invariant, it suffices to check that $\tr(G \tau^{\otimes n}) = \tr(\mathrm{RHS} \cdot \tau^{\otimes n})$ for every density operator $\tau$. 
Indeed,
\begin{align*}
      \tr(\mathrm{RHS}\cdot \tau^{\otimes n}) &= \tr(G\rho^{\otimes n}) + \sum_{r=1}^n \binom{n}{r} \tr\paren*{\widetilde{K}_r \tau^{\otimes r}} \\
      &= \tr(G\rho^{\otimes n}) + \sum_{r=1}^n\binom nr
      \tr\!\left[K_r(\tau-\rho)^{\otimes r}\right] = \tr(G \tau^{\otimes n}). 
\end{align*}

Write $Z_S \coloneqq \widetilde K_{|S|}^{(S)}$. 
Each $Z_S$ has zero mean when any factor in $S$ is averaged against $\rho$. 
For $S\ne T$, choose an index belonging to exactly one of
the two subsets. Averaging over this factor gives
$\tr(\rho^{\otimes n}Z_SZ_T)=0$.
Therefore, taking the variance and grouping subsets of the same size gives
\begin{align*}
 \tr(G^2\rho^{\otimes n})-\tr(G\rho^{\otimes n})^2
 =\sum_{S\subseteq[n], \, S\ne\varnothing}
      \tr(\rho^{\otimes n}Z_S^2)=\sum_{r=1}^n\binom nr
      \tr\paren*{\widetilde K_r^2\rho^{\otimes r}}.
\end{align*}
Since centering each tensor factor with respect to $\rho$ cannot increase the second moment, we have $\tr(\widetilde K_r^2\rho^{\otimes r})
 \leq \tr(K_r^2\rho^{\otimes r}),$
which proves the claim.
\end{proof}

\begin{proof}[Proof of \cref{eq:wss-derivative-variance}]
Let $G=\mathsf F_p$, the permutation-invariant observable whose measured
value is $\widehat{\bm F}_p$.
By \cref{eq:wss-polynomial-expectation},
$f_G(A)=\tr(G A^{\otimes n}) = \tr(p(\overline\Pi A\overline\Pi))$
for every trace-one Hermitian matrix $A$.
Choose
\[
 K_r=\frac1{n^{\downarrow r}}\,
     \overline\Pi^{\otimes r}\Theta_p^{(r)}\overline\Pi^{\otimes r},
 \quad 1\leq r\leq k,
 \qquad K_r=0,\quad k<r\leq n.
\]
These operators satisfy \cref{eq:K_r} by \cref{eq:derivative-kernel},
since derivatives of order greater than $k$ vanish.
Applying \Cref{lem:quantum-efron-stein} and using $\sigma=\overline\Pi\rho\overline\Pi$
and $0\preceq\overline\Pi^{\otimes r}\preceq I$ gives
\begin{align*}
 \Var(\widehat{\bm F}_p)
 &\leq\sum_{r=1}^k\binom nr
       \tr(K_r^2\rho^{\otimes r})=\sum_{r=1}^k\frac1{r!\,n^{\downarrow r}}
       \tr\!\left[
                   \overline\Pi^{\otimes r}\Theta_p^{(r)} \sigma^{\otimes r}\Theta_p^{(r)}\right]\\
&\leq \sum_{r=1}^k\frac1{r!\,n^{\downarrow r}}
       \tr\!\left[
                   \Theta_p^{(r)} \sigma^{\otimes r}\Theta_p^{(r)}\right]=\sum_{r=1}^k\frac1{r!\,n^{\downarrow r}}
       \left\|\Theta_p^{(r)}
                   (\sqrt\sigma)^{\otimes r}\right\|_{\mathrm{HS}}^2. \qedhere
\end{align*}
\end{proof}

\subsection{Bounding the derivative operators}
\label{sec:derivative-norms}

It remains to bound $\|\Theta_p^{(r)}\|_{\mathrm{HS}}^2$ and $\|\Theta_p^{(r)}(\sqrt{\sigma})^{\otimes r}\|_{\mathrm{HS}}^2$ in \Cref{lem:derivative-variance}.
This subsection uses only matrix analysis.
The bounds in \Cref{lem:derivative-norms} apply to any $0\preceq\sigma\preceq LI_d$ with $\tr(\sigma)\leq1$ and do not depend on the measurement scheme.

The following lemma is the matrix-derivative formula with a bound on the coefficients. We prove it in \Cref{app:trace-derivatives}.

\begin{restatable}[Trace derivative formula]{lemma}{tracederivatives}
\label{lem:trace-derivatives}
Let $p$ be a real polynomial of degree at most $k$, and let
$\sigma=\operatorname{diag}(\lambda_1,\ldots,\lambda_d)$ with
$\lambda_i\in[0,L]$.
For every $1\leq r\leq k$, there are real coefficients
$w_{i_1,\ldots,i_r}$, symmetric in their indices, satisfying
\[
 |w_{i_1,\ldots,i_r}|
 \leq \|p^{(r)}\|_{\infty,[0,L]},
\]
such that, for all Hermitian $H_1,\ldots,H_r\in\C^{d\times d}$,
\begin{equation}\label{eq:trace-derivative-formula}
 D_p^{(r)}(H_1,\ldots,H_r)
 =\frac{1}{(r-1)!}
 \sum_{\substack{\pi\in S_r\\\pi(1)=1}}
 \sum_{i_1,\ldots,i_r=1}^d
 w_{i_1,\ldots,i_r}
 \prod_{j=1}^r [H_{\pi(j)}]_{i_j i_{j+1}},
\end{equation}
where $i_{r+1}=i_1$.
The condition $\pi(1)=1$ means that we sum over all orders of
$H_2,\ldots,H_r$, keeping $H_1$ first.
\end{restatable}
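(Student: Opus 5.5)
The plan is to prove the formula for monomials $p(x)=x^m$ by direct expansion, identify the resulting coefficients as divided differences of $p'$, and then conclude by linearity in $p$ together with the Hermite--Genocchi formula. Fix $r\leq m$ and write $X=\sigma+\sum_{j}t_jH_j$. The coefficient of $t_1\cdots t_r$ in $\tr(X^m)$ is the sum, over all words of length $m$ that contain each $H_j$ exactly once and $\sigma$ in the remaining $m-r$ positions, of the trace of the corresponding product. I would group these words into cyclic classes. Since every word contains exactly one $H_1$, a word has no nontrivial cyclic symmetry. Hence each cyclic class contains exactly $m$ linear words, all with the same trace, and exactly one of them starts with $H_1$. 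A word starting with $H_1$ is determined by an ordering $\pi\in S_r$ with $\pi(1)=1$ of the $H$'s together with gap sizes $b_1,\ldots,b_r\geq0$ with $\sum_j b_j=m-r$, where $b_j$ is the power of $\sigma$ following $H_{\pi(j)}$. This gives
\[
 D^{(r)}_{x^m}(H_1,\ldots,H_r)=m\sum_{\pi(1)=1}\ \sum_{b_1+\cdots+b_r=m-r}\tr\!\bigl(H_{\pi(1)}\sigma^{b_1}H_{\pi(2)}\sigma^{b_2}\cdots H_{\pi(r)}\sigma^{b_r}\bigr).
\]

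Next I would expand the trace in the eigenbasis of the diagonal $\sigma$. The term for fixed $\pi$ becomes a sum over indices $i_1,\ldots,i_r$ of $\prod_j [H_{\pi(j)}]_{i_ji_{j+1}}$ times the weight $\sum_{b}\prod_j\lambda_{i_{j+1}}^{b_j}$ (with $i_{r+1}=i_1$). After relabeling, this weight is the complete homogeneous symmetric polynomial $h_{m-r}(\lambda_{i_1},\ldots,\lambda_{i_r})$. The classical identity $h_{m-r}(x_1,\ldots,x_r)=[x_1,\ldots,x_r]\,x^{m-1}$ expresses this as an $(r-1)$-st order divided difference. So I would set
\[
 w_{i_1,\ldots,i_r}:=(r-1)!\,[\lambda_{i_1},\ldots,\lambda_{i_r}]\,p',
\]
which for $p=x^m$ equals $(r-1)!\,m\,h_{m-r}(\lambda_{i_1},\ldots,\lambda_{i_r})$ and is linear in $p$. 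This reproduces \cref{eq:trace-derivative-formula}, including the prefactor $1/(r-1)!$.

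For a general polynomial $p$, both sides of \cref{eq:trace-derivative-formula} are linear in $p$, so the formula holds with this definition of $w$. Divided differences are symmetric in their nodes, so $w$ is symmetric in its indices, and this also handles repeated eigenvalues, where the divided difference becomes a confluent one. For the bound, I would apply the Hermite--Genocchi formula:
\[
 [x_1,\ldots,x_r]\,g=\int_{\Delta_{r-1}}g^{(r-1)}\Bigl(\textstyle\sum_j s_jx_j\Bigr)\diff s,
\]
where the simplex $\Delta_{r-1}$ has volume $1/(r-1)!$. Taking $g=p'$ shows that $w$ is the uniform average of $p^{(r)}$ over convex combinations of $\lambda_{i_1},\ldots,\lambda_{i_r}\in[0,L]$. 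Since every such convex combination lies in $[0,L]$, we get $|w|\leq\|p^{(r)}\|_{\infty,[0,L]}$.

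I expect the main obstacle to be the bookkeeping in the first step. The factor $m$ from the cyclic rotations must combine exactly with $h_{m-r}$ to give the divided difference of $p'$, rather than of $p$. The gap indices must also attach to the correct eigenvalue indices, so that the weight is genuinely $h_{m-r}$ of the $\lambda_{i_j}$. I would check both points against the already-computed cases $r=1$ and $r=2$: for $r=1$ the formula should give $w_i=p'(\lambda_i)$, and for $r=2$ it should give $w_{ab}=\int_0^1p''\bigl((1-t)\alpha_a+t\alpha_b\bigr)\diff t$, as in the overview.
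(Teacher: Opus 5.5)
Your proposal is correct and follows the same structure as the paper's proof: reduce to monomials, get the word formula with $H_1$ fixed first and an overall factor $m$, expand in the eigenbasis to obtain the weight $(r-1)!\,m\,h_{m-r}(\lambda_{i_1},\ldots,\lambda_{i_r})$, and recognize it as $(r-1)!$ times the divided difference of $p'$ at the eigenvalues. The differences are only technical: the paper derives the word formula by product-rule induction rather than your cyclic-orbit count, and it bounds the divided difference by Lagrange interpolation and Rolle's theorem, with a perturbation argument for repeated eigenvalues; your Hermite--Genocchi step instead gives the averaging statement directly and covers coincident nodes without a limit.
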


\begin{example}
For $r=2$, the formula reads
\begin{equation*}
 D_p^{(2)}(H_1,H_2)
 =\sum_{i,j=1}^d w_{ij}[H_1]_{ij}[H_2]_{ji},
 \qquad
 w_{ij}=\int_0^1
 p''\bigl((1-t)\lambda_i+t\lambda_j\bigr)\diff t.
\end{equation*}
Thus $w_{ij}$ is an average of $p''$ between the two eigenvalues,
so $|w_{ij}|\leq\|p''\|_{\infty,[0,L]}$.
\end{example}

\begin{lemma}[Bounds for the derivative operators]
\label{lem:derivative-norms}
Let $p$ be a polynomial of degree at most $k$, and let
$0\preceq\sigma\preceq LI_d$ with $\tr(\sigma)\leq1$.
For every $1\leq r\leq k$, the derivative operator $\Theta_p^{(r)}$ in
\cref{eq:derivative-kernel} satisfies
\begin{align*}
 \|\Theta_p^{(r)}\|_{\mathrm{HS}}^2
 &\leq\frac{d^{\uparrow r}}{(r-1)!}
 \norm{p^{(r)}}_{\infty,[0,L]}^2,\\
 \left\|\Theta_p^{(r)}(\sqrt{\sigma})^{\otimes r}\right\|_{\mathrm{HS}}^2
 &\leq\frac{\prod_{j=0}^{r-1}(1+jL)}{(r-1)!}
 \norm{p^{(r)}}_{\infty,[0,L]}^2.
\end{align*}
\end{lemma}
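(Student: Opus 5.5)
The plan is to write $\Theta_p^{(r)}$ explicitly in the eigenbasis of $\sigma$ using \Cref{lem:trace-derivatives}, and then compute both Hilbert--Schmidt norms as weighted sums of squared entries. The combinatorics over cycles is handled by a generating-function identity for cycle counts in $S_r$.

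\textbf{Step 1: entries of $\Theta_p^{(r)}$.} Since both norms are unitarily invariant, I work in an eigenbasis of $\sigma$, so $\sigma=\operatorname{diag}(\lambda_1,\ldots,\lambda_d)$. Let $\mathcal C_r$ be the set of $r$-cycles in $S_r$; it has $(r-1)!$ elements, matching the sum over $\pi$ with $\pi(1)=1$. Expanding $\prod_j[H_{\pi(j)}]_{i_ji_{j+1}}$ against $H_1\otimes\cdots\otimes H_r$ shows that each term in \cref{eq:trace-derivative-formula} is $\tr[(W R_c)(H_1\otimes\cdots\otimes H_r)]$ for a suitable cycle $c$. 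Here $W$ is the operator diagonal in the product basis with entries $w_{i_1,\ldots,i_r}$, and $R_c$ is the corresponding cyclic permutation operator.

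This suggests the candidate
\[
\Theta_p^{(r)}=\frac1{(r-1)!}\sum_{c\in\mathcal C_r} W R_c .
\]
Because $w$ is symmetric in its indices, $W$ commutes with every $R_\pi$. Because $\mathcal C_r$ is closed under inversion, the candidate is Hermitian. By the uniqueness in \cref{eq:derivative-kernel}, it equals $\Theta_p^{(r)}$. Concretely, for multi-indices $a,b\in[d]^r$,
\[
(\Theta_p^{(r)})_{a,b}=\frac{w_a}{(r-1)!}\,N_c(a,b),
\]
where $N_c(a,b)$ is the number of $c\in\mathcal C_r$ with $b=c\cdot a$, meaning the entries of $a$ permuted by $c$. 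I expect this bookkeeping to be the main obstacle. One must check that the index convention $i_{r+1}=i_1$, the direction of $R_c$, and the ordering of $H_{\pi(j)}$ all match, and that coinciding cycles are counted with multiplicity rather than overwritten. I would verify it first against the $r=2$ example, where $\Theta_{(a,b),(b,a)}=w_{ab}$.

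\textbf{Step 2: reduce to a sum over permutations.} For the weighted norm,
\[
\|\Theta(\sqrt\sigma)^{\otimes r}\|_{\mathrm{HS}}^2=\sum_{a,b}|\Theta_{a,b}|^2\prod_j\lambda_{b_j}.
\]
The product $\prod_j\lambda_{b_j}$ equals $\prod_j\lambda_{a_j}$, since $b$ is a permutation of $a$. Bounding $|w_a|\le\|p^{(r)}\|_{\infty,[0,L]}$ gives
\[
\le\frac{\|p^{(r)}\|^2_{\infty,[0,L]}}{((r-1)!)^2}\sum_{c,c'\in\mathcal C_r}\ \sum_{a:\ c\cdot a=c'\cdot a}\ \prod_j\lambda_{a_j}.
\]
This uses $\sum_b N_c(a,b)^2=\#\{(c,c'):c\cdot a=c'\cdot a\}$. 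The constraint $c\cdot a=c'\cdot a$ says that $a$ is constant on every cycle of $\tau=c^{-1}c'$. Therefore the inner sum factorizes as
\[
\prod_{\text{cycles of }\tau}\tr(\sigma^{\ell})\le L^{\,r-\mathrm{cyc}(\tau)},
\]
using $\tr(\sigma^\ell)\le L^{\ell-1}\tr\sigma\le L^{\ell-1}$. For the unweighted norm the same computation with all weights equal to $1$ gives $d^{\mathrm{cyc}(\tau)}$.

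\textbf{Step 3: sum over cycles.} For fixed $c$, the map $c'\mapsto c^{-1}c'$ is injective. So I can bound the sum over the $(r-1)!$ choices of $c'$ by the sum over all $\tau\in S_r$. I then use the standard identity
\[
\sum_{\tau\in S_r}x^{\mathrm{cyc}(\tau)}=x^{\uparrow r}.
\]
With $x=d$ this gives $d^{\uparrow r}$. With weight $L^{r-\mathrm{cyc}(\tau)}$ it gives $L^r(1/L)^{\uparrow r}=\prod_{j=0}^{r-1}(1+jL)$. The $(r-1)!$ choices of $c$ cancel one factor of $(r-1)!$ in the denominator. This leaves exactly the stated bounds
\[
\frac{d^{\uparrow r}}{(r-1)!}\,\|p^{(r)}\|_{\infty,[0,L]}^2
\qquad\text{and}\qquad
\frac{\prod_{j=0}^{r-1}(1+jL)}{(r-1)!}\,\|p^{(r)}\|_{\infty,[0,L]}^2 .
\]
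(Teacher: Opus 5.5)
Your proposal is correct and follows the paper's proof essentially step for step. Both arguments use the same representation $\Theta_p^{(r)}=\frac{1}{(r-1)!}W_r\sum_{c\in\mathcal C_r}R_c$, reduce to a double sum over $(c,c')$ of (weighted) traces of $R_{c^{-1}c'}$, extend that sum to all of $S_r$ by injectivity and nonnegativity, and finish with the cycle-count identity $\sum_{\tau\in S_r}x^{\mathrm{cyc}(\tau)}=x^{\uparrow r}$. The only cosmetic difference is that you bound $|w_a|$ entrywise in an explicit entry computation, whereas the paper pulls out $\|W_r\|_\infty$ via $\|W_rM\|_{\mathrm{HS}}\leq\|W_r\|_\infty\|M\|_{\mathrm{HS}}$ and then quotes $\tr(R_\pi)=d^{\#\pi}$ and $\tr(\sigma^{\otimes r}R_\pi)\leq L^{r-\#\pi}$.
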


\begin{proof}
Let us work in an eigenbasis of $\sigma$.
Let $w_{i_1,\ldots,i_r}$ be the coefficients in
\Cref{lem:trace-derivatives}, and define $W_r$ to be diagonal in
the tensor-product basis, with diagonal entry $w_{i_1,\ldots,i_r}$
at $\ket{i_1}\otimes\cdots\otimes\ket{i_r}$.
Then
\begin{equation}\label{eq:HG}
 \|W_r\|_\infty\leq\|p^{(r)}\|_{\infty,[0,L]}.
\end{equation}

Let $\calC_r$ be the set of $r$-cycles on $[r]$.  We use the
permutation operators $R_\pi$ defined in \Cref{sec:small-bucket-measurements},
acting here on $r$ tensor factors.
Since the coefficients $w_{i_1,\ldots,i_r}$ are symmetric,
the permutation-trace formula in \cite[Lemma~5.6]{PTTW26}
identifies each ordering in \cref{eq:trace-derivative-formula}
with one $r$-cycle starting at $1$.
Every $r$-cycle occurs once, so
\begin{equation}\label{eq:divdiff-rep}
 \Theta_p^{(r)}
 =\frac1{(r-1)!}W_r\sum_{c\in\calC_r}R_c.
\end{equation}
Using \cref{eq:divdiff-rep},
$\|W_rM\|_{\mathrm{HS}}\leq\|W_r\|_\infty\|M\|_{\mathrm{HS}}$,
and $|\calC_r|=(r-1)!$, we have
\begin{equation*}
 \|\Theta_p^{(r)}\|_{\mathrm{HS}}^2
 \leq\frac{\|W_r\|_\infty^2}{((r-1)!)^2}
       \sum_{c,c'\in\calC_r}
       \tr(R_{c^{-1}c'}) \leq \frac{\|W_r\|_\infty^2}{(r-1)!}
       \sum_{\pi\in S_r}
       \tr(R_\pi).
\end{equation*}
In the last inequality, for each fixed $c\in\calC_r$, the permutations $c^{-1}c'$ are
distinct as $c'$ varies. Since the traces are nonnegative,
we extend the inner sum to all of $S_r$.
Similarly,
\begin{equation*}
 \left\|\Theta_p^{(r)}(\sqrt{\sigma})^{\otimes r}\right\|_{\mathrm{HS}}^2
 \leq\frac{\|W_r\|_\infty^2}{(r-1)!}
       \sum_{\pi\in S_r}
       \tr(\sigma^{\otimes r}R_\pi).
\end{equation*}

The permutation-trace formula in \cite[Lemmas~5.6--5.7]{PTTW26}
and the assumptions $0\preceq\sigma\preceq LI_d$ and
$\tr(\sigma)\leq1$ give, for every $\pi\in S_r$ with $\#\pi$ cycles,
\[
 \tr(R_\pi)=d^{\#\pi},
 \qquad
 0\leq\tr(\sigma^{\otimes r}R_\pi)\leq L^{r-\#\pi}.
\]
Therefore
\begin{align*}
 \sum_{\pi\in S_r}\tr(R_\pi)
 &=\sum_{\pi\in S_r}d^{\#\pi}=d^{\uparrow r},\\
 \sum_{\pi\in S_r}\tr(\sigma^{\otimes r}R_\pi)
 &\leq\sum_{\pi\in S_r}L^{r-\#\pi}
 =\prod_{j=0}^{r-1}(1+jL).
\end{align*}
Combining these estimates with \cref{eq:HG} proves both bounds.
\end{proof}

\begin{proof}[Proof of \Cref{lem:polynomial-variance}]
Substitute \Cref{lem:derivative-norms} into \Cref{lem:derivative-variance}. 
\end{proof}

\section{Algorithms and analysis}
\label{sec:rates}

We first give a full description of our entangled and unentangled spectrum estimation algorithms, then combine the recovery and variance bounds to prove our main results \Cref{thm:entangled,thm:unentangled}.
We describe both entangled and unentangled spectrum estimation algorithms in \Cref{alg:spectrum-estimation} as they follow a common procedure.  
The measurement model determines the bucketing algorithm and the small-bucket measurements; the accuracy regime determines the threshold $B$, the degree $K$, and the polynomials whose moments we estimate.
Fix $d\geq3$ and an accuracy parameter $0<\varepsilon\leq1/10$.
Let $b>0$ be a sufficiently small universal constant.
Set $\ell=\log d$.
We choose the other parameters from
\Cref{tab:spectrum-models,tab:spectrum-regimes}. 

\begin{table}[H]
\centering
\renewcommand{\arraystretch}{1.4}
\begin{tabularx}{\textwidth}{@{}lYY@{}}
\toprule
 & Entangled & Unentangled\\
\midrule
Bucketing algorithm
 & \Cref{prop:bucketing-entangled} & \Cref{prop:bucketing-single}\\
Bucketing copies $N$
 & $O\!\left(d/(B\varepsilon^2)\right)$
 & $O\!\left(d/(B^2\varepsilon^2)\right)$\\
Small-bucket measurements
 & \Cref{subsub:entangled-measurements}
 & \Cref{subsub:unentangled-measurements}\\
Fresh copies $n$
 & $O\!\left(d/(B\varepsilon^2)\right)$
 & $O\!\left(d^2/(B\varepsilon^2)\right)$\\
\bottomrule
\end{tabularx}
\caption{Choices and parameters determined by the measurement model.  
The fresh copy counts $n$ for the small-bucket measurements include discarded inputs.}
\label{tab:spectrum-models}
\end{table}

\begin{table}[H]
\centering
\renewcommand{\arraystretch}{1.5}
\begin{tabular*}{\textwidth}{@{\extracolsep{\fill}}llcc@{}}
\toprule
$\varepsilon$ regime & Method: Chebyshev moment matching & Degree $K$ & Threshold $B$\\
\midrule
$\varepsilon\ell>1$
 & On the full interval ($p_k=\phi_k$)
 & $\lceil \ell^2\rceil$ & $b\varepsilon^2K^2/(1.1d)$\\
$\varepsilon\ell\leq1$
 & In the interior ($p_k=\psi_k$)
 & $\lceil \ell^2/\varepsilon\rceil$ & $\ell^2/(1.1d)$\\
\bottomrule
\end{tabular*}
\caption{Choices and parameters determined by the accuracy regime.  
The polynomials $\phi_k$ and $\psi_k$ are defined in \cref{eq:two-chebyshev-phi,eq:two-chebyshev-psi}, with $L = 1.1B$.}
\label{tab:spectrum-regimes}
\end{table}

\begin{mdframed}[nobreak=true,linewidth=0.6pt,innertopmargin=4pt,innerbottommargin=8pt]
\begin{algorithm}[Spectrum estimation]
\label{alg:spectrum-estimation}
\textbf{Input:} $N+n$ copies of an unknown state $\rho$ on $\C^d$, accuracy
$0<\varepsilon\leq1/10$, and a choice of entangled or unentangled measurements.

Choose the parameters in \Cref{tab:spectrum-models,tab:spectrum-regimes}.

\begin{enumerate}[leftmargin=*,label=\arabic*.]
\item \textbf{Bucket.}
Use $N$ copies to run the corresponding bucketing algorithm at threshold $B$ and accuracy $\varepsilon$.
Obtain $\bm\Pi$, $\overline{\bm\Pi}$, and $\widehat{\bm\alpha}_{\mathrm{Large}}$.

\item \textbf{Estimate moments.}
Use $n$ fresh copies and $\{\bm\Pi,\overline{\bm\Pi}\}$ to perform the corresponding small-bucket measurements from \Cref{sec:small-bucket-measurements}.  
Compute $\widehat{\bm F}_{p_k}$ for every $1\leq k\leq K$.

\item \textbf{Reconstruct the small spectrum.}
Choose
\[
 \widehat{\bm z}\in
 \argmin_{\substack{\sum_i z_i\leq1, \\ 1.1B\geq z_1\geq\cdots\geq z_d\geq 0}} \quad 
 \sum_{k=1}^K\frac1{k^2}
 \left(\widehat{\bm F}_{p_k}-\sum_{i=1}^d p_k(z_i)\right)^2.
\]
\end{enumerate}

\textbf{Output:} 
Keep the largest $\rank(\overline{\bm\Pi})$ entries of $\widehat{\bm z}$ and concatenate them with $\widehat{\bm\alpha}_{\mathrm{Large}}$. Return the sorted list.
\end{algorithm}
\end{mdframed}

\subsection{\texorpdfstring{Proof of \Cref{thm:entangled,thm:unentangled}}{Proof of the main theorems}}

Write $D=1$ for entangled measurements and $D=d$ for unentangled measurements.
Use the parameters in
\Cref{tab:spectrum-models,tab:spectrum-regimes}, with
$L=1.1B$.
Set
\[
 n=\left\lceil A\frac{dD}{L\varepsilon^2}\right\rceil,
\]
where $A$ is a universal constant large enough for the variance
bounds below.
Let $b>0$ be a sufficiently small universal constant.

Condition on any successful bucketing output, so that
$0\preceq\bm\sigma\preceq LI_d$ and $\tr(\bm\sigma)\leq1$.
We first show that the conditional expected error of
$\widehat{\bm z}$ is $O(\varepsilon)$.

\paragraph{Full interval: $\varepsilon\ell>1$.}
Here $L=b\varepsilon^2K^2/d$.
Since $K=O(\ell^2)$, we have $L=O(b)$ and $K/n=O(b/A)$.
Choosing $b$ sufficiently small ensures $B=L/1.1<1$,
as required by the bucketing guarantees, and taking
$A$ sufficiently large ensures $K\leq n/2$,
so that $\binom nr^{-1}\leq r!(2/n)^r$ for $1\leq r\leq K$.

We now show that the coefficients $c_r$ defined in \cref{eq:measurement-variance-factor}
satisfy $c_r\leq C(6D)^r/(r-1)!$ for $1\leq r\leq K$
in both measurement models.
For unentangled measurements, $r\leq K\leq d$ gives
\[
 d^{\uparrow r}=\prod_{j=0}^{r-1}(d+j)\leq(2d)^r.
\]
For entangled measurements, $1+x\leq e^x$ gives
\[
 \prod_{j=0}^{r-1}(1+jL)
 \leq e^{Lr(r-1)/2}
 \leq e^{LK^2/2}
 =O(1),
\]
where we used $LK^2=b\varepsilon^2K^4/d=O(b)$.
Substituting these bounds into \cref{eq:measurement-variance-factor} gives the claimed bound on $c_r$. 

Using $r/((2r)!)^2\leq32^r/(4r)!$,
\Cref{cor:explicit-chebyshev-variance} gives
\begin{equation*}
 \Var(\widehat{\bm F}_{\phi_k})
 \leq C\sum_{r\geq1}
       \frac{(CDk^4/(nL^2))^r}{(4r)!}
 \leq C\exp\!\left(
       C\left(\frac{DK^4}{nL^2}\right)^{1/4}\right)
 \leq Cd^{1/4}.
\end{equation*}
The last inequality holds for sufficiently large $A$, since
$DK^4/(nL^2)\leq K^2/(Ab)$ and $\sqrt K=O(\ell)$.
Since $\sum_{k=1}^K k^{-2}=O(1)$,
\Cref{cor:cheb-recovery} yields
\[
 \E\TV(\widehat{\bm z},\spec(\bm\sigma))
 \leq C\frac{\sqrt{dL}}K+CLd^{1/4}
 =O\!\left(\sqrt b\,\varepsilon+
       b\varepsilon^2\frac{\ell^4}{d^{3/4}}\right)
 =O(\varepsilon).
\]

\paragraph{Interior: $\varepsilon\ell\leq1$.}
Here $L=\ell^2/d$ and $K=\lceil\ell^2/\varepsilon\rceil$.
Then $B=L/1.1<1$,
as required by the bucketing guarantees.
Also, $K/n=O(A^{-1})$, so taking $A$ sufficiently large ensures
$K\leq n/2$ and hence $\binom nr^{-1}\leq r!(2/n)^r$
for $1\leq r\leq K$.

We now show that the coefficients $c_r$ defined in
\cref{eq:measurement-variance-factor} satisfy
\[
 c_r\leq
 \frac{(3D)^r}{(r-1)!}(1+(r-1)L)^{r-1}
\]
for $1\leq r\leq K$ in both measurement models.
For unentangled measurements, $L\geq1/d$ gives
\[
 d^{\uparrow r}
 =d^r\prod_{j=0}^{r-1}(1+j/d)
 \leq d^r(1+(r-1)L)^{r-1}.
\]
For entangled measurements,
\[
 \prod_{j=0}^{r-1}(1+jL)
 \leq(1+(r-1)L)^{r-1}.
\]
Substituting these bounds into \cref{eq:measurement-variance-factor}
gives the claimed bound on $c_r$.

Set $t=24DK^2/(nL^2)$.
Using this coefficient bound and setting $j=r-1$,
\Cref{cor:explicit-chebyshev-variance} gives
\begin{equation*}
 \frac{\Var(\widehat{\bm F}_{\psi_k})}{k^2}
 \leq\frac{24D}{nL^2}
       \sum_{j\geq0}\frac{t^j(1+jL)^j}{j!(j+1)!}
 \leq\frac{24D}{nL^2}
       \left(e^{2\sqrt{2t}}+e^{2eLt}\right)
 \leq C\frac{D}{nL^2}d^{1/4}.
\end{equation*}
The second inequality follows by applying
$(1+jL)^j\leq2^j(1+(jL)^j)$ and $j^j\leq e^j j!$
for $j\geq1$.
The last inequality holds for sufficiently large $A$, since
$t=O(\ell^2/A)$ and $Lt=O(\ell^4/(Ad))=O(A^{-1})$.
Summing this bound over $1\leq k\leq K$ and applying
\Cref{cor:buffered-recovery} yields
\[
 \E\TV(\widehat{\bm z},\spec(\bm\sigma))
 \leq C\frac{dL}{K}
       +C\frac{KD}{nL}d^{1/4}
 =O\!\left(\varepsilon+
       \frac{\varepsilon\ell^2}{Ad^{3/4}}\right)
 =O(\varepsilon).
\]

\paragraph{Accuracy.}
Bucketing succeeds with probability at least $0.99$.
Conditional on any successful output, Markov's inequality gives $\TV(\widehat{\bm z},\spec(\bm\sigma))=O(\varepsilon)$ with probability at least $0.9$.
Let $\bm x$ be the output of \Cref{alg:spectrum-estimation}.
Hence \Cref{item:pack-list,item:pack-pinch} give
\[
 \TV(\bm x,\spec(\rho))
 \leq 2\varepsilon+\TV(\widehat{\bm z},\spec(\bm\sigma))
 =O(\varepsilon)
\]
with probability at least $0.99\cdot0.9>0.8$.
Standard success amplification raises this probability to $0.99$.

\paragraph{Copy complexity.}
The bucketing cost in \Cref{tab:spectrum-models} is $O(n)$
for entangled measurements and $O(n/(dB)) = O(n)$ for unentangled measurements. 
Substituting the two choices of $L$ gives
\[
 O\!\left(
 d^2D\min\left\{
   \frac1{(\varepsilon\ell)^2},
   \frac1{(\varepsilon\ell)^4}
 \right\}\right).
\]
Taking $D=1$ and $D=d$ proves
\Cref{thm:entangled,thm:unentangled}, respectively.

\section*{Acknowledgments}
We thank Christopher Musco for several insightful discussions. 
AB is grateful to Ewin Tang and John Wright for helpful discussions. Part of this work was done while AB was visiting the Simon's Institute. XT is supported by Phyllis Ruby Block Fellowship. 

\paragraph{AI Disclosure.} Chebyshev moment matching is a relatively new framework developed in the numerical linear algebra community. The idea to extend this framework to the quantum setting was generated anthropically (i.e., by human). The authors used GPT 5.6 Sol to assist with the technical proofs and GPT 6 Astra to assist with the technical writing. The authors take full responsibility for the content and correctness of this work.

\printbibliography

@InProceedings{PTTW26,
author = {Angelos Pelecanos and Xinyu Tan and Ewin Tang and John Wright},
title = {Beating full state tomography for unentangled spectrum estimation},
booktitle = {Proceedings of the 2026 Annual ACM-SIAM Symposium on Discrete Algorithms (SODA)},
chapter = {},
pages = {3313-3363},
year = {2026},
eprinttype = {arXiv},
primaryClass={quant-ph},
eprint = {2504.02785}
}

@InProceedings{MMRS25,
  title = 	 {Sharper Bounds for Chebyshev Moment Matching, with Applications},
  author =       {Musco, Cameron and Musco, Christopher and Rosenblatt, Lucas and Singh, Apoorv Vikram},
  booktitle = 	 {Proceedings of Thirty Eighth Conference on Learning Theory},
  pages = 	 {4309--4358},
  year = 	 {2025},
  volume = 	 {291},
  series = 	 {Proceedings of Machine Learning Research},
  month = 	 {7},
  publisher =    {PMLR},
  eprinttype = {arXiv},
  eprint = {2408.12385}
}

@misc{wang2026nearlytightlowerbounds,
      title={A Unified Complexity Framework for Quantum Property Testing},
      author={Qisheng Wang},
      year={2026},
      eprint={2608.02600},
      archivePrefix={arXiv},
      primaryClass={quant-ph},
      url={https://arxiv.org/abs/2608.02600}, 
}

@article{AISW20,
  author = {Acharya, Jayadev and Issa, Ibrahim and Shende, Nirmal V. and Wagner, Aaron B.},
  title = {Estimating Quantum Entropy},
  journal = {IEEE Journal on Selected Areas in Information Theory},
  volume = {1},
  number = {2},
  pages = {454--468},
  year = {2020},
  doi = {10.1109/JSAIT.2020.3015235},
  eprinttype = {arXiv},
  eprint = {1711.00814}
}

@misc{GW26,
  author = {Gao, Minbo and Wang, Qisheng},
  title = {Breaking the Quadratic Barrier for von {Neumann} Entropy Estimation},
  year = {2026},
  eprinttype = {arXiv},
  eprint = {2608.11151}
}

@incollection{DeVore76,
  author = {DeVore, Ronald A.},
  title = {Degree of approximation},
  booktitle = {Approximation Theory {II}},
  editor = {Lorentz, G. G. and Chui, C. K. and Schumaker, L. L.},
  publisher = {Academic Press},
  location = {New York},
  pages = {117--162},
  year = {1976}
}

@InProceedings{HJW18,
  author = {Han, Yanjun and Jiao, Jiantao and Weissman, Tsachy},
  title = {Local moment matching: A unified methodology for symmetric functional estimation and distribution estimation under {Wasserstein} distance},
  booktitle = {Proceedings of the 31st Conference On Learning Theory},
  pages = {3189--3221},
  year = {2018},
  volume = {75},
  series = {Proceedings of Machine Learning Research},
  publisher = {PMLR},
  eprinttype = {arXiv},
  eprint = {1802.08405}
}

@misc{PSTW26,
  author = {Pelecanos, A. and Spilecki, J. and Tang, E. and Wright, J.},
  title = {The {Keyl--Werner} algorithm is not optimal for spectrum estimation},
  year = {2026},
  eprinttype = {arXiv},
  eprint = {2607.27117}
}

@inproceedings{OW16,
  author = {O'Donnell, Ryan and Wright, John},
  title = {Efficient quantum tomography},
  booktitle = {Proceedings of the Forty-Eighth Annual ACM Symposium on Theory of Computing},
  pages = {899--912},
  year = {2016},
  doi = {10.1145/2897518.2897544},
  eprinttype = {arXiv},
  eprint = {1508.01907}
}

@article{OW21,
  author = {O'Donnell, Ryan and Wright, John},
  title = {Quantum spectrum testing},
  journal = {Communications in Mathematical Physics},
  volume = {387},
  number = {1},
  pages = {1--75},
  year = {2021},
  doi = {10.1007/s00220-021-04180-1},
  eprinttype = {arXiv},
  eprint = {1501.05028}
}

@article{HHJWY17,
  author = {Haah, Jeongwan and Harrow, Aram W. and Ji, Zhengfeng and Wu, Xiaodi and Yu, Nengkun},
  title = {Sample-optimal tomography of quantum states},
  journal = {IEEE Transactions on Information Theory},
  volume = {63},
  number = {9},
  pages = {5628--5641},
  year = {2017},
  doi = {10.1109/TIT.2017.2719044},
  eprinttype = {arXiv},
  eprint = {1508.01797}
}

@inproceedings{CHLLS23,
  author = {Chen, Sitan and Huang, Brice and Li, Jerry and Liu, Allen and Sellke, Mark},
  title = {When does adaptivity help for quantum state learning?},
  booktitle = {Proceedings of the 64th Annual IEEE Symposium on Foundations of Computer Science},
  pages = {391--404},
  year = {2023},
  eprinttype = {arXiv},
  eprint = {2206.05265}
}

@misc{FOW26,
  author = {Fanizza, Marco and O'Donnell, Ryan and Wadhwa, Chirag},
  title = {Spectrum estimation is almost as hard as tomography},
  year = {2026},
  eprinttype = {arXiv},
  eprint = {2607.29680}
}

@misc{LJ26,
  author = {Lee, Gye Jin and Jo, Sunghyeon},
  title = {The sample complexity of fidelity estimation to a known rank-$r$ reference state is $\widetilde{\Theta}(r^2/\varepsilon^2)$},
  year = {2026},
  eprinttype = {arXiv},
  eprint = {2608.01770}
}

@phdthesis{Wright16,
  author = {Wright, John},
  title = {How to learn a quantum state},
  school = {Carnegie Mellon University},
  year = {2016},
  month = {5},
  number = {CMU-CS-16-108},
  url = {https://people.eecs.berkeley.edu/~jswright/papers/thesis.pdf}
}

@article{BHACRG18,
  author = {Beverland, Michael E. and Haah, Jeongwan and Alagic, Gorjan and Campbell, Gretchen K. and Rey, Ana Maria and Gorshkov, Alexey V.},
  title = {Spectrum estimation of density operators with alkaline-earth atoms},
  journal = {Physical Review Letters},
  volume = {120},
  number = {2},
  eid = {025301},
  year = {2018},
  doi = {10.1103/PhysRevLett.120.025301},
  eprinttype = {arXiv},
  eprint = {1608.02045}
}

@article{BTBLR13,
  author = {Boguslawski, Katharina and Tecmer, Pawel and Barcza, Gergely and Legeza, {\"O}rs and Reiher, Markus},
  title = {Orbital entanglement in bond-formation processes},
  journal = {Journal of Chemical Theory and Computation},
  volume = {9},
  number = {7},
  pages = {2959--2973},
  year = {2013},
  doi = {10.1021/ct400247p},
  eprinttype = {arXiv},
  eprint = {1303.7207}
}

@article{VV17,
  author = {Valiant, Gregory and Valiant, Paul},
  title = {Estimating the unseen: Improved estimators for entropy and other properties},
  journal = {Journal of the ACM},
  volume = {64},
  number = {6},
  eid = {37},
  pagetotal = {41},
  year = {2017},
  doi = {10.1145/3125643},
  url = {https://www.cs.purdue.edu/homes/pvaliant/unseen-jacm.pdf}
}

@article{HM02,
  author = {Hayashi, Masahito and Matsumoto, Keiji},
  title = {Quantum universal variable-length source coding},
  journal = {Physical Review A},
  volume = {66},
  number = {2},
  eid = {022311},
  year = {2002},
  doi = {10.1103/PhysRevA.66.022311},
  eprinttype = {arXiv},
  eprint = {quant-ph/0202001}
}

@book{OD14,
  author = {O'Donnell, Ryan},
  title = {Analysis of {Boolean} Functions},
  publisher = {Cambridge University Press},
  year = {2014},
  doi = {10.1017/CBO9781139814782}
}

@book{vdV98,
  author = {van der Vaart, A. W.},
  title = {Asymptotic Statistics},
  publisher = {Cambridge University Press},
  year = {1998},
  doi = {10.1017/CBO9780511802256}
}

@article{GB10,
  author = {Gu\c{t}\u{a}, M\u{a}d\u{a}lin and Butucea, Cristina},
  title = {Quantum {U}-statistics},
  journal = {Journal of Mathematical Physics},
  volume = {51},
  number = {10},
  eid = {102202},
  year = {2010},
  doi = {10.1063/1.3476776},
  eprinttype = {arXiv},
  eprint = {1004.2452}
}

@misc{PFMO25,
  author = {De Palma, Giacomo and Fanizza, Marco and Mowry, Connor and O'Donnell, Ryan},
  title = {Non-iid hypothesis testing: from classical to quantum},
  year = {2025},
  eprinttype = {arXiv},
  eprint = {2510.06147}
}

@book{rivlin1990chebyshev,
  title     = {Chebyshev Polynomials: From Approximation Theory to Algebra and Number Theory},
  author    = {Rivlin, Theodore J.},
  series    = {Pure and Applied Mathematics: A Wiley Series of Texts, Monographs and Tracts},
  volume    = {10},
  edition   = {2nd},
  year      = {1990},
  publisher = {Wiley},
  address   = {New York},
  isbn      = {9780471628965}
}

@inproceedings{BKM22,
author = {Braverman, Vladimir and Krishnan, Aditya and Musco, Christopher},
title = {Sublinear time spectral density estimation},
year = {2022},
isbn = {9781450392648},
publisher = {Association for Computing Machinery},
url = {https://doi.org/10.1145/3519935.3520009},
doi = {10.1145/3519935.3520009},
booktitle = {Proceedings of the 54th Annual ACM SIGACT Symposium on Theory of Computing},
pages = {1144–1157},
numpages = {14},
}

@Book{Stanley1999,
author={Stanley, Richard P.},
title={Enumerative Combinatorics},
series={Cambridge Studies in Advanced Mathematics},
year={1999},
publisher={Cambridge University Press},
address={Cambridge},
volume={2},
doi={10.1017/CBO9780511609589},
url={https://www.cambridge.org/product/D8DDDFF7E8EBF0BCFE99F5E6918CE2A8},
url={https://doi.org/10.1017/CBO9780511609589}
}

@book{Jackson:1930,
	author = {Dunham Jackson},
	publisher = {American Mathematical Society},
	series = {Colloquium Publications},
	title = {The Theory of Approximation},
	volume = {11},
	year = {1930}}

@misc{LT26,
      title={Random dimension reduction and learning symmetric properties of quantum states}, 
      author={Angus Lowe and Xinyu Tan},
      year={2026},
      eprint={2606.23592},
      archivePrefix={arXiv},
      primaryClass={quant-ph},
      url={https://arxiv.org/abs/2606.23592}, 
}

@inproceedings{VV11a,
author = {Valiant, Gregory and Valiant, Paul},
title = {Estimating the unseen: an n/log(n)-sample estimator for entropy and support size, shown optimal via new CLTs},
year = {2011},
isbn = {9781450306911},
publisher = {Association for Computing Machinery},
address = {New York, NY, USA},
url = {https://doi.org/10.1145/1993636.1993727},
doi = {10.1145/1993636.1993727},
booktitle = {Proceedings of the Forty-Third Annual ACM Symposium on Theory of Computing},
pages = {685–694},
numpages = {10},
location = {San Jose, California, USA},
series = {STOC '11}
}

@misc{UNWT25,
      title={Quantum algorithms for Uhlmann transformation}, 
      author={Takeru Utsumi and Yoshifumi Nakata and Qisheng Wang and Ryuji Takagi},
      year={2025},
      eprint={2509.03619},
      archivePrefix={arXiv},
      primaryClass={quant-ph},
      url={https://arxiv.org/abs/2509.03619}, 
}

@misc{GP22,
      title={Improved Quantum Algorithms for Fidelity Estimation}, 
      author={András Gilyén and Alexander Poremba},
      year={2022},
      eprint={2203.15993},
      archivePrefix={arXiv},
      primaryClass={quant-ph},
      url={https://arxiv.org/abs/2203.15993}, 
}

\appendix
\section{Chebyshev derivative bounds for the variance analysis}
\label{app:chebyshev-derivatives}

We record the derivative bounds used in
\Cref{cor:explicit-chebyshev-variance}.

\begin{lemma}[Chebyshev derivative bounds]
\label{lem:cheb-derivatives}
For $1\leq r\leq k$,
\begin{equation*}
 \frac{\|\phi_k^{(r)}\|_{\infty,[0,L]}}{r!}
 \leq\frac{(4k^2/L)^r}{(2r)!},
 \qquad
 \frac{\|\psi_k^{(r)}\|_{\infty,[0,L]}}{r!}
 \leq\frac{(2k/L)^r}{r!}.
\end{equation*}
\end{lemma}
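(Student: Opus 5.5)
By the chain rule, $\phi_k^{(r)}(x)=(2/L)^r T_k^{(r)}(2x/L-1)$ and $\psi_k^{(r)}(x)=L^{-r}T_k^{(r)}(x/L-1/2)$; the subtracted constants disappear for $r\ge1$. It therefore suffices to prove two scalar bounds for $T_k$: a bound on $\|T_k^{(r)}\|_{\infty,[-1,1]}$ and a bound on $\|T_k^{(r)}\|_{\infty,[-1/2,1/2]}$.

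For the full interval I will use the classical fact (Markov brothers' inequality, sharp for $T_k$) that $\|T_k^{(r)}\|_{\infty,[-1,1]}=T_k^{(r)}(1)$. The endpoint value has the closed form
\[
 T_k^{(r)}(1)=\prod_{j=0}^{r-1}\frac{k^2-j^2}{2j+1}.
\]
This follows from the differential equation $(1-x^2)T_k''-xT_k'+k^2T_k=0$: differentiating it $j$ times and evaluating at $x=1$ gives $(2j+1)T_k^{(j+1)}(1)=(k^2-j^2)T_k^{(j)}(1)$. Bounding $k^2-j^2\le k^2$ and using $\prod_{j<r}(2j+1)=(2r)!/(2^r r!)$ gives
\[
 T_k^{(r)}(1)\le \frac{2^r r!\,k^{2r}}{(2r)!}.
\]
Multiplying by $(2/L)^r$ and dividing by $r!$ yields $(4k^2/L)^r/(2r)!$, which is the first claim. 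If one wants to avoid citing Markov's inequality, the step that needs an argument is that the maximum is attained at the endpoint. Expanding $T_k^{(r)}$ in the Chebyshev basis, or equivalently in Gegenbauer polynomials $C^{(r)}_{k-r}$, gives nonnegative coefficients. Since $|T_j|\le1=T_j(1)$ on $[-1,1]$, any nonnegative combination of the $T_j$ is maximized in absolute value at $x=1$. The nonnegativity follows by induction from $T_j'=2j\sum' T_{j-1-2i}$, which expresses each derivative as a nonnegative combination of lower-degree Chebyshev polynomials.

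For the interior, I will use the complex-analytic route. For $x\in[-1/2,1/2]$, Cauchy's formula gives $T_k^{(r)}(x)/r!=\frac{1}{2\pi i}\oint T_k(z)(z-x)^{-r-1}\,dz$ on a circle of radius $\delta$ about $x$. The key estimate is that $|T_k(z)|\le |z+\sqrt{z^2-1}|^k$, which is bounded by $e^{Ck\delta}$ for $z$ within distance $\delta\le1/2$ of $[-1/2,1/2]$, because the Bernstein-ellipse parameter grows like $1+O(\delta)$ away from the endpoints. Choosing $\delta=r/k$ gives $|T_k^{(r)}(x)|/r!\le e^{Cr}(k/r)^r$. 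Combining with $r^r\ge r!$ gives roughly $(C'k)^r/r!$, but the target constant is $1$, so this crude estimate is not sufficient. Instead I will use the sharper Bernstein-type bound $|p'(x)|\le k\|p\|/\sqrt{1-x^2}$ together with the derivative identity. The cleanest exact route is to write $T_k(\cos\theta)=\cos k\theta$ and bound iterated derivatives in $x$ via $d/dx=-(1/\sin\theta)\,d/d\theta$. On $|x|\le1/2$ we have $\sin\theta\ge\sqrt3/2$, and a Faà di Bruno count is meant to give $|T_k^{(r)}(x)|\le (2k)^r$. Then $\|\psi_k^{(r)}\|_\infty/r!\le (2k/L)^r/r!$, which is the second claim.

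The main obstacle is getting the clean constant $2$ in the interior bound $|T_k^{(r)}(x)|\le(2k)^r$ for $|x|\le1/2$ and $r\le k$. Naive iteration of Bernstein's inequality fails because $T_k^{(r)}$ is not bounded by $1$. I would first try induction using the recurrence $(1-x^2)T_k^{(r+1)}=(2r-1)xT_k^{(r)}+(r-1)^2-\dots$, obtained by differentiating the Chebyshev ODE, to propagate a bound of the form $(2k)^r$. If that fails, I would fall back to the Cauchy-integral estimate with an optimized contour and accept a worse absolute constant in place of $2$; this changes only the universal constants downstream.
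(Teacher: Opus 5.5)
\textbf{Full interval.} Your bound for $\phi_k$ is correct and is essentially the paper's argument. The paper also uses $\max_{[-1,1]}|T_k^{(r)}|=T_k^{(r)}(1)$ together with the closed form $T_k^{(r)}(1)/r!=\frac{2^r}{(2r)!}\prod_{j<r}(k^2-j^2)$, which is exactly what your ODE recursion gives. Your nonnegative Chebyshev-expansion argument is a valid self-contained replacement for citing Markov.

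\textbf{Interior: the gap.} The statement needs $|T_k^{(r)}(x)|\le(2k)^r$ for $|x|\le 1/2$, and your proposal never proves it. The Faà di Bruno count is only asserted (``is meant to give''). The recurrence you write down is incomplete: it drops the $k^2$ and the $T_k^{(r-1)}$ factor. Your stated fallback, the Cauchy estimate, gives only $(Ck)^r$ with an unspecified $C$. That proves a weaker lemma than the one stated. It is harmless for the downstream rates, but it is not the claim.

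\textbf{How to close it.} Your recurrence idea does work once written correctly. Differentiating $(1-x^2)T_k''-xT_k'+k^2T_k=0$ exactly $r-1$ times gives
\[
 (1-x^2)\,T_k^{(r+1)}=(2r-1)\,x\,T_k^{(r)}-\bigl(k^2-(r-1)^2\bigr)T_k^{(r-1)} .
\]
On $|x|\le1/2$ you have $1-x^2\ge 3/4$. The base cases are $|T_k|\le1$ and $|T_k'(\cos\theta)|=k|\sin k\theta|/\sin\theta\le 2k/\sqrt3$, using $\sin\theta\ge\sqrt3/2$. If $|T_k^{(j)}|\le(2k)^j$ on $[-1/2,1/2]$ for all $j\le r$, with $1\le r\le k-1$, then
\[
 |T_k^{(r+1)}(x)|\le\frac{4}{3}\Bigl[\bigl(r-\frac12\bigr)(2k)^r+k^2(2k)^{r-1}\Bigr]
 =\frac{4}{3}(2k)^r\Bigl(r-\frac12+\frac k2\Bigr)\le(2k)^{r+1}.
\]
The last inequality is equivalent to $r\le k+\frac12$, so it holds. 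Then $\psi_k^{(r)}(x)=L^{-r}T_k^{(r)}(x/L-1/2)$ gives the stated bound.

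\textbf{Comparison with the paper.} The paper does not use the ODE here. For $|t|\le 1/2$ it sets $P_t(u)=T_k(t+u/2)$, which is bounded by $1$ on $[-1,1]$, so $T_k^{(r)}(t)=2^rP_t^{(r)}(0)$. It then applies a Markov-type bound at the center of the interval. For that route to give the constant $2$ after the factor $2^r$, the center bound must be $|P^{(r)}(0)|\le k^r$. This follows from V.~A.~Markov's coefficient inequality, since $r!\,|[x^r]T_k|\le k^r$. The $(2k)^r$ written in the paper's proof would only give $(4k)^r$. Your recurrence argument is elementary and self-contained, and it yields the stated constant without that citation.
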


\begin{proof}
The derivatives of $T_k$ satisfy (see \cite[Theorem 2.24]{rivlin1990chebyshev})
\begin{equation*}
 \max_{|t|\leq1}|T_k^{(r)}(t)|=T_k^{(r)}(1),
 \qquad
 \frac{T_k^{(r)}(1)}{r!}
 =\frac{2^r}{(2r)!}\prod_{j=0}^{r-1}(k^2-j^2).
\end{equation*}
Applying the rescaling along with the fact that $\prod_{j=0}^{r-1}(k^2-j^2)\leq k^{2r}$
gives the bound for $\phi_k$.

For the interior bound, fix $|t|\leq1/2$ and let
$P_t(u)=T_k(t+u/2)$.
This polynomial has degree $k$ and supremum norm at most one on
$[-1,1]$.  Differentiating $P_t$ $r$ times and evaluating at $u=0$ gives us that $T_k^{(r)}(t) = 2^r P_t^{(r)}(0) $. Therefore, our aim is to bound $ \abs{P^{(r)}(0)}$ for any polynomial $P$ of degree $k$ and supremum norm at most one on
$[-1,1]$. Using \cite[Remark 2]{rivlin1990chebyshev}, we get that  $\abs{P^{(r)}(0)} \leq (2k)^r$. Applying the rescaling then gives the bound for $\psi_k$.
\end{proof}

\section{Jackson approximation and rescaling}
\label{sec:jackson-approximation}

\begin{proof}[Proof of \Cref{lem:jackson-endpoint-improvement}]
We use the Jackson construction from
\cite[Appendix~C]{BKM22}. Set $m=\lfloor K/2\rfloor+1$,
and let
\[
 b(u)=\left(\frac{\sin(mu/2)}{\sin(u/2)}\right)^4 = \sum_{k=-2m+2}^{2m-2} \widehat b_k e^{i k u}
\]
be the kernel from \cite[Definition~C.4]{BKM22}, with
Fourier coefficients $\widehat b_k$. Following the proof of
\cite[Theorem~C.6]{BKM22}, let $\tilde h$ be the periodic convolution of $h$  with the normalized Jackson kernel $b/(2 \pi \widehat b_0)$, i.e., 
\[
 h(\theta)=f(\cos\theta),
 \qquad
 \widetilde h(\theta)
 =\frac{1}{2\pi\widehat b_0}
   \int_{-\pi}^{\pi}b(u)h(\theta-u)\diff u.
\]

Since $T_k(\cos\theta)=\cos(k\theta)$, the Chebyshev expansion
of $f$ gives
\[
 h(\theta)=f(\cos\theta)
 =\gamma_0+\sum_{k\geq1}\gamma_k\cos(k\theta).
\]
Convolution with the normalized Jackson kernel multiplies
the $k$th cosine coefficient by $\widehat b_k/\widehat b_0$.
Since $\widehat b_k=0$ for $k>2m-2$, we obtain
\[
 \widetilde h(\theta)
 =\gamma_0+\sum_{k=1}^{2m-2}
 \frac{\widehat b_k}{\widehat b_0}\gamma_k\cos(k\theta).
\]
Thus, we have 
\[
 Q_f(t)
 :=\gamma_0+\sum_{k=1}^{2m-2}
 \frac{\widehat b_k}{\widehat b_0}\gamma_kT_k(t), \qquad Q_f(\cos\theta)
 =\gamma_0+\sum_{k=1}^{2m-2}
 \frac{\widehat b_k}{\widehat b_0}\gamma_k\cos(k\theta)
 =\widetilde h(\theta).
\]
In particular, $Q_f$ has degree at most $2m-2\leq K$, and by  \cite[Theorem~C.6]{BKM22}, we have 
\[
 Q_f(t)=\gamma_0+\sum_{k=1}^{K}a_kT_k(t),
 \qquad
 a_k=\eta_k\gamma_k,
 \qquad
 \eta_k=\frac{\widehat b_k}{\widehat b_0}\in[0,1],
\]
with $\eta_k=0$ for $k>2m-2$. This establishes the
claimed coefficient form.

\paragraph{Kernel moments.} Since $\widehat b_k$ are the Fourier coefficients, we have that $\widehat b_0 = \frac{1}{2 \pi}\int_{-\pi}^{\pi} b(u) \diff u$ and $\widehat b_1 = \frac{1}{2 \pi}\int_{-\pi}^{\pi} b(u) \cos(u)  \diff u$.
The coefficient formula in \cite[Eq.~(16)]{BKM22} gives
\[
 \widehat b_0=\frac{2m^3+m}{3},
 \qquad
 \widehat b_0-\widehat b_1=m.
\]
Define
\[
 M_j=\frac{1}{2\pi\widehat b_0}
       \int_{-\pi}^{\pi}|u|^j b(u)\diff u,
 \qquad j=1,2.
\]
Since $u^2\leq(\pi^2/2)(1-\cos u)$ on $[-\pi,\pi]$,
\[
 M_2 = \frac{1}{2\pi\widehat b_0}
       \int_{-\pi}^{\pi} u^2 b(u)\diff u
       \leq \frac{\pi^2/2}{2\pi\widehat b_0}\int_{-\pi}^{\pi} (1-\cos (u)) b(u) \diff u
 \leq\frac{\pi^2}{2} 
       \left(\frac{\widehat b_0 - \widehat b_1}{\widehat b_0}\right)
 =\frac{\pi^2m}{2\widehat b_0}
 \leq\frac{C}{K^2}.
\]
The normalized kernel $b/(2\pi\widehat b_0)$ is nonnegative
and integrates to one, so Cauchy--Schwarz also gives
$M_1\leq\sqrt{M_2}\leq C/K$.

\paragraph{The pointwise estimate.}
Fix $\theta\in[0,\pi]$. We refine the convolution estimate
in the proof of \cite[Theorem~C.5]{BKM22} by using the fact that $f$ is a $1$-Lipschitz function, 
\[
 |h(\theta-u)-h(\theta)|
 \leq|\cos(\theta-u)-\cos\theta|
 \leq\sin\theta\,|u|+\frac{u^2}{2}.
\]
Integrating against the normalized kernel yields
\[
 |\widetilde h(\theta)-h(\theta)|
 \leq\sin\theta\,M_1+\frac{M_2}{2}
 \leq C\left(\frac{\sin\theta}{K}+\frac1{K^2}\right).
\]
Substituting $t=\cos\theta$ proves
\[
 |Q_f(t)-f(t)|
 \leq C\left(\frac{\sqrt{1-t^2}}{K}+\frac1{K^2}\right).
\]

\paragraph{The anchored estimate.}
Assume now that $f(-1)=0$, and set
$P_f(t)=Q_f(t)-Q_f(-1)$. Then $h(\pi)=0$ and
\[
 P_f(\cos\theta)
 =\widetilde h(\theta)-\widetilde h(\pi).
\]
Write $\delta=1+\cos\theta$, so that
$\sin\theta\leq\sqrt{2\delta}$. The preceding bound contains an additive \(K^{-2}\) term, so we need a second estimate that vanishes as \(\delta\to 0\).
The preceding estimate at $\theta$ and at $\pi$ gives
\[
 |P_f(\cos\theta)-f(\cos\theta)|
 \leq\sin\theta\,M_1+M_2
 \leq C\left(\frac{\sqrt\delta}{K}+\frac1{K^2}\right).
\]

Near the anchored endpoint, we instead compare the two
convolution integrands directly. The Lipschitz property gives
\begin{align*}
 |h(\theta-u)-h(\pi-u)|
 &\leq|\cos(\theta-u)+\cos u|\\
 &\leq(1+\cos\theta)|\cos u|
       +\sin\theta\,|\sin u|\\
 &\leq\delta+\sin\theta\,|u|.
\end{align*}
Thus
\[
 |P_f(\cos\theta)|\leq\delta+\sin\theta\,M_1.
\]
Since $|f(\cos\theta)|\leq\delta$, we obtain the second bound
\[
 |P_f(\cos\theta)-f(\cos\theta)|
 \leq2\delta+C\frac{\sqrt\delta}{K}.
\]

For $\delta\geq K^{-2}$, use the first bound and
$K^{-2}\leq\sqrt\delta/K$.
For $\delta\leq K^{-2}$, use the second bound and
$\delta\leq\sqrt\delta/K$.
Both cases give
\begin{align*}
    |P_f(t)-f(t)|\leq C\frac{\sqrt{1+t}}{K},
 \qquad t\in[-1,1]. & \qedhere
\end{align*}
\end{proof}

\section{Proof of the trace derivative formula}
\label{app:trace-derivatives}

\tracederivatives*

\begin{proof}
We first obtain the formula by induction, then bound its coefficients
using Rolle's theorem. Recall that 
\[
D_p^{(r)}(H_1,\ldots,H_r)
:=
\left.
\frac{\partial^r}{\partial t_r\cdots\partial t_1}
\tr p\!\left(\sigma+\sum_{j=1}^r t_jH_j\right)
\right|_{t_1=\cdots=t_r=0}.
\]

\paragraph{The product-rule induction.}
Start with $p(x)=x^n$ and write
$A=\sigma+\sum_\ell t_\ell H_\ell$, with one variable for each
direction under consideration. Thus $\partial A/\partial t_\ell=H_\ell$.
Differentiating one factor at a time gives
\[
\frac{\partial}{\partial t_1}\tr(A^n)
=\sum_{a+b=n-1}\tr(A^aH_1A^b)
=n\tr(H_1A^{n-1}).
\]
The last equality uses cyclicity of the trace to move $H_1$ to the
front. Further derivatives act only on the powers of $A$, so $H_1$
stays first. All exponents below are nonnegative integers, and $A^0=I$.

For $n\ge r$, we claim that repeated differentiation gives
\begin{equation}\label{eq:word-formula}
\frac{\partial^r}{\partial t_r\cdots\partial t_1}\tr(A^n)
=n\sum_{\substack{\pi\in S_r\\\pi(1)=1}}
\sum_{a_1+\cdots+a_r=n-r}
\tr\!\left(
H_1A^{a_1}H_{\pi(2)}A^{a_2}\cdots H_{\pi(r)}A^{a_r}
\right).
\end{equation}
The case $r=1$ was just proved. Assume the formula holds for $r$.
To differentiate once more, the product rule acts on each block:
\[
\frac{\partial}{\partial t_{r+1}}A^{a_j}
=\sum_{b+c=a_j-1}A^bH_{r+1}A^c.
\]

By the product rule, we get that 
\begin{align*}
&\frac{\partial^{r+1}}{\partial t_{r+1}\cdots\partial t_1}\tr(A^n)
\\
&\quad=n\sum_{\substack{\pi\in S_r\\\pi(1)=1}}
\sum_{a_1+\cdots+a_r=n-r}\sum_{j=1}^r\sum_{b+c=a_j-1}
\tr\!\left(
\left[\prod_{\ell=1}^{j-1}H_{\pi(\ell)}A^{a_\ell}\right]
H_{\pi(j)}A^bH_{r+1}A^c
\left[\prod_{\ell=j+1}^{r}H_{\pi(\ell)}A^{a_\ell}\right]
\right).
\end{align*}

For each $j$, substitute $a_j=b+c+1$. Rename the $r+1$ exponents as
$\alpha_1,\ldots,\alpha_{r+1}$. Moreover, $\pi$ specifies the order of the old $H$'s and $j$ specifies
where to insert $H_{r+1}$. Together they give each new order $\rho$
with $\rho(1)=1$. Combining these two relabelings gives
\begin{equation*}
\frac{\partial^{r+1}}{\partial t_{r+1}\cdots\partial t_1}\tr(A^n)
={} n\sum_{\substack{\rho\in S_{r+1}\\\rho(1)=1}}
\sum_{\alpha_1+\cdots+\alpha_{r+1}=n-r-1}
\tr\!\left(
H_1A^{\alpha_1}H_{\rho(2)}A^{\alpha_2}\cdots
H_{\rho(r+1)}A^{\alpha_{r+1}}
\right).
\end{equation*}
This is \cref{eq:word-formula} with $r+1$ in place of $r$.
If $n<r$, the derivative is zero because $A^n$ has total degree at
most $n$ in the variables $t_\ell$.

At $t=0$, we have $A=\sigma$. Fix an order $\pi$ and a choice of
exponents $a_1,\ldots,a_r$. Ordinary matrix multiplication and the
definition of trace give
\[
\tr\!\left(H_1\sigma^{a_1}\cdots H_{\pi(r)}\sigma^{a_r}\right)
=\sum_{i_1,\ldots,i_r=1}^d
\left[\prod_{j=1}^r(H_{\pi(j)})_{i_ji_{j+1}}\right]
\left[\prod_{j=1}^r\lambda_{i_{j+1}}^{a_j}\right],
\qquad i_{r+1}=i_1.
\]
Here we used
$(H_{\pi(j)}\sigma^{a_j})_{uv}
=(H_{\pi(j)})_{uv}\lambda_v^{a_j}$, since $\sigma$ is diagonal.
Note that the products in brackets consist of scalar entries. Summing over the exponents $a_1,\dots,a_r$, we get that
\begin{equation*}
\sum_{a_1+\cdots+a_r=n-r}
\tr\!\left(H_1\sigma^{a_1}\cdots H_{\pi(r)}\sigma^{a_r}\right)
=
\sum_{i_1,\ldots,i_r=1}^d
\left[\prod_{j=1}^r(H_{\pi(j)})_{i_ji_{j+1}}\right]
\left[
\sum_{a_1+\cdots+a_r=n-r}
\lambda_{i_2}^{a_1}\lambda_{i_3}^{a_2}\cdots\lambda_{i_1}^{a_r}
\right],
\end{equation*}
since for each fixed tuple $(i_1,\ldots,i_r)$, the entire product of
$H$-entries is independent of the $a_j$'s. For $s\ge0$, let
\[
h_s(x_1,\ldots,x_r)
:=\sum_{\substack{a_1+\cdots+a_r=s\\a_j\ge0}}
x_1^{a_1}\cdots x_r^{a_r},
\]
with $h_s=0$ for $s<0$.

Consequently, for $p(x)=\sum_{n=0}^k c_nx^n$, the formula obtained
directly by differentiation is
\[
D_p^{(r)}(H_1,\ldots,H_r)
= \frac{1}{(r-1)!} \sum_{\substack{\pi\in S_r\\\pi(1)=1}}
\sum_{i_1,\ldots,i_r=1}^d
w_{i_1,\ldots,i_r}
\prod_{j=1}^r(H_{\pi(j)})_{i_ji_{j+1}},
\]
where
\begin{equation}\label{eq:w_def}
    w_{i_1,\ldots,i_r}
:=(r-1)! \sum_{n=r}^k n c_n
h_{n-r}(\lambda_{i_1},\ldots,\lambda_{i_r}).
\end{equation}

\paragraph{The coefficient bound.}
For $r=1$, it can be checked from \cref{eq:word-formula} that $w_i=p'(\lambda_i)$, so the bound
is immediate. Suppose $r\ge2$. Fix an index tuple, and abbreviate
$x_j=\lambda_{i_j}$ and $w=w_{i_1,\ldots,i_r}$.
First suppose the values $x_1,\ldots,x_r$ are pairwise distinct. Next, we aim to express $w$ in terms of the values of $p'$ at these
points. The reason to look for $p'$ is that the formula for $w$ contains
the coefficients $n c_n$, and $p'(x)=\sum_{n=1}^k n c_nx^{n-1}$.
By induction (or see \cite[Exercise 7.4]{Stanley1999}), we have
\begin{equation}\label{eq:scalar-identity}
\sum_{j=1}^r\frac{x_j^m}{\prod_{\ell\ne j}(x_j-x_\ell)}
=h_{m-r+1}(x_1,\ldots,x_r),\qquad m\ge r-1.
\end{equation}

For the lower powers $0\le m<r-1$, Lagrange interpolation gives
\[
x^m=\sum_{j=1}^r x_j^m
\prod_{\ell\ne j}\frac{x-x_\ell}{x_j-x_\ell}.
\]
Since $m < r-1$, the coefficient of $x^{r-1}$ on the left is zero. Comparing it with
the coefficient on the right shows that
\begin{equation}\label{eq:lower-powers}
\sum_{j=1}^r\frac{x_j^m}{\prod_{\ell\ne j}(x_j-x_\ell)}=0,
\qquad 0\le m<r-1.
\end{equation}

Substitute \cref{eq:scalar-identity}, with $m=n-1$, into the formula
for $w$. \cref{eq:lower-powers} lets us extend the sum down
to $n=1$, since each added inner sum is zero. Interchanging the finite
sums then reveals $p'$:
\begin{align}
\frac{w}{(r-1)!}
&=\sum_{n=r}^k n c_n h_{n-r}(x_1,\ldots,x_r)  &\left( \text{From \cref{eq:w_def}} \right) \nonumber
\\
&=\sum_{n=r}^k n c_n
\sum_{j=1}^r\frac{x_j^{n-1}}{\prod_{\ell\ne j}(x_j-x_\ell)} & \left( \text{From \cref{eq:scalar-identity}} \right) \nonumber
\\
&=\sum_{n=1}^k n c_n
\sum_{j=1}^r\frac{x_j^{n-1}}{\prod_{\ell\ne j}(x_j-x_\ell)} & \left( \text{From \cref{eq:lower-powers}} \right) \nonumber
\\
&=\sum_{j=1}^r
\frac{\sum_{n=1}^k n c_nx_j^{n-1}}{\prod_{\ell\ne j}(x_j-x_\ell)}
=\sum_{j=1}^r\frac{p'(x_j)}{\prod_{\ell\ne j}(x_j-x_\ell)}. \label{eq:w_expression}
\end{align}

We now want to turn this expression for $w$ into a derivative bound.
For this, we seek a polynomial $Q$ that agrees with $p'$ at all $r$
points and satisfies $Q^{(r-1)}=w$. Towards that, let $Q$ be the Lagrange interpolating polynomial of 
$p'$, as follows:
\begin{equation} \label{eq:qx_def}
    Q(x) := \sum_{j=1}^r p'(x_j)\prod_{\ell\ne j}\frac{x-x_\ell}{x_j-x_\ell}.
\end{equation}

Combining \cref{eq:w_expression} and \cref{eq:qx_def}, we get that 
\[
Q(x)=\frac{w}{(r-1)!}x^{r-1}
+\text{terms of degree at most }r-2.
\]
Differentiating $r-1$ times kills every lower-degree term, while
the derivative of $x^{r-1}$ is $(r-1)!$. Thus
\[
Q^{(r-1)}(x)=\frac{w}{(r-1)!}(r-1)!=w.
\]

Now set $F=p'-Q$. Since $Q(x_j)=p'(x_j)$, the polynomial $F$ has
$r$ distinct zeros (since we assumed the $x_j$'s to be distinct). Rolle's theorem gives at least $r-1$ zeros of
$F'$, then at least $r-2$ zeros of $F''$, and so on. After $r-1$
steps, there is a point $\xi$ between the smallest and largest $x_j$
with $F^{(r-1)}(\xi)=0$. At this point,
\[
0=F^{(r-1)}(\xi)
=(p')^{(r-1)}(\xi)-Q^{(r-1)}(\xi)
=p^{(r)}(\xi)-w.
\]
Consequently $w=p^{(r)}(\xi)$, and $\xi\in[0,L]$ gives
\[
{|w|\le\norm{p^{(r)}}_{\infty,[0,L]}}.
\]

If some \(x_j\) coincide, perturb them to distinct points in \([0,L]\) and pass to the limit; the bound is preserved because \(w\) is a polynomial in \(x_1,\ldots,x_r\), hence continuous. 
\end{proof}

\end{document}